\def\X{\mathbf{X}}
\newcommand{\Tr}{Tr}
\def\x{\mathbf{x}}
\def\XX{\bm{\Xi}}
\def\A{\mathbf{A}}
\def\S{\bm{\Sigma}}
\def\P{\bm{\Psi}}
\def\R{\mathbb{R}}
\def\L{\mathcal{L}}
\def\L{\mathcal{L}}
\newcommand*{\defeq}{\stackrel{\text{def}}{=}}
\newcommand*{\rom}[1]{\expandafter\@slowromancap\romannumeral #1@}
\newtheorem{definition}{Definition}
\newtheorem{theorem}{Theorem}
\newtheorem{proposition}{Proposition}
\newtheorem{property}{Property}
\newtheorem{lemma}{Lemma}
\newtheorem{corollary}{Corollary}
\newtheorem{remark}{Remark}
\newcommand\blfootnote[1]{%
  \begingroup
  \renewcommand\thefootnote{}\footnote{#1}%
  \addtocounter{footnote}{-1}%
  \endgroup
}
\begin{document}

\date{}

\title{\Large \bf Less is More: Revisiting the Gaussian Mechanism for Differential Privacy}
\author{
{\rm Tianxi Ji}\\
Texas Tech University\\
tiji@ttu.edu
\and
{\rm Pan Li}\\
Case Western Reserve University\\
lipan@case.edu
} 

\maketitle

\begin{abstract}
Differential privacy (DP) via output perturbation has been a \textit{de facto} standard for releasing query or computation results on sensitive data. Different variants of the classic Gaussian mechanism  have been developed to reduce the  magnitude  of the   noise and improve the utility of   sanitized  query results.  However, we  identify that all existing Gaussian mechanisms suffer from  \textbf{the curse of full-rank covariance matrices}, and hence the expected accuracy losses of these mechanisms   equal   the trace of the covariance matrix of the noise. Particularly, for query results in $\R^M$ (or $\R^{M\times N}$ in a matrix form), in order to achieve $(\epsilon, \delta)$-DP,   the expected accuracy loss of the classic Gaussian mechanism, that of the analytic Gaussian mechanism, and that of the  Matrix-Variate Gaussian (MVG) mechanism are lower bounded by $C_C  (\Delta_2 f)^2$, $C_A  (\Delta_2 f)^2$, and $C_M  (\Delta_2 f)^2$, respectively, where $C_C =  \frac{2\ln\left(\frac{1.25}{\delta}\right)}{\epsilon^2} M$, $C_A = \frac{\left(\Phi^{-1}(\delta)\right)^2+ \epsilon}{ \epsilon^2} M$,   $C_M =  \frac{ \frac{5}{4}H_r+\frac{1}{4}H_{r,\frac{1}{2}}}{2\epsilon} MN$,    $\Delta_2 f$ is the $l_2$ sensitivity of the query function $f$, $\Phi(\cdot)^{-1}$ is the quantile function of the standard normal distribution, $H_r$ is the $r$th harmonic number, and $H_{r,\frac{1}{2}}$ is the $r$th harmonic number of order   $\frac{1}{2}$.

To lift this curse, we  design a \textbf{R}ank-1 \textbf{S}ingular \textbf{M}ultivariate \textbf{G}aussian  (R1SMG) mechanism. It achieves $(\epsilon,\delta)$-DP   on   query results in $\mathbb{R}^M$ by perturbing the results with  noise following a singular multivariate Gaussian distribution, whose covariance matrix  is a \textbf{randomly} generated rank-1 positive semi-definite matrix.  In contrast, the classic Gaussian mechanism and its variants all consider \textbf{deterministic} full-rank covariance matrices. Our idea  is motivated by a clue from Dwork et al.'s seminal work on the classic Gaussian mechanism that has been ignored in the literature: when projecting    multivariate  Gaussian noise with a full-rank covariance matrix onto a set of orthonormal basis in $\mathbb{R}^M$, only the coefficient of a single basis can  contribute to the privacy  guarantee.\blfootnote{A preliminary version of this paper appears in the proceedings of USENIX Security Symposium 2024. This is a full version.}

\

This paper makes the following technical contributions.

\textbf{(i)} The R1SMG mechanisms achieves    $(\epsilon,\delta)$-DP guarantee on query results in $\R^M$, while  its expected accuracy loss is lower bounded by  $C_R(\Delta_2f)^2$, where $C_R = \frac{2}{\epsilon \psi}$ and $\psi = \Big(\frac{\delta\Gamma(\frac{M-1}{2})}{\sqrt{\pi}\Gamma(\frac{M}{2})}\Big)^{\frac{2}{M-2}}$. We show that $C_R$ is on a lower order of magnitude by at least $M$ or $MN$ compared with $C_C$, $C_A$, and $C_M$. Therefore, the expected accuracy loss of the R1SMG mechanism is on a much lower order compared with   that of the classic Gaussian mechanism, of the analytic Gaussian mechanism, and of the MVG mechanism.

 \textbf{(ii)}   
 Compared with other mechanisms, the R1SMG mechanism is  more stable and  less likely  to generate noise with  large magnitude that overwhelms the query results, because  the kurtosis and skewness of the nondeterministic accuracy loss introduced by    this mechanism  is larger   than that introduced by other mechanisms.
\end{abstract}


\section{Introduction}
\label{sec: int}

Differential privacy (DP) \cite{dwork2006our,dwork2014algorithmic} has been increasingly recognized as the fundamental building block   for privacy-preserving database query, sharing, and analysis. In this area, one important privacy guarantee is   $(\epsilon,\delta)$-DP. 
It means that, except with a  small probability of $\delta$, 
the presence or absence of any  data record in a dataset cannot change the probability of observing   a specific query result of the dataset   by   more than a multiplicative factor of $e^{\epsilon}$.

The classic Gaussian mechanism,   proposed by Dwork et al. \cite{dwork2006our}, is an essential tool to achieve $(\epsilon,\delta)$-DP. 
 Particularly,  when the output of a  query function  is a high dimensional vector, \textbf{the classic Gaussian mechanism will add independent and identically distributed (i.i.d.) noise to each component of the result} (see Definition \ref{def:def_GM} and        \cite[p. 261]{dwork2014algorithmic},        \cite[p. 3]{dwork2006our}). 
 
This  standard practice 
has spread  widely into many fields. For example, a differentially private mechanism returns noisy answers to    a set of   queries  by adding i.i.d. noise to each query result \cite{nikolov2013geometry}. Privacy-preserving   learning employs differentially private stochastic gradient descent through perturbing each component of the gradient with i.i.d. Gaussian noise \cite{abadi2016deep}. Unfortunately, as we will show later, the  adoption of the classic  Gaussian mechanism greatly hinders the utility (accuracy) of the sanitized   query results  because the expected accuracy loss (Definition \ref{def:utility_cost}) of the classic Gaussian mechanism   is on the order of $M(\Delta_2 f)^2$, where $\Delta_2 f$ is the $l_2$ sensitivity of the query function and $M$ is the dimension of the query results. We refer to this as the \textbf{curse of full-rank covariance matrices}.

\begin{definition} \textbf{Accuracy Loss~\cite{nikolov2013geometry,hardt2010geometry}.}\label{def:utility_cost} The accuracy loss of a differentially private  output perturbation mechanism ($\mathcal{M}$)   is  
\begin{equation}\label{l}
    \L  = ||\mathcal{M}(f(\bm{x}))  - f(\bm{x})||_2^2 = ||\mathbf{n}||_2^2,
\end{equation}
where $\bm{x}$ denotes a dataset, 
$f(\bm{x})$ is the query result on dataset $\bm{x}$, and $\mathbf{n}$ is the noise vector added to the query result. 
\end{definition}

We can see from (\ref{l}) that the accuracy loss is equal to the magnitude of the additive noise. Due to the randomness involved in the     noise, $\L$ is nondeterministic. Thus,   we investigate the   expected value of accuracy loss, i.e., $\mathbb{E}[\L]$.

\subsection{The Curse of Full-Rank Covariance Matrices}\label{sec:curse_intro}
Here, we   provide a high-level description on the identified curse of full-rank covariance matrices, while  the details are deferred to Section~\ref{sec:curse}.

\begin{proposition}\label{prop:curse} \textbf{The Identified Curse.} Let $\bm{x}$ be a dataset, $f(\bm{x})\in \R^M$   the queried results, and $\mathbf{n}\in \R^M$   the perturbation noises  introduced by the classic Gaussian mechanism (or its variants, e.g., the analytic Gaussian mechanism~\cite{balle2018improving}, and the  Matrix-Variate
Gaussian (MVG) mechanism~\cite{Chanyaswad2018}), respectively. Then, the expected accuracy loss is equal to the trace of the covariance matrix of the Gaussian noise $\mathbf{n}$, i.e.,  $\mathbb{E}[\L] = \Tr[Cov(\mathbf{n})]$,
where $Cov(\mathbf{n})$ denotes the covariance matrix of the noise $\mathbf{n}$.
\end{proposition}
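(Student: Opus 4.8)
The plan is to split the claim into two logically independent pieces: the exact identity $\mathbb{E}[\L]=\Tr[Cov(\mathbf{n})]$, which holds for any output-perturbation mechanism whose injected noise is zero-mean, and the asymptotic statement $\Tr[Cov(\mathbf{n})]=\Theta(M)$, which must then be checked separately for each of the three named mechanisms (classic Gaussian, analytic Gaussian, and MVG).

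For the identity I would start from Definition~\ref{def:utility_cost}, writing $\L=\|\mathbf{n}\|_2^2=\mathbf{n}^\top\mathbf{n}$. Since $\mathbf{n}^\top\mathbf{n}$ is a scalar it equals its own trace, so cyclicity of the trace together with linearity of expectation gives $\mathbb{E}[\L]=\mathbb{E}[\Tr(\mathbf{n}\mathbf{n}^\top)]=\Tr(\mathbb{E}[\mathbf{n}\mathbf{n}^\top])$. Using the decomposition $\mathbb{E}[\mathbf{n}\mathbf{n}^\top]=Cov(\mathbf{n})+\mathbb{E}[\mathbf{n}]\,\mathbb{E}[\mathbf{n}]^\top$ and the fact that all three mechanisms inject zero-mean noise, so $\mathbb{E}[\mathbf{n}]=\mathbf{0}$, collapses this to $\mathbb{E}[\L]=\Tr[Cov(\mathbf{n})]$. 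This step is routine and mechanism-independent.

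For the scaling I would treat each mechanism through its covariance. The classic Gaussian mechanism adds i.i.d. noise, so $Cov(\mathbf{n})=\sigma^2 I_M$ with $\sigma^2$ a function of $\Delta_2 f$, $\epsilon$, and $\delta$ only; hence $\Tr[Cov(\mathbf{n})]=M\sigma^2=\Theta(M)$. The analytic Gaussian mechanism shares this isotropic structure with a smaller but still $M$-independent calibrated $\sigma^2$, yielding the same $\Theta(M)$. The MVG mechanism is the delicate case: its vectorized noise has a Kronecker-structured covariance of the form $\S\otimes\P$, whose trace factorizes as $\Tr[\S]\,\Tr[\P]$; since the two factors are tied to the row and column dimensions whose product is $M$, I would show this product is sandwiched between constant multiples of $M$.

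The main obstacle is the lower bound $\Omega(M)$, not the upper bound. The upper bound falls out immediately from the explicit constructions, but to justify calling this a \emph{curse} I must argue that the trace cannot be made sublinear while the covariance stays full-rank and $(\epsilon,\delta)$-DP still holds. The crux is that full rank forces all $M$ eigenvalues of $Cov(\mathbf{n})$ to be strictly positive, while the privacy calibration forces each of them above a constant floor set by $\epsilon$, $\delta$, and $\Delta_2 f$; summing these $M$ floors gives $\Omega(M)$. For MVG this reduces to lower-bounding $\Tr[\S]\,\Tr[\P]$ via the privacy constraint on the smallest singular values of $\S$ and $\P$, which is where the argument demands the most care.
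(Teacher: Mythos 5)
Your proposal is correct, and it reaches the paper's conclusion by a genuinely different and arguably cleaner route on the central identity. You prove $\mathbb{E}[\L]=\Tr[Cov(\mathbf{n})]$ once, mechanism-independently, via $\mathbb{E}[\mathbf{n}^\top\mathbf{n}]=\Tr(\mathbb{E}[\mathbf{n}\mathbf{n}^\top])=\Tr(Cov(\mathbf{n}))$ for zero-mean noise. The paper instead derives the same identity separately for each mechanism: for the classic and analytic Gaussian it writes $\sum_i \mathbf{n}_i^2=\sigma^2 U$ with $U\sim\chi^2(M)$ and uses $\mathbb{E}[U]=M$; for MVG, whose vectorized noise has the non-diagonal covariance $\S\otimes\P$, it invokes the first moment of a quadratic form in Gaussian variables (its Theorem on moments of $Q(\x)=\x^\top\A\x$ with $t=1$, $\A=\mathbf{I}$) to get $\Tr[\S\otimes\P]$. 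Your unified trace argument subsumes both cases in two lines; what the paper's heavier machinery buys is reuse, since the same quadratic-form moment theorem is deployed again later for the kurtosis and skewness analysis. On the asymptotics you are also more careful than the paper: you correctly isolate the $\Omega(M)$ lower bound as the substantive claim behind the word \emph{curse}. For the classic and analytic mechanisms this is immediate, exactly as you say, because the calibrated $\sigma^2$ (resp.\ $\sigma_A^2$) depends only on $\epsilon,\delta,\Delta_2 f$ and not on $M$, so the trace is literally $\sigma^2 M$; the paper simply asserts this. For MVG your plan to lower-bound $\Tr[\S]\,\Tr[\P]$ from the privacy constraint on the singular values of $\S^{-1}$ and $\P^{-1}$ is the right one and goes beyond what the paper writes down (the paper states $\Tr[\S\otimes\P]=\Theta(MN)$ without spelling out the eigenvalue floor); note only that the MVG privacy condition's right-hand side itself depends on $MN$, so the floor is not a dimension-free constant and the minimal-calibration trace is in fact at least $\Omega(MN)$, which only strengthens the conclusion. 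No gap; your version is, if anything, the more rigorous statement of the lower-bound half.
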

Since existing Gaussian mechanisms always introduce    noise that has a full-rank covariance matrix, i.e., $rank(Cov(\mathbf{n}))=M$, where  $Cov(\mathbf{n})$ is  symmetric and positive definite \cite{casella2002statistical}, $\mathbb{E}[\L]$  is   equivalent to the summation of $M$ positive   eigenvalues of $Cov(\mathbf{n})$ and hence generally high.   In   Section~\ref{sec:curse}, we will prove Proposition \ref{prop:curse} when $f(\bm{x})$ is in a vector form and in a matrix form under    different variants of the classic Gaussian mechanism.

\subsection{Lifting the Curse: Main  Contributions}\label{sec:contribution}


To lift the  identified curse, we   resort  to  the  \textbf{singular multivariate Gaussian distribution} \cite{khatri1968some,cramer1999mathematical,rao1973linear,srivastava2009introduction,kwong1996singular} (Definition \ref{singular_pdf}) with a rank-1 covariance matrix and develop  the \textbf{R}ank-1  \textbf{S}ingular \textbf{M}ultivariate \textbf{G}aussian (R1SMG) mechanism. Our idea 
is motivated by an  ignored clue in Dwork and Roth's proof for the DP guarantee of the Gaussian mechanism  \cite{dwork2014algorithmic}.    \uline{Specifically,  Dwork and Roth proved that   by projecting the additive noise $\mathbf{n}\in\mathbb{R}^M$ onto a fixed set of orthonormal basis vectors (e.g., $\bm{b}_1, \bm{b}_2, \cdots, \bm{b}_M$), only the coefficient of a single basis vector (e.g., $\mathbf{n}^T\bm{b}_1$)    contributes to the privacy  guarantee (more details are deferred to Section \ref{sec:evidence}). However, this important clue   has been     ignored  in the literature.} To the best of our knowledge, our work is the first to discover and leverage this ignored clue.

The main contributions  of this paper are summarized as follows.

\noindent\textbf{(1) A new DP Mechanism and Privacy Guarantee (Definition \ref{def:rsmgm} and Theorem 
\ref{thm_multivariate}).} We develop the   R1SMG mechanism that can   achieve $(\epsilon,\delta)$-DP guarantee on high dimensional  query results $f(\bm{x})\in\R^M$ ($M> 2$).  The covariance matrix, denoted as $\bm{\Pi}$, of the noise introduced by the R1SMG mechanism is a random rank-1 symmetric positive semi-definite matrix, whose eigenvector is randomly sampled from the unit sphere embedded in $\R^M$.  Let $\sigma_*$ be  the eigenvalue\footnote{The eigenvalues and singular values are identical for symmetric positive (semi)-definite real-valued matrices. Thus, we use ``eigenvalues'' and ``singular values'' interchangeably for such matrices like $\bm{\Pi}$ in this paper.} of the rank-1 covariance matrix $\bm{\Pi}$. 
We prove that a sufficient condition for the R1SMG mechanism to achieve $(\epsilon,\delta)$-DP is $\sigma_{*} \geq  \frac{2(\Delta_2f)^2}{\epsilon \psi}$, where  $\Delta_2f$ is the $l_2$ sensitivity of the query function $f$, $\psi = \Big(\frac{\delta\Gamma(\frac{M-1}{2})}{\sqrt{\pi}\Gamma(\frac{M}{2})}\Big)^{\frac{2}{M-2}}$,  and   $\Gamma(\cdot)$ is Gamma  function. Note that similar to that in the classic Gaussian mechanism where $0<\epsilon<1$, we have $0<\epsilon<1/M$ in the R1SMG mechanism. In other words, The R1SMG mechanism can work  well under very strict privacy budget.

\noindent\textbf{(2) Expected accuracy loss on a much lower order than that of existing Gaussian mechanisms.} We prove that the expected accuracy loss of $f(\bm{x})$ caused  by the   R1SMG mechanism (i.e., $||{R1SMG}(f(\bm{x}))  - f(\bm{x})||_2^2$) is  lower bounded  by $C_R (\Delta_2f)^2$, where for any fixed feasible $\epsilon$, $C_R = \frac{2}{\epsilon \psi}$ has a decreasing trend as $M$ increases  and converges to $\frac{2}{\epsilon}$ as $M$ goes large (Theorem \ref{theorem:r1smg_asymptotic}).  In contrast, the classic Gaussian, analytic Gaussian, and MVG mechanisms result in  expected accuracy loss that is lower bounded  by  $C_C  (\Delta_2 f)^2$, $C_A  (\Delta_2 f)^2$, and $C_M  (\Delta_2 f)^2$, respectively, where $C_C =  \frac{2\ln\left(\frac{1.25}{\delta}\right)}{\epsilon^2} M$, $C_A = \frac{\left(\Phi^{-1}(\delta)\right)^2+ \epsilon}{ \epsilon^2} M$, and   $C_M =  \frac{ \frac{5}{4}H_r+\frac{1}{4}H_{r,\frac{1}{2}} }{ 2\epsilon } MN$  (for query results in $\R^{M\times N}$, i.e., in a matrix form). Therefore, by using the R1SMG mechanism, \textbf{less is more} in the sense that  noise of a much lower order of magnitude is needed compared with that of existing Gaussian mechanisms.

\noindent\textbf{(3) Higher stability of the perturbed    results (Theorem \ref{kk} and \ref{ss}).} We theoretically    show that the accuracy of the perturbed $f(\bm{x})$ obtained by the   R1SMG mechanism is more stable than the other mechanisms. By ``stable'', we mean that it is less likely for the R1SMG mechanism to generate noise with large magnitude that overwhelms  $f(\bm{x})$. 

\noindent\textbf{(4) Applications.} We perform three case studies, including differentially private 2D histogram query, principal component analysis, and  deep learning, to   validate that   the   R1SMG mechanism can achieve better data utility in various applications by introducing less   noise, i.e., less is more.

Table \ref{table:all-compare} summarizes and compares our proposed R1SMG mechanism with the classic Gaussian mechanism and its variants from the perspective of   expected accuracy loss parameterized by the noise parameters and utility stability.

\begin{table*}[htb]
\centering
\hspace{-30mm}
\resizebox{0.8\textwidth}{!}{\begin{minipage}{\textwidth}
\begin{tabular}{|p{6cm}|p{11.5cm}|p{3cm}|} 
\hline
Mechanisms ($\mathcal{M}$)&  Expected accuracy loss $\mathbb{E}_{\mathcal{M}}[\mathcal{L}]$ & Stability (Section \ref{utility})\\
\hline
The classic Gaussian mechanism \cite{dwork2006our} (noise decided by   variance $\sigma^2$ and dimension $f(\bm{x})\in\R^M$)&   $\mathbb{E}_{classic}[\mathcal{L}] = \Tr[\sigma^2\mathbf{I}_{M\times M}] \geq C_C  (\Delta_2 f)^2$, \newline where $C_C =  \frac{2\ln\left(\frac{1.25}{\delta}\right)}{\epsilon^2} M$.  (Section \ref{sec:curse-vector})&  \textbf{unstable}, i.e., likely to generate noise with large magnitude\\
\hline
The analytic Gaussian mechanism  \cite{balle2018improving} (noise decided by   variance $\sigma_A^2$ and dimension $f(\bm{x})\in\R^M$) &  $\mathbb{E}_{analytic}[\mathcal{L}] = \Tr[\sigma_A^2\mathbf{I}_{M\times M}] \geq C_A  (\Delta_2 f)^2$, \newline  where $C_A = \frac{\left(\Phi^{-1}(\delta)\right)^2+ \epsilon}{ \epsilon^2} M$. (Section \ref{sec:curse-vector})   & \textbf{unstable}\\
\hline
The MVG mechanism \cite{Chanyaswad2018} (noise decided by covariance matrices $\S$, $\P$, and dimension $f(\bm{x})\in\R^{M\times N}$)&  $\mathbb{E}_{MVG}[\mathcal{L}] = \Tr[\S\otimes \P] \geq C_M  (\Delta_2 f)^2$, \newline  where $C_M =  \frac{ \frac{5}{4}H_r+\frac{1}{4}H_{r,\frac{1}{2}} }{ 2\epsilon }  MN$.  (Section \ref{sec:matrix-mvg-curse}) & \textbf{unstable}\\
 \hline \hline
The R1SMG mechanism (noise decided by a random rank-1 singular covariance matrix with only one nonzero eigenvalue $\sigma_*$)& $\mathbb{E}_{R1SMG}[\L] = \Tr[\bm{\Pi}] = \sigma_* \geq C_R(\Delta_2f)^2$, where \newline   $C_R = \frac{2}{\epsilon \psi}$, $\psi = \Big(\frac{\delta\Gamma(\frac{M-1}{2})}{\sqrt{\pi}\Gamma(\frac{M}{2})}\Big)^{\frac{2}{M-2}}$,  and $C_R= \Theta(\frac{\epsilon C_C}{M}) = \Theta(\frac{C_A}{M}) = \Theta(\frac{C_M}{MN})$.  \newline $C_R$ is on a lower order of magnitude by at least $M$ or $MN$ compared with $C_C$, $C_A$, \newline and $C_M$, and converges to  $\frac{2}{\epsilon}$ as $M$ goes large for any fixed feasible $\epsilon$. (Section \ref{sec:discussion_r1smg}) & \textbf{stable}, i.e., unlikely to generate noise
with large magnitude \\
\hline
\end{tabular}
\end{minipage}}
\vspace{-1mm}
\caption{Comparisons between the R1SMG mechanism and the classic Gaussian mechanism and its variants.} 
\label{table:all-compare}
\vspace{-3mm}
\end{table*}

\textbf{Roadmap.} We   review the preliminaries for this study in Section \ref{sec:pre}. Next, we identify  the curse of full-rank covariance matrices in various types  of Gaussian mechanisms
 in Section \ref{sec:curse}. 
 In Section \ref{sec:evidence}, we revisit Dwork and Roth's  proof, 
 and identify a previously ignored clue that motivates our work.  After that, we formally develop the R1SMG mechanism in Section \ref{sec:redfrog}. In Section \ref{utility}, we analyze the  utility stability achieved by our mechanism and theoretically compare it with other mechanisms. 
  Section \ref{sec:exp}  presents the case studies. 
 Section \ref{rw} discusses the  related works. 
Finally, Section \ref{sec:con} concludes the paper.

\section{Preliminaries}
\label{sec:pre}

In this section,   we   review $(\epsilon,\delta)$-DP, and     revisit the definitions and privacy guarantees of  the classic Gaussian mechanism and its variants.  Table \ref{table:notations} lists the frequently used notations.

\begin{table}[htb]
\begin{center}
\resizebox{0.85\textwidth}{!}{\begin{minipage}{\textwidth}
\begin{tabular}{|c |c | } 
\hline
  $\epsilon$ and $\delta$ & privacy budget of the mechanisms  \\ 
  \hline
 $\Delta_2f$ & $l_2$ sensitivity of a query function $f$ \\\hline
 $\sigma^2$ & variance of  noise in the classic Gaussian mechanism\\\hline
  $\sigma_A^2$ & variance of  noise in the  analytic Gaussian mechanism\\\hline
 $\S$ and $\P$ & row-, column-wise covariance matrices in MVG \\\hline
 $\bm{\Pi}$ & singular covariance matrix in   the     R1SMG mechanism\\ \hline
 $\sigma_*$ & the only nonzero eigenvalue of $\bm{\Pi}$\\\hline
 $\mathbb{PSD}$ & set of positive semi-definitive matrices\\\hline
  $\mathbb{PD}$ & set of positive  definitive matrices\\\hline
$\mathcal{U}(\mathbb{V}_{1,M})$ & {\color{black}uniform distribution on Stiefel manifold $\mathbb{V}_{1,M}$}\\ 
&(i.e., the unit sphere $\mathbb{S}^{M-1}$)\\
\hline
$\chi^2(M)$& Chi-squared  distribution with $M$ degrees of freedom\\\hline
 $\L$ &   non-deterministic accuracy loss\\
 \hline
\end{tabular}
\end{minipage}}
\caption{Frequently used notations in the paper.} 
\label{table:notations}
\end{center}
\vspace{-4mm}
\end{table}

Following the convention of \cite{dwork2006our,dwork2014algorithmic},  a database $\bm{x}$ is represented  by its histogram in a universe $\mathcal{X}$:  $\bm{x}\in\mathbb{N}^{|\mathcal{X}|}$, where each entry $\bm{x}_i$ is the number of elements in the database $\bm{x}$ of $type\ i\in\mathcal{X}$  \cite[p. 17]{dwork2014algorithmic}. $\bm{x}\sim\bm{x}'$ denotes a pair of neighboring  databases that differ by   one data record, and $\Delta_2f$ is the $l_2$ sensitivity of a query function $f$, i.e., $\Delta_2f = \sup_{\bm{x}\sim\bm{x}'}||f(\bm{x})-f(\bm{x}')||_2$.

\begin{definition} \textbf{$(\epsilon, \delta)$-DP \cite{dwork2014algorithmic}.}
\label{def_dp}
A randomized mechanism $\mathcal{M}$ 
satisfies $(\epsilon,\delta)$-DP if for any two neighboring datasets, $\bm{x},\bm{x}'\in\mathbb{N}^{|\mathcal{X}|}$, 
$\epsilon>0$ and $0< \delta <1$, the following holds
\begin{equation*}
\Pr[\mathcal{M}(D_1)\in S] \leq e^{\epsilon}\Pr[\mathcal{M}(D_2)\in S]+\delta.
\end{equation*}
\end{definition}
Dwork et al.  \cite{dwork2016concentrated}  also give an alternative definition of $(\epsilon, \delta)$-DP, i.e., for an outcome $s$,   $\ln\left(\frac{\Pr[\mathcal{M}(\bm{x})=s]}{\Pr[\mathcal{M}(\bm{x}')=s]}\right)\leq \epsilon$ holds with all but $\delta$ probability. The log-ratio of the probability densities is called the privacy loss random variable~\cite{balle2018improving,dwork2016concentrated} (Definition \ref{def_plrm}). 

\begin{definition} \textbf{The classic Gaussian Mechanism.} 
Let $f:\mathbb{N}^{|\mathcal{X}|}\rightarrow \mathbb{R}^M$ be an arbitrary $M$-dimensional function. 
The Gaussian mechanism with parameter $\sigma$ adds random Gaussian noise following $\mathcal{N}(0,\sigma^2)$ to \underline{each of the $M$ components of the output}.
\label{def:def_GM}
\end{definition}

\begin{theorem}\textbf{DP Guarantee of the  classic Gaussian mechanism \cite[p. 261]{dwork2014algorithmic}}
Let $\epsilon\in(0,1)$ be arbitrary. If $\sigma\geq  {c\Delta_2f}/{\epsilon}$, where $c^2 > 2\ln(\frac{1.25}{\delta})$, then the classic Gaussian Mechanism achieves ($\epsilon,\delta$)-DP.
\label{Gaussian_privacyguarantee}
\end{theorem}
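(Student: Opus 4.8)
The plan is to collapse the $M$-dimensional analysis to a one-dimensional tail bound on the privacy loss random variable, which is exactly the structural fact later highlighted as the ignored ``clue'' in Section~\ref{sec:evidence}. First I would fix the worst-case pair of neighbors, i.e., assume $\|f(\bm{x})-f(\bm{x}')\|_2=\Delta_2f$, and set $\bm{v}=f(\bm{x})-f(\bm{x}')$. Writing the output as $\bm{y}=f(\bm{x})+\mathbf{n}$ with $\mathbf{n}\sim\mathcal{N}(\bm{0},\sigma^2 I_M)$, the privacy loss is the log-ratio of the two output densities:
\begin{equation*}
\ln\frac{p_{\mathcal{M}(\bm{x})}(\bm{y})}{p_{\mathcal{M}(\bm{x}')}(\bm{y})}=\frac{\|\bm{y}-f(\bm{x}')\|_2^2-\|\bm{y}-f(\bm{x})\|_2^2}{2\sigma^2}=\frac{2\langle\mathbf{n},\bm{v}\rangle+\|\bm{v}\|_2^2}{2\sigma^2}.
\end{equation*}
The key observation is that this depends on $\mathbf{n}$ only through the scalar $\langle\mathbf{n},\bm{v}\rangle$, i.e., only through the single component of $\mathbf{n}$ along $\bm{v}$; rotating to an orthonormal basis whose first vector is $\bm{v}/\|\bm{v}\|_2$ reduces the $M$-dimensional problem to one dimension, foreshadowing the clue the paper builds upon.

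Second, since $\langle\mathbf{n},\bm{v}\rangle\sim\mathcal{N}(0,\sigma^2\|\bm{v}\|_2^2)$, the privacy loss is Gaussian with mean $\|\bm{v}\|_2^2/(2\sigma^2)$ and variance $\|\bm{v}\|_2^2/\sigma^2$. Using the equivalent formulation of $(\epsilon,\delta)$-DP noted after Definition~\ref{def_dp}, it suffices to show that the privacy loss exceeds $\epsilon$ with probability at most $\delta$. Introducing the standard normal variable $Z=\langle\mathbf{n},\bm{v}\rangle/(\sigma\|\bm{v}\|_2)$ and substituting $\|\bm{v}\|_2=\Delta_2f$ together with $\sigma=c\Delta_2f/\epsilon$, the event $\{\text{privacy loss}>\epsilon\}$ becomes $\{Z>c-\epsilon/(2c)\}$. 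Thus the goal is reduced to verifying $\Pr[Z>c-\tfrac{\epsilon}{2c}]\le\delta$.

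Third, I would control this tail via the standard Gaussian bound $\Pr[Z>t]\le\frac{1}{t\sqrt{2\pi}}e^{-t^2/2}$ for $t>0$, applied at $t=c-\epsilon/(2c)$. The hard part, and the only genuinely delicate step, is the ensuing estimate: one expands $t^2=c^2-\epsilon+\epsilon^2/(4c^2)$, discards the nonnegative term $\epsilon^2/(4c^2)$, and then uses $\epsilon\in(0,1)$ together with the hypothesis $c^2>2\ln(1.25/\delta)$ to absorb the polynomial prefactor $1/(t\sqrt{2\pi})$ and the cross term $-\epsilon$ into the exponential. This is precisely where the specific constant $1.25$ and the restriction $\epsilon<1$ are consumed: they are exactly what is needed so that $e^{-t^2/2}$, together with the prefactor, is dominated by $\delta$. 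Finally, because the log-ratio is antisymmetric under swapping $\bm{x}$ and $\bm{x}'$, the same bound covers both neighbor orderings, and integrating over the at-most-$\delta$-probability region where the loss is large yields $\Pr[\mathcal{M}(\bm{x})\in\mathcal{S}]\le e^{\epsilon}\Pr[\mathcal{M}(\bm{x}')\in\mathcal{S}]+\delta$ for every measurable $\mathcal{S}$, completing the argument.
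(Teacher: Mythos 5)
Your proposal is correct and follows essentially the same route as the paper, which itself restates Dwork and Roth's argument in Section~\ref{sec:evidence}: project the noise onto a basis aligned with $\bm{v}=f(\bm{x})-f(\bm{x}')$ so that the privacy loss random variable depends on a single Gaussian coordinate, then finish with the one-dimensional Gaussian tail bound where $\epsilon<1$ and $c^2>2\ln(1.25/\delta)$ absorb the prefactor and cross term. The reduction you highlight is precisely the ``ignored clue'' the paper builds on, and the remaining constant-chasing you defer to is the same scalar-case computation the paper defers to pages 262--264 of \cite{dwork2014algorithmic}.
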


The analytic Gaussian mechanism  \cite{balle2018improving} follows   the same procedure of the classic Gaussian mechanism (discussed in Definition \ref{def:def_GM}).  It improves the classic Gaussian mechanism by directly calibrating the variance ($\sigma_A^2$) of the Gaussian noise via   solving (\ref{eq:analytic-equation}) using  binary search. The privacy guarantee of this mechanism is stated as follows.

\begin{theorem} \textbf{DP Guarantee of the analytic Gaussian mechanism \cite{balle2018improving}.}\label{def_AG}
    The analytic Gaussian mechanism  achieves $(\epsilon,\delta)$-DP guarantee on the query result $f(\bm{x})$ if 
\begin{equation}\label{eq:analytic-equation}
    \Phi\Big(\frac{\Delta_2f}{2\sigma_A}-\frac{\epsilon \sigma_A}{\Delta_2f}\Big)-e^\epsilon \Phi\Big(-\frac{\Delta_2f}{2\sigma_A}-\frac{\epsilon \sigma_A}{\Delta_2f}\Big)\leq \delta,
\end{equation}
where $\Phi(t) = \frac{1}{\sqrt{2\pi}}\int_{-\infty}^te^{-y^2/2}dy$ is the cumulative distribution function   of the standard univariate Gaussian distribution.
\end{theorem}

The MVG mechanism \cite{Chanyaswad2018} 
 is an extension of the Gaussian mechanism and focuses on  matrix-valued queries.

\begin{definition} \textbf{The MVG mechanism   \cite{Chanyaswad2018}.}
Given a matrix-valued query function $f(\bm{x})\in\mathbb{R}^{M\times N}$,  the MVG mechanism   is defined as $\mathcal{M}_{MVG}(f(\bm{x}))  = f(\bm{x})+\mathbf{N}$, 
where $\mathbf{N}\sim\mathcal{N}_{M,N}(\mathbf{0},\S,\P)$ represents the matrix-variate Gaussian distribution 
with   PDF 
\begin{equation}\label{eq:matrix_gaussian_distribution}
f({\X}) = \frac{\exp\Big(-\frac{1}{2}\Tr[\P^{-1}(\X-\mathbf{M})^T\S^{-1}(\X-\mathbf{M})]\Big)}{(2\pi)^{MN/2}{\rm{det}}(\P)^{M/2}{\rm{det}}(\S)^{N/2}}.
\end{equation}
$\mathbf{M}\in\mathcal{R}^{M\times N}$ is the mean. $\S\in\mathbb{PD}^{M\times M}$ and $\P\in\mathbb{PD}^{N\times N}$ are the row-wise and column-wise full-rank  covariance matrix,  which are also symmetric.
\label{def_MVG}
\end{definition}

\begin{theorem}\textbf{DP Guarantee of  MVG    \cite{Chanyaswad2018}.}   
$\sigma(\S^{-1})$ 
and $\sigma(\P^{-1})$ 
are the vectors consisting  of singular values of $\S^{-1}$ and $\P^{-1}$.   The  MVG mechanism   achieves  $(\epsilon,\delta)$-DP if $||\sigma(\S^{-1})||_2||\sigma(\P^{-1})||_2\leq  {\big(-\beta+\sqrt{\beta^2+8\alpha\epsilon}\big)^2}\big/{4\alpha^2}$, where  $\alpha  = 2H_r\gamma \Delta_2f+(H_r+H_{r,\frac{1}{2}})\gamma^2$, $\beta = 2(MN)^{1/4}H_r\Delta_2f\tau(\delta)$, $H_r$ is the $r$th harmonic number, $H_{r,\frac{1}{2}}$ is the $r$th harmonic number of order   $\frac{1}{2}$, $\gamma = \sup_{\bm{x}}||f(\bm{x})||_F$, and $\tau(\delta) = 2\sqrt{-MN\ln \delta}-2\ln \delta+MN$.
\label{mvg_privacyguarantee}
\end{theorem}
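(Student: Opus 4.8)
The plan is to analyze the privacy loss random variable directly from the matrix-variate Gaussian density in \eqref{eq:matrix_gaussian_distribution}, following the template used for the scalar Gaussian mechanism but tracking the row- and column-covariances $\S,\P$ throughout. Fix neighboring datasets $\bm{x}\sim\bm{x}'$, write $\mathbf{D}=f(\bm{x})-f(\bm{x}')$ for the query difference (so $\|\mathbf{D}\|_F\le\Delta_2f$), and let $\mathbf{N}\sim\mathcal{N}_{M,N}(\mathbf{0},\S,\P)$ be the additive noise. Because the normalizing constant of \eqref{eq:matrix_gaussian_distribution} does not depend on the mean, it cancels in the log-ratio, and after expanding the trace I expect the privacy loss to split as
\[
\ln\frac{\Pr[\mathcal{M}_{MVG}(f(\bm{x}))=\X]}{\Pr[\mathcal{M}_{MVG}(f(\bm{x}'))=\X]}
=\Tr[\P^{-1}\mathbf{D}^T\S^{-1}\mathbf{N}]+\tfrac12\Tr[\P^{-1}\mathbf{D}^T\S^{-1}\mathbf{D}],
\]
a deterministic quadratic term plus a term that is linear, hence Gaussian, in the noise $\mathbf{N}$.

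Next I would bound the two pieces separately in terms of the singular values of $\S^{-1}$ and $\P^{-1}$. For the deterministic term, applying von Neumann / Cauchy--Schwarz-type trace inequalities together with $\|\mathbf{D}\|_F\le\Delta_2f$ and $\gamma=\sup_{\bm{x}}\|f(\bm{x})\|_F$ should produce a bound proportional to $s:=\|\sigma(\S^{-1})\|_2\,\|\sigma(\P^{-1})\|_2$; the harmonic numbers $H_r,H_{r,\frac12}$ appearing in $\alpha$ arise here from summing the ordered singular-value contributions (a rearrangement step over the $r=\min(M,N)$ nonzero singular values). For the stochastic term, since it is a zero-mean Gaussian whose variance is again controlled by $s$ (via $\mathrm{vec}(\mathbf{N})\sim\mathcal{N}(\mathbf{0},\P\otimes\S)$), a standard sub-Gaussian tail bound forces $|\Tr[\P^{-1}\mathbf{D}^T\S^{-1}\mathbf{N}]|$ to exceed its threshold with probability at most $\delta$; this is where the factor $\tau(\delta)=2\sqrt{-MN\ln\delta}-2\ln\delta+MN$ enters $\beta$, capturing the $\delta$-confidence level and the $MN$ effective dimensions of $\mathbf{N}$.

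Finally I would assemble these into the $(\epsilon,\delta)$-DP condition: requiring the deterministic part plus the high-probability bound on the stochastic part to be at most $\epsilon$ yields an inequality of the form $\alpha s+\beta\sqrt{s}\le 2\epsilon$, which is quadratic in $\sqrt{s}$. Solving for the nonnegative root gives $\sqrt{s}\le\frac{-\beta+\sqrt{\beta^2+8\alpha\epsilon}}{2\alpha}$, which squares to the stated threshold $s\le\frac{(-\beta+\sqrt{\beta^2+8\alpha\epsilon})^2}{4\alpha^2}$.

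I expect the main obstacle to be the second step: unlike the classic Gaussian case, the matrix trace manipulations do not reduce to a single scalar variance, so bounding both $\Tr[\P^{-1}\mathbf{D}^T\S^{-1}\mathbf{D}]$ and the variance of $\Tr[\P^{-1}\mathbf{D}^T\S^{-1}\mathbf{N}]$ tightly in terms of $\|\sigma(\S^{-1})\|_2\|\sigma(\P^{-1})\|_2$ requires the right sequence of singular-value inequalities, and recovering exactly the harmonic-number coefficients rather than looser constants is the delicate part. A secondary subtlety is checking that the Gaussian tail bound chosen for the stochastic term produces precisely the $\tau(\delta)$ expression rather than a merely comparable one.
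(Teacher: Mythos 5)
First, a framing point: the paper does not prove this theorem itself --- it is quoted from Chanyaswad et al.\ \cite{Chanyaswad2018} as a preliminary, and the paper's only engagement with its proof is Appendix \ref{app:mvg-improve}, which dissects one intermediate step of that proof. Comparing your sketch against that step exposes two concrete gaps.

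The first is that your decomposition of the privacy loss, $\Tr[\P^{-1}\mathbf{D}^T\S^{-1}\mathbf{N}]+\tfrac12\Tr[\P^{-1}\mathbf{D}^T\S^{-1}\mathbf{D}]$ with $\mathbf{D}=f(\bm{x})-f(\bm{x}')$, is algebraically valid but cannot produce the stated constants. Every quantity in it is controlled by $\|\mathbf{D}\|_F\le\Delta_2f$ and the noise, so $\gamma=\sup_{\bm{x}}\|f(\bm{x})\|_F$ has no way to enter; yet $\alpha=2H_r\gamma\Delta_2f+(H_r+H_{r,\frac{1}{2}})\gamma^2$ is built around $\gamma$. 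The actual proof in \cite{Chanyaswad2018} expands the loss in terms of the \emph{output} $\mathbf{Y}$ and the individual values $f(D_1),f(D_2)$ (the quantity $\clubsuit$ displayed in Appendix \ref{app:mvg-improve}), and bounds $\big|\Tr\big[\P^{-1}f(D_1)^T\S^{-1}f(D_1)\big]\big|$ and the $\mathbf{Y}$-cross-terms by $\gamma^2$ and $\gamma\Delta_2f$ respectively --- exactly the loose step the paper proposes to remove. So if you carry your decomposition through, you will land on a different sufficient condition (one depending only on $\Delta_2f$ and the tail bound), and you would still owe an argument that it implies the stated threshold; you cannot ``recover exactly the harmonic-number coefficients'' of $\alpha$ from your starting point. (The harmonic numbers themselves come from the ordering bound $\sigma_i\le\|\sigma\|_2/\sqrt{i}$ fed into von Neumann's trace inequality, not from a rearrangement step, but that is secondary.)

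The second gap is the stochastic term. The quantity $\tau(\delta)=2\sqrt{-MN\ln\delta}-2\ln\delta+MN$ is the Laurent--Massart upper tail of a $\chi^2(MN)$ variable at level $\delta$: the MVG proof whitens the noise, conditions on the event that the squared Frobenius norm of the standardized $MN$-dimensional Gaussian is at most $\tau(\delta)$, and then bounds the entire privacy loss deterministically on that event. Your plan --- treat $\Tr[\P^{-1}\mathbf{D}^T\S^{-1}\mathbf{N}]$ as a one-dimensional Gaussian and apply a sub-Gaussian tail --- would give a threshold of order $\sqrt{\ln(1/\delta)}$ with no additive $MN$ term, so it cannot reproduce $\tau(\delta)$ or the $(MN)^{1/4}$ factor in $\beta$; the failure event itself is structurally different, not just a looser constant. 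Your final assembly step (the quadratic inequality $\alpha s+\beta\sqrt{s}\le2\epsilon$ in $\sqrt{s}$ and its positive root, with $s=\|\sigma(\S^{-1})\|_2\|\sigma(\P^{-1})\|_2$) is indeed how the stated threshold arises, but the two bounds feeding into it must come from the $\mathbf{Y}$-based expansion and the chi-squared concentration rather than from the route you describe.
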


In our study, we also identify an approach to improve the MVG mechanism. Please see   Appendix \ref{app:mvg-improve} for   details.

\section{Identifying The Curse}\label{sec:curse}

We identify the curse of full-rank covariance matrices by showing that the expected value of the accuracy loss  (Definition \ref{def:utility_cost}) introduced by the classic Gaussian mechanism  and its variants all are on the order of $M(\Delta_2 f)^2$ for query results in $\R^M$  and $MN(\Delta_2 f)^2$ for query results in $\R^{M\times N}$.

\subsection{Query Result  being a Vector}\label{sec:curse-vector}

When $f(\bm{x})\in\R^M$, the classic Gaussian mechanism    perturbs each element of $f(\bm{x})$ using i.i.d. Gaussian noise drawn from $\mathcal{N}(0,\sigma^2)$ (cf. Definition \ref{def:def_GM}). Thus, we   derive $\mathbb{E}[\L]$ under the framework of the classic Gaussian mechanism as 
\begin{small}
\begin{equation*}
\begin{aligned}
   & \mathbb{E}_{\rm{classic}}[\L] =  \mathbb{E}_{\mathbf{n}_{i}\sim \mathcal{N}(0,\sigma^2)}  \Big[{\sum}_{i=1}^{M}\mathbf{n}_{i}^2\Big]= \mathbb{E}_{z_{i}\sim \mathcal{N}(0,1)}  \Big[\sigma^2{\sum}_{i=1}^{M} z_{i}^2\Big]\\
&   \stackrel{(a)} =  \sigma^2\mathbb{E}_{U\sim \chi^2(M)}[U] 
  \stackrel{(b)}   = \sigma^2 M  =  \Tr[\sigma^2\mathbf{I}_{M\times M}] = \Tr[Cov(\mathbf{n})],  
    \end{aligned}
    \label{eq:gaussian_df}
\end{equation*} 
\end{small}\ignorespacesafterend
where $(a)$ and $(b)$ hold because  the squared sum of $M$ independent standard Gaussian variables follows the chi-squared distribution with $M$ degrees of freedom, i.e., $\chi^2(M)$, whose  expected value  is $M$.

Based on Theorem \ref{Gaussian_privacyguarantee}, since the classic Gaussian mechanism has $\sigma \geq c\Delta_2 f/\epsilon$, we get $\mathbb{E}_{classic}[\L] \geq C_C (\Delta_2f)^2$, where $C_C =  \frac{2\ln (1.25/\delta)}{\epsilon^2}M$.

The same   analysis also applies to the analytic Gaussian mechanism (discussed in Section \ref{sec:pre}), because it still perturbs each element of $f(\bm{x})$ using i.i.d. Gaussian noise   with variance $\sigma_A^2$. Thus, we can have $\mathbb{E}_{Analytic}[\L] = \Tr[\sigma^2_A\mathbf{I}_{M\times M}] = \sigma^2_AM$.

Since there is no closed-form solution of $\sigma_A$  (\cite{balle2018improving} solves for $\sigma_A$ using a binary search scheme), we derive the lower bound of $\mathbb{E}_{analytic}[\L]$ by investigating the sufficient condition for (\ref{eq:analytic-equation}). In what follows $A\Leftarrow B$ means $B$ is the sufficient condition for $A$, and $A\Leftrightarrow B$ means $A$ and $B$ are equivalent. Then, we have
\begin{small}
\begin{equation*}
    \begin{aligned}
        (\ref{eq:analytic-equation})& \stackrel{\mathrm{suff.\ cond.}}\Leftarrow \Phi\Big(\frac{\Delta_2f}{2\sigma_A}-\frac{\epsilon \sigma_A}{\Delta_2f}\Big)\leq \delta
        \Leftrightarrow \frac{\Delta_2f}{2\sigma_A}-\frac{\epsilon \sigma_A}{\Delta_2f} \leq \Phi^{-1}(\delta)\\
        & \Leftrightarrow 2\epsilon\sigma_A^2 + 2\Delta_2f\Phi^{-1}(\delta) \sigma_A - (\Delta_2f)^2\geq 0\\
        & \Leftrightarrow \resizebox{0.75\linewidth}{!}{$\sigma_A\geq \frac{-2\Delta_2f\Phi^{-1}(\delta)+\sqrt{4(\Delta_2f)^2(\Phi^{-1}(\delta))^2+8\epsilon (\Delta_2f)^2}}{4\epsilon}$} \\
        & \resizebox{0.9\linewidth}{!}{$\stackrel{\mathrm{suff.\ cond.}}\Leftarrow \sigma_A^2 \geq \frac{\left(-2\Delta_2f\Phi^{-1}(\delta)+\sqrt{4(\Delta_2f)^2(\Phi^{-1}(\delta))^2+8\epsilon (\Delta_2f)^2}\right)^2}{16\epsilon^2}$} \\
        & \resizebox{0.9\linewidth}{!}{$\stackrel{\mathrm{suff.\ cond.}}\Leftarrow \sigma_A^2 \geq \frac{2\left(4(\Delta_2f)^2(\Phi^{-1}(\delta))^2+4(\Delta_2f)^2(\Phi^{-1}(\delta))^2+8\epsilon (\Delta_2f)^2\right)}{16\epsilon^2}$}\\
       & \Leftrightarrow  \sigma_A^2 \geq \frac{\left(\Phi^{-1}(\delta)\right)^2+\epsilon}{\epsilon^2} (\Delta_2f)^2.
    \end{aligned}
\end{equation*}
\end{small}\ignorespacesafterend
Therefore, we can get $\mathbb{E}_{analytic}[\L]    \geq  C_A (\Delta_2f)^2$, where $C_A = \frac{\left(\Phi^{-1}(\delta)\right)^2+ \epsilon}{ \epsilon^2} M$.

\subsection{Query Result  being  a Matrix}\label{sec:matrix-mvg-curse}
When $f(\bm{x})\in\R^{M\times N}$, instead of adding i.i.d. Gaussian noise,  the MVG mechanism by  Chanyaswad et al.    \cite{Chanyaswad2018}   
perturbs the  matrix-valued result by adding a matrix noise  
attributed to the  matrix-variate Gaussian distribution  shown in (\ref{eq:matrix_gaussian_distribution}).

It is well-known that the vectorization of the matrix-variate Gaussian noise can equivalently   be sampled from    a multivariate Gaussian distribution with a new full-rank   covariance matrix, i.e., $\bm{\Xi}=\S\otimes\P\in\mathbb{PD}^{MN\times MN}$, which is also symmetric \cite{gupta2018matrix} ($\otimes$ is the Kronecker product).  In general,  $\bm{\Xi}$ has nonzero off-diagonal entries, which means the elements in the additive noise are not mutually independent. Thus, we need another tool to analyze its expected accuracy loss. 

First, we observe that  $\L =||\mathbf{n}||_2^2= \mathbf{n}^T\mathbf{n}$ is a quadratic form in random variable  defined as follows.
\begin{definition}\textbf{Quadratic Form in Random Variable \cite{provost1992quadratic}.} 
Denote by $\x$ a random vector with mean $\mathbf{\mu}$ and covariance matrix $\bm{\Xi}$. The quadratic form in random variable $\x$ associated with a symmetric matrix $\A$ is defined as $Q(\x) = \x^T\A\x$.
\end{definition}

Then, $\L$ introduced by  the MVG mechanism  is a quadratic form in multivariate Gaussian random variable attributed to $\mathcal{N}(\mathbf{0},\bm{\Xi})$ with $\A$ being the identity matrix.  In the following lemma, we recall the $t$-th moment of quadratic form in Gaussian random variable (which will also be used in our future analyses on the kurtosis and skewness of $\L$ in Section \ref{utility}).

\begin{lemma}\cite[p. 53]{provost1992quadratic} The $t$-th moment of a quadratic form in multivariate Gaussian random variable, i.e, $\x\sim\mathcal{N}(\mathbf{\mu},\bm{\Xi})$ with  $\bm{\Xi}$ being full-rank, is  given by 
\begin{small}
    \begin{equation*}
        \begin{aligned}
            \mathbb{E}[Q(\x)^t] = \Bigg\{&\sum_{t_1=0}^{t-1}{t-1\choose t_1}g^{(t-1-t_1)}\times \sum_{t_2=0}^{t_1-1}{t_1-1\choose t_2}g^{(t_1-1-t_2)}\\
        &\times\ldots\times\sum_{t_2=0}^{1}{1-1\choose t_2}g^{(0)}\Bigg\},\quad \text{where}
        \end{aligned}
    \end{equation*}
\end{small}\ignorespacesafterend
$g^{(k)} = 2^kk!(\Tr(\A\bm{\Xi})^{k+1}+(k+1)\mathbf{\mu}^T(\A\bm{\Xi})^k\A\mathbf{\mu})$,   $k\in[0,t-1]$.
\label{m}
\end{lemma}
As a result, the instantiation of Proposition  \ref{prop:curse} under the MVG mechanism  is 
\begin{small}
     \begin{equation}
\begin{aligned}
    \mathbb{E}_{MVG}[\L]  = \quad \mathbb{E}_{\mathclap{\substack{\\\\\ \mathbf{N}\sim \mathcal{N}_{M,N}(\mathbf{0},\S,\P)}}}[||\mathbf{N}||_F^2] = \quad \mathbb{E}_{\mathclap{\substack{\\\\\ \mathbf{n}\sim \mathcal{N}(\mathbf{0},\S\otimes\P)}}}[||\mathbf{n}||_2^2] 
    \stackrel{(a)} = \Tr[\bm{\Xi}]=  \Tr[\S\otimes \P] , 
\end{aligned}
\label{eq:mvg_df}
\end{equation}
\end{small}\ignorespacesafterend
where (a) is  obtained by applying Lemma \ref{m} with $t=1$, i.e., $\mathbb{E}[Q(\x)] = g^{(0)} = \Tr[\bm{\Xi}]$.  Furthermore, we can obtain  that
\begin{small}
     \begin{equation}
\begin{aligned}
    &\mathbb{E}_{\rm{MVG}}[\L]   \stackrel{(a)} =  \Tr[\S\otimes \P]  \stackrel{(b)}= \Tr[\S]\Tr[\P] \\ 
     \stackrel{(c)}= &|| \sigma(\S)  ||_1  ||\sigma(\P)||_1
\stackrel{(d)}\geq  || \sigma(\S)  ||_2 ||\sigma(\P)||_2\\
\stackrel{(e)}\geq &  \frac{MN}{||\sigma(\S^{-1})||_2||\sigma(\P^{-1})||_2}  
\stackrel{(f)} \geq   MN \frac{ 4\alpha^2  }{(-\beta+\sqrt{\beta^2+8\alpha\epsilon})^2}\\
\geq  &   \frac{MN 4\alpha^2  }{2\beta^2+8\alpha\epsilon - 2\beta\sqrt{\beta^2}} = MN\frac{\alpha}{2\epsilon} \\
  \stackrel{(g)}= & MN\frac{2H_r\gamma \Delta_2f+(H_r+H_{r,\frac{1}{2}})\gamma^2}{2\epsilon} 
 \stackrel{(h)}\geq   \frac{\frac{5}{4}H_r+\frac{1}{4}H_{r,\frac{1}{2}}}{2\epsilon} MN(\Delta_2f)^2,
\end{aligned}
\label{eq:mvg_order}
\end{equation}
\end{small}\ignorespacesafterend
where (a) follows   (\ref{eq:mvg_df}), (b) 
is due to the property of the Kronecker product \cite{schacke2004kronecker,roger1994topics}, (c) is because  $\S, \P\in\mathbb{PD}$ and they are both symmetric matrices, which means all their eigenvalues are positive, (d) is because $||\bm{y}||_1\geq||\bm{y}||_2$ for any vector $\bm{y}$, (e) is obtained by applying the harmonic mean-geometric mean inequality (see Appendix \ref{app:HM-GM}),   (f) is due to  the privacy guarantee of  the  MVG mechanism  in Theorem \ref{mvg_privacyguarantee}, and $\alpha$ and $\beta$ are   defined therein, (g) is obtained by substituting $\alpha$, and finally (h) is because $\gamma = \sup_{\bm{x}}||f(\bm{x})||$ (Theorem \ref{mvg_privacyguarantee}), and $\Delta_2f = \sup_{\bm{x}\sim\bm{x}'}||f(\bm{x})-f(\bm{x}')|| \leq \sup_{\bm{x}}||f(\bm{x})|| + \sup_{\bm{x}'}||f(\bm{x}')|| = 2 \gamma$, which suggests $\gamma\geq \frac{1}{2}\Delta_2f$. 

As a consequence, we have $\mathbb{E}_{MVG}[\L]    \geq C_M    (\Delta_2f)^2$, where $C_M  =  \frac{\frac{5}{4}H_r+\frac{1}{4}H_{r,\frac{1}{2}}}{2\epsilon} MN.$

In Theorem \ref{thm:summary-all-gaussian},   
we summarize the expected accuracy loss
of the classic Gaussian mechanism and its variants.

\begin{theorem}\label{thm:summary-all-gaussian} The expected accuracy loss of the classic Gaussian, of the analytic Gaussian, and of the MVG mechanisms are as follows:

\noindent\resizebox{1\linewidth}{!}{$\mathbb{E}_{classic}[\L]  = \Tr[\sigma^2\mathbf{I}_{M\times M}]\geq C_C  (\Delta_2f)^2, \quad C_C =  \frac{2\ln\left(\frac{1.25}{\delta}\right)}{\epsilon^2} M,$}

\noindent\resizebox{1\linewidth}{!}{$\mathbb{E}_{analytic}[\L] = \Tr[\sigma^2_A\mathbf{I}_{M\times M}] \geq C_A   (\Delta_2f)^2,\quad C_A = \frac{\left(\Phi^{-1}(\delta)\right)^2+ \epsilon}{ \epsilon^2} M,$}

\noindent\resizebox{0.98\linewidth}{!}{$\mathbb{E}_{MVG}[\L]   = \Tr[\S\otimes \P]  \geq C_M    (\Delta_2f)^2, \quad C_M =   \frac{\frac{5}{4}H_r+\frac{1}{4}H_{r,\frac{1}{2}}}{2\epsilon}  MN.$}

\end{theorem}

Theorem \ref{thm:summary-all-gaussian} rigorously shows that all existing suffer from the identified curse of full-rank covariance matrices. Another stealthy perspective to understand the identified curse of full-rank covariance matrices is by observing that the PDFs of     (\ref{eq:matrix_gaussian_distribution}) and 
the multivariate Gaussian distribution all involve determinants of covariance matrices in the denominators (which should be nonzero). Then, theses covariance matrices should always be full-rank. As a consequence, the curse cannot be removed unless we use a new form of PDF to generate the perturbation noise.

\section{An Ignored Clue from Dwork et al. \cite{dwork2006our,dwork2014algorithmic,dwork2006calibrating}}\label{sec:evidence}

In this section, we recall Dwork and Roth's proof of Gaussian mechanism achieving $(\epsilon,\delta)$-DP \cite{dwork2006our,dwork2014algorithmic,dwork2006calibrating}  when  the query function returns  a $M$-dimension vector.    At the end of their proof, we identify an ignored clue which 
corroborates  that    Gaussian noise with rank-1 covariance matrix is sufficient to achieve $(\epsilon,\delta)$-DP and also motivates our work.

Dwork and Roth essentially investigate the upper bound of the \textbf{privacy loss random variable} (PLRV) 
 on a pair of neighboring database $\bm{x}$ and $\bm{x}'$ defined as follows.

\begin{definition}\label{def_plrm}\cite[p. 6]{dwork2016concentrated}
    Consider running a randomized mechanism $\mathcal{M}$ on a pair of neighboring dataset $\bm{x}$ and $\bm{x}'$. For an outcome $\mathbf{s}$, PLRV on $\mathbf{s}$ is defined as the log-ratio of the probability density when $\mathcal{M}$ is running on each dataset, i.e.,  $\mathrm{PLRV}^{(\mathbf{s})}_{(\mathcal{M}(\bm{x})||\mathcal{M}(\bm{x}'))} = \ln\left(\frac{\Pr[\mathcal{M}(\bm{x}) = \mathbf{s}]}{\Pr[\mathcal{M}(\bm{x}') = \mathbf{s}]}\right)$.
\end{definition}

The following derivations are restatements of content from  \cite[p. 261-265]{dwork2014algorithmic}. 
$f(\cdot)$ is a query function, i.e., $f:\bm{x}\in\mathbb{N}^{|\mathcal{X}|}\rightarrow \mathbb{R}^M$. We are interested in multivariate Gaussian noise that can obscure the difference $\bm{v} \triangleq f(\bm{x})-f(\bm{x}')$. To achieve $(\epsilon,\delta)$-DP, it requires that the $\mathrm{PLRV}$ associated with the classic Gaussian mechanism (denoted as $\mathcal{G}$), i.e., 
\begin{small}
\begin{equation*}\label{eq:plrv}
\mathrm{PLRV}^{(\mathbf{s})}_{(\mathcal{G}(\bm{x})||\mathcal{G}(\bm{x}'))} =   \left|\frac{1}{2\sigma^2}\left(||\bm{n}||^2-||\bm{n}+\bm{v}||^2\right)\right|,\ \   (\text{\cite[p. 265]{dwork2014algorithmic}})
\end{equation*}
\end{small}\ignorespacesafterend
 is upper bounded by $\epsilon$ with all but $\delta$ probability. In $\mathrm{PLRV}^{(\mathbf{s})}_{(\mathcal{G}(\bm{x})||\mathcal{G}(\bm{x}'))}$, $\bm{n}$ is chosen from $\mathcal{N}(0,\bm{\Sigma})$, where  $\bm{\Sigma}$ is a diagonal matrix with entries $\sigma^2$. 

Then, Dwork and Roth     bound $\mathrm{PLRV}^{(\mathbf{s})}_{(\mathcal{G}(\bm{x})||\mathcal{G}(\bm{x}'))}$ by observing that the multivariate Gaussian distribution is   a     spherically symmetric distribution \cite{ali1980characterization}. Thus, when representing the noise  $\mathbf{n}$ using any fixed orthonormal basis $\bm{b}_1, \bm{b}_2, \cdots, \bm{b}_m$, i.e, $\mathbf{n} = \sum_{i=1}^M \lambda_i \bm{b}_i$, the corresponding coefficients are also attributed to the same   Gaussian distribution, i.e., $\lambda_i\sim\mathcal{N}(0,\sigma^2),i\in[1,M]$. Furthermore, without loss of generality, Dwork and Roth assume the first component (base) $\bm{b}_1$ is parallel to $\bm{v}$ (the difference between $f(\bm{x})$ and $f(\bm{x}')$). Consequently, we have 
\begin{small}
\begin{equation}\label{eq:plrv2}
\begin{aligned}
&\mathrm{PLRV}^{(\mathbf{s})}_{(\mathcal{G}(\bm{x})||\mathcal{G}(\bm{x}'))} =  \left|\frac{1}{2\sigma^2}\left(\left|\left|\sum_{i=1}^M \lambda_i \bm{b}_i\right|\right|^2-\left|\left|\sum_{i=1}^M \lambda_i \bm{b}_i+\bm{v}\right|\right|^2\right)\right| \\& = \left|\frac{1}{2\sigma^2}\left(||\bm{v}||^2+2\lambda_1 ||\bm{v}||\right)\right| \leq \frac{1}{2\sigma^2} \Big((\Delta_2f)^2 + 2\lambda_1\Delta_2f\Big), 
\end{aligned}
\end{equation}
\end{small}\ignorespacesafterend
where the second equality holds because $\bm{b}_1$ and $\bm{v}$ are orthogonal to $\bm{b}_i, i\in[2,M]$, and the last inequality is due to the definition of $l_2$ sensitivity. 
Since now $\mathrm{PLRV}^{(\mathbf{s})}_{(\mathcal{G}(\bm{x})||\mathcal{G}(\bm{x}'))}$ in (\ref{eq:plrv2}) only involves a single Gaussian random variable, i.e., $\lambda_1$, Theorem \ref{Gaussian_privacyguarantee} can be proved by following the same procedure when the query function returns a scalar value (\cite[p. 262-264]{dwork2014algorithmic}).

\textbf{A hidden Clue comes up to the surface.}  From (\ref{eq:plrv2}), it is clear that $\mathrm{PLRV}^{(\mathbf{s})}_{(\mathcal{G}(\bm{x})||\mathcal{G}(\bm{x}'))}$ 
is only related to $\lambda_1$ and $\Delta_2f$. Since $\lambda_1$ (a Gaussian random variable) is the coefficient of the of orthonormal base $\bm{b}_1$, it clearly suggests that, after decomposing a multivariate Gaussian noise using a set of  orthonormal basis vectors, only the coefficient of one  vector   contributes to the value of $\mathrm{PLRV}^{(\mathbf{s})}_{(\mathcal{G}(\bm{x})||\mathcal{G}(\bm{x}'))}$. In other words, the coefficients of other basis vectors have no impact on the privacy loss  of the Gaussian mechanism. Consequently, the privacy guarantee achieved by $\mathbf{n} = \sum_{i=1}^M \lambda_i \bm{b}_i$ is identical to that achieved by $\lambda_1 \bm{b}_1$. This hidden clue   
motivates our idea of using multivariate Gaussian noise whose covariance matrix has rank-1  (instead of multivariate Gaussian noise with full-rank covariance matrix) to achieve $(\epsilon,\delta)$-DP.

\section{Lifting the Curse}\label{sec:redfrog}

In what follows, we  present the R1SMG mechanism,
which lifts the identified curse of full-rank covariance
matrices. In particular, we first intuitively explain the feasibility of the proposed R1SMG mechanism. Next, we introduce its noise generation process,  and    present  a sufficient condition for it to achieve  $(\epsilon,\delta)$-DP. After that, we analyze the  expected accuracy loss of the R1SMG mechanism, and discuss the privacy leakage of utilizing   vectors in the null space of the noise.

\subsection{The Intuition behind R1SMG}\label{sec:explain-property}

As introduced in Section \ref{sec:contribution}, the expected accuracy loss of R1SMG is  lower bounded by  $C_R(\Delta_2 f)^2$ where for any fixed feasible $\epsilon$, $C_R = \frac{2}{\epsilon \psi}$ has a decreasing trend as $M$ increases, and converges to $\frac{2}{\epsilon}$ as $M$ goes large (Theorem \ref{theorem:r1smg_asymptotic}).  Thus, we have the following property (see Property \ref{claim:magic_claim}) that is   missing in all existing DP mechanisms. Here, we provide an intuitive explanation of it.

\begin{property}
\label{claim:magic_claim}
 For any fixed feasible  $\epsilon$, $\delta$, and $\Delta_2f$, the magnitude of the noise required  to attain $(\epsilon,\delta)$-DP on $f(\bm{x})\in\R^M$   can have a non-increasing trend regarding $M$.  
\end{property}

Due to the widely accepted common practice of perturbing each component of $f(\bm{x})$ using i.i.d. Gaussian noise to achieve DP, it makes sense that   larger dimensional    $f(\bm{x})$ requires larger magnitude of noises. Thus, Property \ref{claim:magic_claim} is counterintuitive and 
seems to be   ``magic''. Yet, it can be intuitively explained as follows. Given a database $\bm{x}$ represented as histogram, we consider a  normalized  counting query function $f(\bm{x}) = \frac{1}{M}Q\bm{x} \in \R^M$, where $Q$ is a binary matrix.   A higher dimension of $f(\bm{x})$ means a larger number of queries applied to the same dataset $\bm{x}$. Subsequently, the chances that the query results (i.e., rows in $Q$) become dependent with each other increases as the query number increases. For instance, suppose that the $i$th and $j$th query, i.e., $f(\bm{x})_i$ and $f(\bm{x})_j$ are dependent, particularly, $f(\bm{x})_j$ can be fully determined by $f(\bm{x})_i$. Then,   the privacy leakage on $\bm{x}$ by observing $f(\bm{x})_i$,  $f(\bm{x})_j$, or $[f(\bm{x})_i\ f(\bm{x})_j]$ would be identical. Hence, intuitively, no more noise is needed to perturb  $[f(\bm{x})_i\ f(\bm{x})_j]$ than that for perturbing $f(\bm{x})_i$. It implies that as $M$ increases and the queried results become dependent, it is possible to achieve DP by perturbing $f(\bm{x})$ using noise of a non-increasing magnitude. Moreover, when $M$ becomes sufficiently large such that the query vectors (rows in $Q$) are linearly dependent, by observing $f(\bm{x})$ or the set of independent query vectors, the privacy leakage of $\bm{x}$ would be identical. Therefore, the same amount of noise  can be used to sanitize both (refer to \cite{sun2019relationship} for an analysis from the perspective of mutual information).

\subsection{R1SMG: Multivariate Gaussian Noise  with A Random Rank-1 Covariance Matrix}\label{sec:rsmgm}
In what follows, we first provide the statistical background on the singular multivariate Gaussian distribution with a given rank-$r$ covariance matrix, e.g., $\bm{\Pi}$ and $Rank(\bm{\Pi})=r$. 
\begin{definition}\textbf{Singular Multivariate Gaussian Distribution \cite{khatri1968some,cramer1999mathematical,rao1973linear,srivastava2009introduction}.} 
A  $M$-dimensional   random variable $\mathbf{x} = [x_1,x_2,\cdots,x_{M}]^T\in\R^M$ has a singular multivariate Gaussian distribution with mean $\mathbf{\mu}\in\mathbb{R}^{M}$ and a singular covariance matrix $\bm{\Pi}\in\mathbb{PSD}^{M\times M}$ with rank-$r$, i.e., $\mathbf{x}\sim\mathcal{N}(\mathbf{\mu},\bm{\Pi})$ and $Rank(\bm{\Pi})=r<M$, if its  PDF is 
\begin{equation*}
f_{\mathbf{x};Rank(\bm{\Pi})=r}=\frac{(2\pi)^{-r/2}    \exp\big(-\frac{1}{2}  (\mathbf{x}-\mathbf{\mu})^T \bm{\Pi}^{\dagger} (\mathbf{x}-\mathbf{\mu})\big)       }{\sqrt{\sigma_1(\bm{\Pi}) \cdots\sigma_r(\bm{\Pi})}},
\label{singular_pdf}
\end{equation*}
where $\sigma_i(\bm{\Pi})$ is the $i$-th nonzero eigenvalue of $\bm{\Pi}$, and $\bm{\Pi}^{\dagger}$ is the Moore-Penrose generalized inverse of $\bm{\Pi}$.  Particularly, we have  $\bm{\Pi}^{\dagger} = \mathbf{V}_r\bm{\Lambda}_r^{-1}\mathbf{V}_r^T$, where $\bm{\Lambda}_r = \mathrm{diag}(\sigma_1(\bm{\Pi}), \ldots,\sigma_r(\bm{\Pi}))\in\R^{r\times r}$, and $\mathbf{V}_r\in\R^{M\times r}$ is the matrix of eigenvectors corresponding to the $r$ nonzero  eigenvalues.
\label{def:singular_pdf}
\end{definition}

Samples of  $f_{\mathbf{x};Rank(\bm{\Pi})=r}$ (e.g., $\mathbf{n}$) can be generated by applying the disintegration theorem \cite{fremlin2000measure}. In particular, if $\mathbf{\mu}=\mathbf{0}$, by defining a linear mapping 
\begin{equation}
\mathbf{n} = \mathbf{V}_r\bm{\Lambda}_r^{1/2}\mathbf{z},\quad \mathrm{and}\quad \mathbf{z}\sim\mathcal{N}(\mathbf{0},\mathbf{I}_{r\times r}),
\label{mapping_singular}
\end{equation}
where $\bm{\Lambda}_r^{1/2} = \text{diag}(\sqrt{\sigma_1(\bm{\Pi})},\sqrt{\sigma_2(\bm{\Pi})},\ldots,\sqrt{\sigma_r(\bm{\Pi})})\in\R^{r\times r}$ and $\mathbf{V}_r\in\R^{M\times r}$ is the matrix of eigenvectors corresponding to the $r$ nonzero  eigenvalues of $\bm{\Pi}$, 
then $\mathbf{n}$ is attributed to a singular multivariate Gaussian distribution with a rank-$r$ covariance matrix.  
This can easily be verified by checking   the covariance matrix of $\mathbf{n}$; 
\begin{small}
 \begin{equation}\label{eq:emp-covariance}
 \begin{aligned}
     Cov(\mathbf{n}) = & Cov(\mathbf{V}_r\bm{\Lambda}_r^{1/2}\mathbf{z})
 = \mathbf{V}_r\bm{\Lambda}_r^{1/2} Cov(\mathbf{z})(\mathbf{V}_r\bm{\Lambda}_r^{1/2})^T\\
\stackrel{(*)}= & \mathbf{V}_r\bm{\Lambda}_r^{1/2} \mathbf{I}_{r\times r}(\mathbf{V}_r\bm{\Lambda}_r^{1/2})^T = \mathbf{V}_r \bm{\Lambda}_r \mathbf{V}_r^T  = \bm{\Pi}
 \end{aligned},
 \end{equation}
 \end{small}\ignorespacesafterend
where $(*)$ is due to  $\mathbf{z}\sim\mathcal{N}(\mathbf{0},\mathbf{I}_{r\times r})$. Then, clearly, we have $Cov(\mathbf{n})=r$.

When $r=1$, the singular covariance matrix $\bm{\Pi}$ only has 1 nonzero  eigenvalue, and we denote it as $\sigma_{*}$. Thus, (\ref{mapping_singular}) becomes
\begin{equation}
\mathbf{n} = \mathbf{v}\sqrt{\sigma_*}z,\quad \mathrm{and}\quad z\sim\mathcal{N}(0,1),\quad \mathbf{v}^T\mathbf{v} = 1.
\label{mapping_singular_rank1}
\end{equation}

\textbf{Singular multivariate Gaussian noise  with a random rank-1 covariance matrix.} It is noteworthy that in (\ref{mapping_singular_rank1}), $\mathbf{v}$ needs to be generated randomly to thwart  attacks which takes advantage  of  vectors in the null space of $\mathbf{v}$ (or $\mathbf{n}$). In Section \ref{sec:null_space_attack}, we will show that by sampling $\mathbf{v}$ uniformly at random, such attacks will succeed with \textbf{zero probability}. 

Since $\mathbf{v}$ is an   orthonormal vector of dimension $M\times 1$, one common approach to generate $\mathbf{v}$ randomly is by uniformly sampling from a specific  Stiefel manifold, i.e., $\mathbb{V}_{1,M} = \{\mathbf{v}\in\R^{M\times 1}:\mathbf{v}^T\mathbf{v} = 1\}$, which represents  the unite sphere $\mathbb{S}^{M-1}$  embedded in $\R^{M}$. In Statistics literature, the PDF of the \emph{uniform distribution on the Stiefel manifold $\mathbb{V}_{1,M}$}  is given by 
\begin{equation}\label{eq:stiefel_uniform}
    f(\mathbf{v}) = \frac{1}{\mathrm{Vol}(\mathbb{V}_{1,M})}, \quad \forall \mathbf{v} \in \mathbb{V}_{1,M},
\end{equation}
(see  \cite[p. 280]{gupta2018matrix},    \cite[p. 17, equation 8.2.2]{camano2006statistics}, as well   \cite[p. 30]{chikuse2012statistics}  which  gives the characteristic  function of  (\ref{eq:stiefel_uniform})).  In particular, the constant  $\mathrm{Vol}(\mathbb{V}_{1,M}) = \frac{2 \pi^{M /2}}{\Gamma(M/2)}$ (see  \cite[p. 19 and 26]{gupta2018matrix}) is the total surface area or volume  of $\mathbb{V}_{1,M}$ and $\Gamma(\cdot)$ is the ordinary gamma function    \cite{gupta2018matrix}\footnote{{\color{black}This  result can be easily obtained by setting $p=1$ in Theorem 1.4.9 on page 25 of  \cite{gupta2018matrix}. In particular,  $\mathrm{Vol}(\mathbb{V}_{1,M})$ plays the same role on $\mathbb{V}_{1,M}$ as the Lebesgue measure plays in   Euclidean space \cite{camano2006statistics}. For example, when $M=3$, $\mathrm{Vol}(\mathbb{V}_{1,3}) = 4\pi$ is the surface area of an unit sphere in 3D space, and when $M=2$, $\mathrm{Vol}(\mathbb{V}_{1,2}) = 2\pi$ is the circumference  of an unit 
circle  in 2D space.}}.

We use $\mathbf{v}\sim\mathcal{U}(\mathbb{V}_{1,M})$ to represent that $\mathbf{v}$ is a random variable uniformly sampled from $\mathbb{V}_{1,M}$. As a consequence, the linear mapping defined in (\ref{mapping_singular_rank1}) becomes
\begin{equation}
\begin{aligned}
\mathbf{n} = \mathbf{v}\sqrt{\sigma_*}z,\quad 
\text{where}\quad  z\sim\mathcal{N}(0,1),\quad \mathbf{v}\sim\mathcal{U}(\mathbb{V}_{1,M}).
\end{aligned}
\label{mapping2}
\end{equation}

Now, we introduce the    R$1$SMG  mechanism and provide a sufficient condition for it to achieve differential privacy.
\begin{definition}\textbf{The Rank-1 Singular Multivariate Gaussian (R1SMG) Mechanism.} For an arbitrary $M$-dimensional query function,  $f(\bm{x})\in\R^{M}$,   the R1SMG mechanism is defined as 
    \begin{equation*}
    \mathcal{M}_{R1SMG}\big(f(\bm{x})\big)  = f(\bm{x})+ \mathbf{n},
\end{equation*}
where   $\mathbf{n}$ (generated via (\ref{mapping2})) is the noise attributed to a singular multivariate Gaussian distribution with a random rank-$1$ covariance matrix.
\label{def:rsmgm}
\end{definition}
If the query result is a matrix or tensor, we can first generate the noise as a vector  and then resize it into the desired format.  Please refer to the case studies    in Section \ref{sec:exp} for details.

We introduce an important lemma below that will be used for proving Theorem \ref{thm_multivariate}.

\begin{lemma}\textbf{Distribution  of Angle between Random Points on Unit Sphere (\cite[p. 1845 and 1860]{cai2013distributions}).} Let $h$ and $g$ be two randomly selected  points on   unit sphere $\mathbb{S}^{P-1}$ (embedded in $\R^P$), where $P>2$. Let $\mathbf{h}$ (resp. $\mathbf{g}$) be the vector connecting the center of the sphere and $h$ (resp. $g$). $\theta$ denotes the angle ($\theta\in[0, \pi]$) between $\mathbf{h}$ and $\mathbf{g}$. Then, we have 
\begin{equation}
    \Pr\left[\left|\theta-\frac{\pi}{2}\right|\geq \theta_0\right] \leq \sqrt{\pi}\frac{\Gamma(\frac{P}{2})}{\Gamma(\frac{P-1}{2})} cos(\theta_0)^{P-2},
\end{equation}
where $\theta_0$ is a given radian and $0\leq \theta_0\leq\frac{\pi}{2}$. 
\label{lemma:angle}
\end{lemma}

\begin{theorem}\label{thm_multivariate}
The R1SMG mechanism achieves $(\epsilon,\delta)$-DP when $M>2$,  if $\sigma_{*} \geq  \frac{2(\Delta_2f)^2}{\epsilon \psi}$ 
where   $\psi = \Big(\frac{\delta\Gamma(\frac{M-1}{2})}{\sqrt{\pi}\Gamma(\frac{M}{2})}\Big)^{\frac{2}{M-2}}$, 
and   $\Gamma(\cdot)$ is the Gamma  function.
\end{theorem}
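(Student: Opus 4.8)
The plan is to follow the same high-level route as Dwork and Roth's proof recalled in Section \ref{sec:evidence}: reduce $(\epsilon,\delta)$-DP to a tail bound on the privacy loss random variable. By the alternative characterization of $(\epsilon,\delta)$-DP stated right after Definition \ref{def_dp}, it suffices to show that for every pair of neighbors $\bm{x}\sim\bm{x}'$, the noise $\mathbf{n}$ drawn by R1SMG satisfies $\Pr[\mathrm{PLRV}_{(\mathcal{M}_{R1SMG}(\bm{x})\|\mathcal{M}_{R1SMG}(\bm{x}'))}>\epsilon]\le\delta$. Write $\bm{v}\triangleq f(\bm{x})-f(\bm{x}')$, so $\|\bm{v}\|_2\le\Delta_2f$, and recall that R1SMG generates $\mathbf{n}=\mathbf{v}\sqrt{\sigma_*}z$ with $z\sim\mathcal{N}(0,1)$ and $\mathbf{v}\sim\mathcal{U}(\mathbb{V}_{1,M})$, so its (conditional) covariance is the random rank-$1$ matrix $\bm{\Pi}=\sigma_*\mathbf{v}\mathbf{v}^T$.

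The second step is to compute the PLRV explicitly from the singular Gaussian density of Definition \ref{def:singular_pdf}. For $\bm{\Pi}=\sigma_*\mathbf{v}\mathbf{v}^T$ the generalized inverse is $\bm{\Pi}^{\dagger}=\sigma_*^{-1}\mathbf{v}\mathbf{v}^T$, so the quadratic form in the exponent only sees the projection of a point onto the noise direction $\mathbf{v}$ — exactly the incarnation of the ignored clue from Section \ref{sec:evidence} that only the coefficient along a single direction matters. Substituting $\mathbf{s}-f(\bm{x})=\mathbf{n}$ and $\mathbf{s}-f(\bm{x}')=\mathbf{n}+\bm{v}$ and using $\mathbf{v}^T\mathbf{v}=1$, the log-density ratio collapses (the $\sqrt{2\pi\sigma_*}$ normalizer cancels because both densities share $\sigma_*$) to
\[
\mathrm{PLRV}=\frac{2\sqrt{\sigma_*}\,z\,(\bm{v}^T\mathbf{v})+(\bm{v}^T\mathbf{v})^2}{2\sigma_*},
\]
which is structurally identical to the scalar Gaussian-mechanism loss in (\ref{eq:plrv2}), except that the effective sensitivity is now the \emph{random} projection $\bm{v}^T\mathbf{v}=\|\bm{v}\|\cos\theta$, where $\theta$ is the angle between the randomly chosen noise direction $\mathbf{v}$ and the fixed direction $\bm{v}$.

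The third step is to exploit the randomness of $\theta$. Since $\mathbf{v}\sim\mathcal{U}(\mathbb{V}_{1,M})$, the cosine $\cos\theta=\mathbf{v}^T\bm{v}/\|\bm{v}\|$ is distributed as a single coordinate of a uniform unit vector, with density proportional to $(1-t^2)^{(M-3)/2}$ on $[-1,1]$; its normalizing constant is the Beta integral $B(\tfrac12,\tfrac{M-1}{2})=\sqrt{\pi}\,\Gamma(\tfrac{M-1}{2})/\Gamma(\tfrac{M}{2})$, which is precisely the source of the Gamma-function factor in $\psi$. For large $M$ this law concentrates sharply at $\cos\theta=0$, so with high probability the effective sensitivity $\|\bm{v}\|\cos\theta$ is far below $\Delta_2f$, and this is what ultimately lets $\sigma_*$ shrink as $M$ grows. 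I would then bound $\Pr[\mathrm{PLRV}>\epsilon]$ by the measure of the ``bad'' set of directions (those making $|\cos\theta|$ large), a spherical cap whose measure is an incomplete-Beta integral, and choose $\sigma_*$ so that, off this cap, the displayed PLRV stays below $\epsilon$. Equating the cap measure to $\delta$ and solving yields the closed-form threshold, i.e. the stated $\psi$ and the sufficient condition $\sigma_*\ge 2(\Delta_2f)^2/(\epsilon\psi)$; along the way I would verify $\psi\in(0,1)$ and $\psi\to1$ as $M\to\infty$, recovering the advertised limit $\sigma_*\to 2(\Delta_2f)^2/\epsilon$.

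The main obstacle is this last joint tail analysis: the PLRV mixes the Gaussian scalar $z$ with the angle $\theta$, and isolating a single data-independent event whose probability evaluates to $\delta$ in closed form requires carefully decoupling the two randomness sources and bounding the spherical-cap integral so that the ratio $\Gamma(\tfrac{M-1}{2})/\Gamma(\tfrac{M}{2})$ emerges \emph{exactly} rather than up to constants. A secondary subtlety worth checking is that this projection-based, per-direction PLRV is the legitimate object to bound: because each invocation samples a fresh $\mathbf{v}$, the ambient output law of R1SMG is a data-independent mixture over directions, and one must confirm that the per-direction guarantee composes into an $(\epsilon,\delta)$ guarantee for the mixture (the mixing law being identical for $\bm{x}$ and $\bm{x}'$).
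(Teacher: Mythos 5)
Your proposal diverges from the paper's proof at the decisive step, and the divergence is a genuine gap rather than an alternative route. You couple the two neighboring runs by giving them the \emph{same} direction $\mathbf{v}$ and then take the ratio of the two singular densities along that direction. But conditioned on a common $\mathbf{v}$, the two output laws are supported on two parallel lines, $f(\bm{x})+\mathrm{span}(\mathbf{v})$ and $f(\bm{x}')+\mathrm{span}(\mathbf{v})$, which are disjoint whenever $f(\bm{x})-f(\bm{x}')\notin \mathrm{span}(\mathbf{v})$. The expression you obtain by plugging $\mathbf{s}-f(\bm{x}')=\mathbf{n}+\bm{v}$ into the formula of Definition \ref{def:singular_pdf} is the density of the \emph{projection} of $\mathbf{s}-f(\bm{x}')$ onto $\mathrm{span}(\mathbf{v})$, not the density of $\mathcal{M}(\bm{x}')$ at $\mathbf{s}$ (which is zero, since $\mathbf{s}$ is off the second line); so your displayed PLRV is not a Radon--Nikodym derivative of the two measures being compared, and the per-direction pair is in fact mutually singular, hence not $(\epsilon,\delta)$-indistinguishable for any $\delta<1$. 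The mixture argument you defer to at the end therefore cannot close the proof: joint convexity requires the \emph{conditional} pairs to satisfy the guarantee, and here they do not. A symptom that the route is off target: in your formulation the effective sensitivity is $\|\bm{v}\|\,|\cos\theta|$ with $\cos\theta$ concentrating at $0$, which would drive the required $\sigma_*$ to $0$ as $M\to\infty$, contradicting the theorem's limit $2(\Delta_2f)^2/\epsilon$. You would also still need a separate Gaussian tail event for $z$, since your PLRV is unbounded in $z$ for any fixed nonzero $\cos\theta$, and splitting $\delta$ between that event and the spherical cap would not reproduce the stated $\psi$.

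The paper's proof instead uses \emph{two independent} directions $\mathbf{h},\mathbf{g}\sim\mathcal{U}(\mathbb{V}_{1,M})$ for the two runs and compares the densities at a common output $\mathbf{s}$, so that $\mathbf{s}-f(\bm{x})\parallel\mathbf{h}$ and $\mathbf{s}-f(\bm{x}')\parallel\mathbf{g}$. The PLRV reduces to $\frac{1}{2\sigma_*}(\rho_1^2-\rho_2^2)\le\frac{1}{2\sigma_*}(|\rho_1|+|\rho_2|)^2$ with $|\rho_1|=\|\mathbf{s}-f(\bm{x}')\|_2$, $|\rho_2|=\|\mathbf{s}-f(\bm{x})\|_2$, and these two triangle sides are bounded by the circumdiameter $\|f(\bm{x})-f(\bm{x}')\|_2/\sin\theta$ via the law of sines, where $\theta$ is now the angle between $\mathbf{h}$ and $\mathbf{g}$. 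Because that angle concentrates at $\pi/2$ (Lemma \ref{lemma:angle}), $\sin\theta\ge\cos\theta_0=\sqrt{\psi}$ off an event of probability at most $\delta$, giving $(|\rho_1|+|\rho_2|)^2\le 4(\Delta_2f)^2/\psi$ with no dependence on $z_1,z_2$ at all --- the entire failure probability is spent on the angle, and no Gaussian tail bound is needed. To repair your argument you would need to abandon the shared-direction conditioning and either work with the marginal (mixture) densities directly or adopt the paper's two-direction geometric coupling.
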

\begin{proof} Inspired by Dwork and Roth's work~\cite{dwork2014algorithmic} (cf. Section \ref{sec:evidence}), we  investigate the PLRV associated with the   R1SMG mechanism. 
Assume that $\mathbf{n}$ and $\mathbf{n}'$ are the random noise used to perturb $f(\bm{x})$ and $f(\bm{x}')$, respectively.  Then, the R1SMG mechanism achieves $(\epsilon,\delta)$-DP if 
$\mathrm{PLRV}^{(\mathbf{s})}_{(R1SMG(\bm{x})||R1SMG(\bm{x}'))}=\ln\left(\frac{\Pr[f(\bm{x})+\mathbf{n} = \mathbf{s}\in\mathcal{S}]}{\Pr[f(\bm{x}')+\mathbf{n}'= \mathbf{s}\in\mathcal{S}]}\right)\leq \epsilon$  with all but $\delta$ probability (called the failing probability)~\cite{dwork2014algorithmic}, where $\mathcal{S}$ denotes all possible outcome of R1SMG. 

According to (\ref{mapping2}), without loss of generality, let 
$\mathbf{n} = \sqrt{\sigma_*} z_1\mathbf{h} $ and $\mathbf{n}' =  \sqrt{\sigma_*} z_2 \mathbf{g}$,
where   $z_1, z_2\sim\mathcal{N}(0,1)$ and $\mathbf{h},\mathbf{g}\sim\mathcal{U}(\mathbb{V}_{1,M})$. Let $\theta\in[0,\pi]$ be the angle between $\mathbf{h}$ and $\mathbf{g}$. Then, we can establish the failing probability (which should be at most $\delta$) by first constructing an event which is a subspace of the   universe $\Omega$, i.e., 
\begin{equation*}\label{eq:the_set}
\begin{aligned}
&\resizebox{1\linewidth}{!}{$\mathcal{A} = \left\{z_1, z_2, \mathbf{h}, \mathbf{g}: z_1, z_2\sim \mathcal{N}(0,1),  \mathbf{h}, \mathbf{g}\sim \mathcal{U}(\mathbb{V}_{1,M}),  \left|\theta-\frac{\pi}{2}\right| < \theta_0\right\}$}\\
&\subset  \Omega = \{z_1, z_2, \mathbf{h}, \mathbf{g}: z_1, z_2\sim \mathcal{N}(0,1),  \mathbf{h}, \mathbf{g}\sim \mathcal{U}(\mathbb{V}_{1,M})\}
\end{aligned},
\end{equation*}
where $\theta_0 = arccos\big(\left({\delta\Gamma(\frac{M-1}{2})}\big/{\sqrt{\pi}\Gamma(\frac{M}{2})}\right)^{1/(M-2)}\big)$, and $arccos(\cdot)$ is   the inverse of the cosine function. 
We denote the complementary   of $\mathcal{A}$ as $\mathcal{A}^c = \Omega\setminus\mathcal{A}$, i.e., 
\begin{equation*}\label{eq:the_set}
    \resizebox{1\linewidth}{!}{$\mathcal{A}^c = \left\{z_1, z_2, \mathbf{h}, \mathbf{g}: z_1, z_2\sim \mathcal{N}(0,1),  \mathbf{h}, \mathbf{g}\sim \mathcal{U}(\mathbb{V}_{1,M}),  \left|\theta-\frac{\pi}{2}\right| \geq  \theta_0\right\}.$}
\end{equation*}
Since $z_1$ and $z_2$ are independent of $\mathbf{h}$ and $\mathbf{g}$, by applying   Lemma \ref{lemma:angle} and setting $P=M$ and $\sqrt{\pi}\frac{\Gamma(\frac{P}{2})}{\Gamma(\frac{P-1}{2})} cos(\theta_0)^{P-2}=\delta$, we get  $\Pr[\mathcal{A}^c] \leq  \delta$.

Since $\mathbf{n}$ (resp. $\mathbf{n}'$) is singular multivariate Gaussian with zero mean by design (see (\ref{mapping2})), we have that  $f(\bm{x})+\mathbf{n} = \mathbf{s}\in \mathcal{S}$  (resp. $f(\bm{x}')+\mathbf{n}' = \mathbf{s}\in \mathcal{S}$) is attributed to singular multivariate  Gaussian distribution, with mean $f(\bm{x})$ (resp. $f(\bm{x}')$) and covariance matrix $\mathbf{h} \sigma_* \mathbf{h}^T$ (resp. $\mathbf{g} \sigma_*  \mathbf{g}^T$) (one can verify this by setting $r=1$ in (\ref{eq:emp-covariance})). Substituting the mean and generalized inverse covariance matrix into the PDF provided in Definition \ref{def:singular_pdf} (with $r=1$),  we have a counterpart of (\ref{eq:plrv2}) as
\begin{equation}
\begin{aligned}\label{eq:needs-to-show2}
&\mathrm{PLRV}^{(\mathbf{s})}_{(R1SMG(\bm{x})||R1SMG(\bm{x}'))} = \ln\left(\frac{\Pr[f(\bm{x})+\mathbf{n} = \mathbf{s}\in\mathcal{S}]}{\Pr[f(\bm{x}')+\mathbf{n}'= \mathbf{s}\in\mathcal{S}]}\right)\\
=& \resizebox{0.9\linewidth}{!}{$\ln\left(\frac{{(2\pi)^{-\frac{1}{2}}    \exp\left(-\frac{1}{2}  \big(\mathbf{s}-f(\bm{x})\big)^T \big(\mathbf{h} \sigma_*^{-1} \mathbf{h}^T\big) \big(\mathbf{s}-f(\bm{x})\big)\right)       }\Big/{\sqrt{\sigma_*}}}{{(2\pi)^{-\frac{1}{2}}    \exp\left(-\frac{1}{2}  \big(\mathbf{s}-f(\bm{x}')\big)^T \big(\mathbf{g} \sigma_*^{-1} \mathbf{g}^T\big) \big(\mathbf{s}-f(\bm{x}')\big)\right)       }\big/{\sqrt{\sigma_*}}}\right)$}\\
= &  \ln\left( \frac{\exp\left (-\frac{1}{2\sigma_*}    \Big(  \mathbf{h}^T(\mathbf{s}-f(\bm{x}))   \Big)^2    \right )}{\exp\left (-\frac{1}{2\sigma_*}    \Big( \mathbf{g}^T(\mathbf{s}-f(\bm{x}'))   \Big)^2\right )} \right) \\
=&\frac{1}{2\sigma_*}  \Big(    \Big(  \underbrace{ \mathbf{g}^T (\mathbf{s}-f(\bm{x}')) }_{\rho_1\in \R}  \Big)^2  - \Big( \underbrace{ \mathbf{h}^T (\mathbf{s}-f(\bm{x})) }_{\rho_2\in \R}\Big)^2  \Big) \\
\leq & \frac{1}{2\sigma_*}  \Big(|\rho_1|+|\rho_2|\Big)^2,  \qquad \qquad\forall   (z_1,z_2,\mathbf{h},\mathbf{g})\in   \mathcal{A}.
\end{aligned}
\end{equation}

In the following, we derive a tight upper bound on   $|\rho_1|+|\rho_2|$. 
In particular, with $||\mathbf{g}||_2 = ||\mathbf{h}||_2=1$, we    notice that
\begin{gather*}
\begin{aligned}
 |\rho_1|+|\rho_2| 
 & \leq ||\mathbf{g}||_2||\mathbf{s}-f(\bm{x}')||_2 + ||\mathbf{h}||_2||\mathbf{s}-f(\bm{x})||_2\\
 &  =  ||\mathbf{s}-f(\bm{x}')||_2 +  ||\mathbf{s}-f(\bm{x})||_2.
\end{aligned}
\end{gather*}
Thus, we aim to   bound   $||\mathbf{s}-f(\bm{x}')||_2 +  ||\mathbf{s}-f(\bm{x})||_2$.

First, we observe that the angle between $(\mathbf{s}-f(\bm{x}))$ and  $(\mathbf{s}-f(\bm{x}'))$ is      identical to the angle between $\mathbf{h}$ and $\mathbf{g}$, because    $(\mathbf{s}-f(\bm{x}))=\mathbf{n}$ and $(\mathbf{s}-f(\bm{x}'))=\mathbf{n}'$, and $\mathbf{n}$ and $\mathbf{n}'$ are obtained by   scaling $\mathbf{h}$ and $\mathbf{g}$. 
We use $\theta$ to represent the angle between $\mathbf{h}$ and $\mathbf{g}$ (i.e.,  identically between $(\mathbf{s}-f(\bm{x}))$ and $(\mathbf{s}-f(\bm{x}'))$). Thus, from a geometric perspective, $||\mathbf{s}-f(\bm{x}')||_2 +  ||\mathbf{s}-f(\bm{x})||_2$ is the sum of the length of two edges of a triangle in $\R^{M}$, and the angle between these two edges is $\theta$. Moreover, the length of the third edge (opposite to this specific angle) is $|| (\mathbf{s}-f(\bm{x}')) - (\mathbf{s}-f(\bm{x}))||_2 
  =||f(\bm{x})- f(\bm{x}') ||_2$.
For better understanding, we visualize the above description   in Figure \ref{fig:geometric}(a), where the red point (resp. blue point) indicates $\mathbf{n}$ (resp. $\mathbf{n}'$) in $\R^{M}$, and the red dashed line (resp. blue dashed line) is the randomly sampled $\mathbf{h}$ (resp. $\mathbf{g}$) (note that we show $\mathbf{h}$ and $\mathbf{g}$ as bidirectional, because $\mathbf{n}$ and $\mathbf{n}'$ can point to the opposite direction of $\mathbf{h}$ and $\mathbf{g}$, respectively).

\begin{figure}[htb]
  \begin{center}
     \includegraphics[width=1\columnwidth]{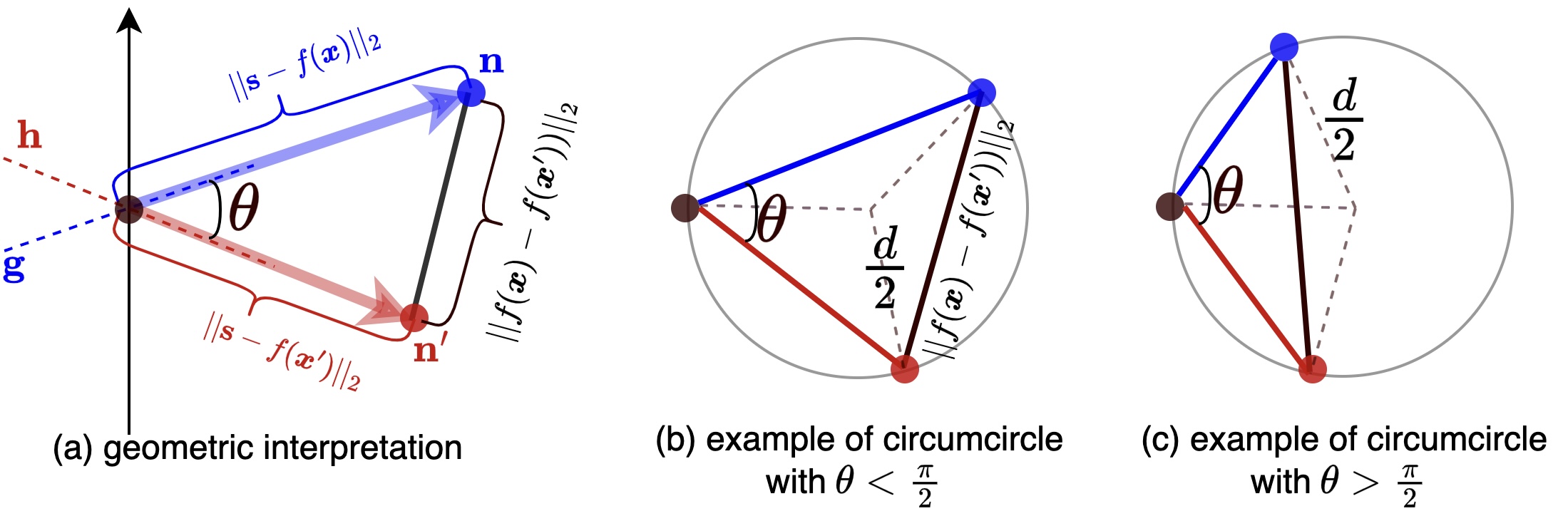}\\
       \end{center}
  \caption{(a) Geometric interpretation of $|\rho_1|+|\rho_2|$.   Circumcircles of the described triangle with (b) $\theta<\frac{\pi}{2}$ and (c) $\theta>\frac{\pi}{2}$.}
  \label{fig:geometric}
\end{figure}

It is well-known that the length of each edge of a triangle is upper bounded by  the diameter  of the circumscribed circle of the triangle. We show this fact in Figure \ref{fig:geometric}(b) and Figure (c), where $\frac{d}{2}$ represents the  radius of the circumcircle.  According to the law of sine, we have $d =   ||f(\bm{x})-f(\bm{x}')  ||_2\big/{sin(\theta)}$. Thus, 
\begin{equation}\label{eq:plrv_r1smg}
\resizebox{0.87\linewidth}{!}{$||\mathbf{s}-f(\bm{x}')||_2 +  ||\mathbf{s}-f(\bm{x})||_2 
\leq   2d = \frac{2||f(\bm{x})-f(\bm{x}')||_2}{sin(\theta)}\leq \frac{2\Delta_2 f}{sin(\theta)},$}
\end{equation}
where the last inequality is because the query sensitivity on neighboring datasets is $\Delta_2f = \sup_{\bm{x}\sim\bm{x}'}||f(\bm{x})-f(\bm{x}')||_2$. 

In (\ref{eq:needs-to-show2}), we have   $(z_1,z_2,\mathbf{h},\mathbf{g})\in    \mathcal{A}$, i.e.,  $\theta\in(\frac{\pi}{2}-\theta_0,\frac{\pi}{2}+\theta_0)$, $\theta_0\in[0,\frac{\pi}{2}]$. Thus, due to the symmetry of the sine function around $\frac{\pi}{2}$, we can obtain  $\frac{2\Delta_2f}{sin(\theta)}\leq \frac{2\Delta_2f}{sin(\frac{\pi}{2}-\theta_0)}=\frac{2\Delta_2f}{cos(\theta_0)}$. As a result, we get  $|\rho_1| +|\rho_2| \leq$ $||\mathbf{s}-f(\bm{x}')||_2 + ||\mathbf{s}-f(\bm{x})||_2 <  \frac{2\Delta_2f}{cos(\theta_0)} = 2\Delta_2f\Big/\Big(\frac{\delta\Gamma(\frac{M-1}{2})}{\sqrt{\pi}\Gamma(\frac{M}{2})}\Big)^{\frac{1}{M-2}}$.

Finally, from (\ref{eq:needs-to-show2}), we can obtain the following sufficient condition for the   R$1$SMG mechanism to achieve $(\epsilon,\delta)$-DP: 
\begin{equation*}
    \begin{aligned}
& \mathrm{PLRV}^{(\mathbf{s})}_{(R1SMG(\bm{x})||R1SMG(\bm{x}'))}\\
\leq & \frac{1}{2\sigma_*} ( |\rho_1|+|\rho_2| )^2  \leq \frac{1}{2\sigma_*}  \Big(2\Delta_2f\Big/\Big(\frac{\delta\Gamma(\frac{M-1}{2})}{\sqrt{\pi}\Gamma(\frac{M}{2})}\Big)^{\frac{1}{M-2}}\Big)^2\leq \epsilon, 
    \end{aligned}
\end{equation*}
which leads to $\sigma_*\geq \frac{2(\Delta_2f)^2}{\epsilon\psi}$. Moreover, due to the constraint of $P > 2$ in Lemma \ref{lemma:angle}, we require $M>2$ for the R1SMG mechanism. Thus, Theorem \ref{thm_multivariate} follows.
\end{proof}

\begin{remark}\label{epsilon-remark} 
In the classic Gaussian mechanism, the privacy budget is upper bounded by 1 (Theorem \ref{Gaussian_privacyguarantee}). Similarly, there is also an upper bound on $\epsilon$  for the   R1SMG mechanism. This can be   explained geometrically using Figure~\ref{fig:small_budget}. In particular, Figure~\ref{fig:small_budget}(a) shows the case where $\epsilon<1$ in the classic Gaussian mechanism. Clearly as long as $\bm{n}$ and $\bm{n}'$ can obscure the difference between $f(\bm{x})$ and $f(\bm{x})'$, i.e., $f(\bm{x})+\bm{n} = f(\bm{x}')+\bm{n}'= \mathbf{s}$, the noise components along the direction of $\bm{v} = f(\bm{x})-f(\bm{x}')$ are also sufficient to obscure the difference between $f(\bm{x})$ and $f(\bm{x})'$, i.e., $\lambda_1\bm{b}_1 - (\bm{b}_1^T\bm{n}')\bm{b}_1 = \bm{v}$. Whereas, if $\epsilon$ exceeds the upper bound, the magnitude of $\bm{n}$ and $\bm{n}'$ might be  too small to obscure $\bm{v}$, since $||\bm{n}||$ and $||\bm{n}'||$ are proportional to $\frac{1}{\epsilon}$. In other words,  when $\epsilon$ is beyond the upper bound, $\bm{n}$, $\bm{n}'$, and $\bm{v}$ cannot form a triangle, and hence we have $\lambda_1\bm{b}_1 - (\bm{b}_1^T\bm{n}')\bm{b}_1 \neq \bm{v}$ as shown in Figure~\ref{fig:small_budget}(b). This is also true for the R1SMG mechanism, i.e., when $\epsilon$ is too large, $\bm{n} = \sqrt{\sigma_*}z_1\mathbf{h}$, $\bm{n}'=\sqrt{\sigma_*}z_2\mathbf{g}$, and $\bm{v}$ cannot form a triangle as shown in Figure~\ref{fig:small_budget}(c). As a result, in order to form a triangle as shown in Figure~\ref{fig:small_budget}(d), an upper bounded $\epsilon$ is required for the R1SMG mechanism. In this work, we   let the privacy budget of the R1SMG mechanism be upper bounded by    $\frac{1}{M}$ of that of  the classic Gaussian mechanism, i.e., $\frac{1}{M}$. This is because  $\bm{v}$ is obscured using noise with $M$ degrees of freedom by the classic Gaussian mechanism, whereas it is  obscured  using noise with 1 degree of freedom by the R1SMG mechanism due to the rank-1 constraint. We will provide a tight upper bound on $\epsilon$ for the R1SMG mechanism in a separate study.
\begin{figure}[htb]
  \begin{center}
     \includegraphics[width=1\columnwidth]{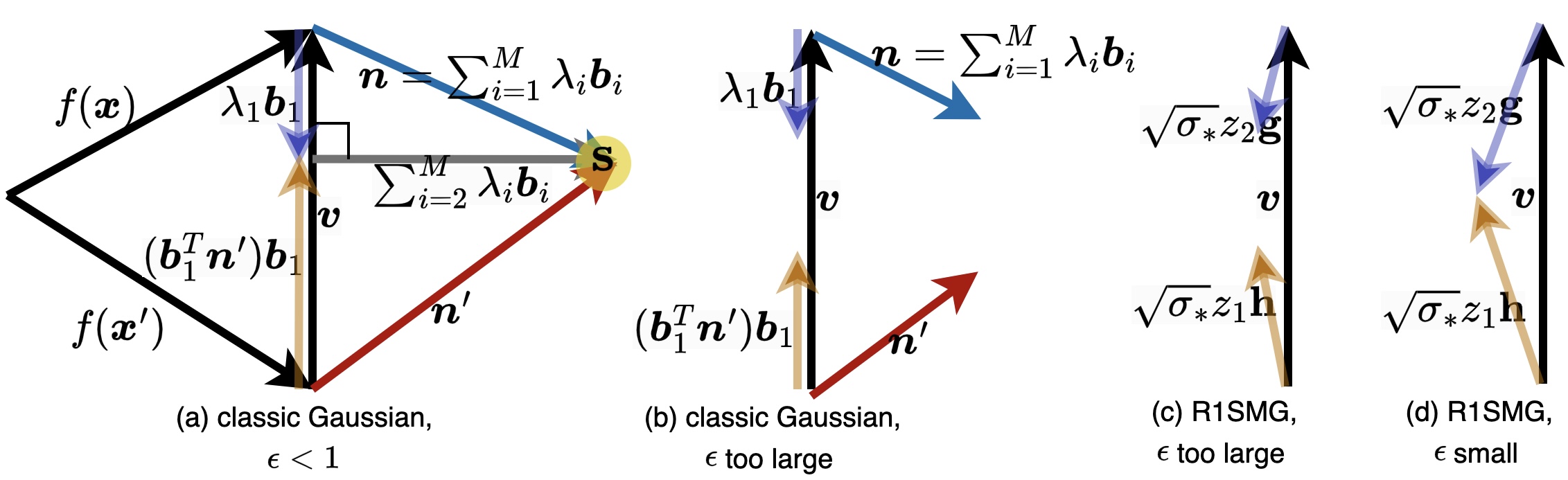}\\
       \end{center}
         \vspace{-5mm}
  \caption{Geometric interpretations on  the  constraints on $\epsilon$ in the classic Gaussian and R1SMG mechanism.}
  \label{fig:small_budget}
\end{figure}
\end{remark}

\textbf{Implementation of R1SMG.} According to the following theorem, we can obtain the desired random variable distributed on the Stiefel manifold by transforming samples drawn i.i.d. from standard Gaussian distribution.
\begin{theorem} \cite[p. 29, Theorem 2.2.1]{chikuse2012statistics}\label{thm:sample_V} Let $\mathbf{x}\in\R^{M\times r}$, whose elements are i.i.d. Gaussian random variables from  $\mathcal{N}(0,1)$. Then, $\mathbf{v} = \mathbf{x}(\mathbf{x}^T\mathbf{x})^{-1/2}$ is uniformly distributed on $\mathbb{V}_{r,M}$.
\end{theorem}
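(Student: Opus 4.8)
The plan is to prove this by an invariance (symmetry) argument rather than by an explicit Jacobian computation. The density of a matrix $\mathbf{x}\in\R^{M\times r}$ with i.i.d. $\mathcal{N}(0,1)$ entries is proportional to $\exp(-\tfrac{1}{2}\Tr[\mathbf{x}^T\mathbf{x}])$, which depends on $\mathbf{x}$ only through $\Tr[\mathbf{x}^T\mathbf{x}]$. Since $\Tr[(\mathbf{Q}\mathbf{x})^T(\mathbf{Q}\mathbf{x})] = \Tr[\mathbf{x}^T\mathbf{Q}^T\mathbf{Q}\mathbf{x}] = \Tr[\mathbf{x}^T\mathbf{x}]$ for every orthogonal $\mathbf{Q}\in O(M)$, the law of $\mathbf{x}$ is invariant under left-multiplication by any $\mathbf{Q}\in O(M)$. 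The strategy is then to show that (i) $\mathbf{v} = \mathbf{x}(\mathbf{x}^T\mathbf{x})^{-1/2}$ lands on $\mathbb{V}_{r,M}$ almost surely, (ii) the map $\mathbf{x}\mapsto\mathbf{v}$ intertwines the $O(M)$ action, so the law of $\mathbf{v}$ inherits $O(M)$-invariance, and (iii) the uniform distribution of (\ref{eq:stiefel_uniform}) is the unique $O(M)$-invariant probability measure on $\mathbb{V}_{r,M}$, forcing $\mathbf{v}\sim\mathcal{U}(\mathbb{V}_{r,M})$.

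First I would check that $\mathbf{v}$ is well defined and feasible almost surely. Because $M\geq r$, the columns of $\mathbf{x}$ are linearly independent with probability one, so $\mathbf{x}^T\mathbf{x}\in\mathbb{PD}^{r\times r}$ and its symmetric positive-definite square root $(\mathbf{x}^T\mathbf{x})^{1/2}$ is invertible; thus $(\mathbf{x}^T\mathbf{x})^{-1/2}$ exists. A direct computation then gives $\mathbf{v}^T\mathbf{v} = (\mathbf{x}^T\mathbf{x})^{-1/2}(\mathbf{x}^T\mathbf{x})(\mathbf{x}^T\mathbf{x})^{-1/2} = \mathbf{I}_{r\times r}$, so $\mathbf{v}\in\mathbb{V}_{r,M}$.

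Next I would establish the equivariance. Fix $\mathbf{Q}\in O(M)$ and set $\mathbf{y} = \mathbf{Q}\mathbf{x}$. Since $\mathbf{y}^T\mathbf{y} = \mathbf{x}^T\mathbf{Q}^T\mathbf{Q}\mathbf{x} = \mathbf{x}^T\mathbf{x}$, the square-root factor is unchanged, so $\mathbf{v}(\mathbf{y}) = \mathbf{Q}\mathbf{x}(\mathbf{x}^T\mathbf{x})^{-1/2} = \mathbf{Q}\,\mathbf{v}(\mathbf{x})$. Combining this with the $O(M)$-invariance of the Gaussian law noted above, for any $\mathbf{Q}\in O(M)$ the random matrices $\mathbf{v}(\mathbf{x})$ and $\mathbf{v}(\mathbf{Q}\mathbf{x}) = \mathbf{Q}\,\mathbf{v}(\mathbf{x})$ are equal in distribution; hence the law of $\mathbf{v}$ is invariant under $\mathbf{v}\mapsto\mathbf{Q}\mathbf{v}$ for every $\mathbf{Q}\in O(M)$. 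Finally I would appeal to the fact that $O(M)$ acts transitively on $\mathbb{V}_{r,M}$ (any orthonormal $r$-frame can be mapped to any other by an orthogonal matrix), so $\mathbb{V}_{r,M}$ is a compact homogeneous space carrying a unique $O(M)$-invariant probability measure, namely the normalized surface measure, which is exactly the uniform distribution (\ref{eq:stiefel_uniform}). Since the law of $\mathbf{v}$ is an $O(M)$-invariant probability measure supported on $\mathbb{V}_{r,M}$, it must coincide with $\mathcal{U}(\mathbb{V}_{r,M})$, completing the proof.

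I expect the main obstacle to be the uniqueness-of-invariant-measure step (iii): the equivariance and the closure onto the manifold are routine, but rigorously justifying that $O(M)$-invariance pins down the uniform measure requires transitivity of the group action together with the uniqueness of the invariant probability measure on a compact homogeneous space (a Haar-measure argument). An alternative, more self-contained route would replace (iii) with an explicit change of variables: factor $\mathbf{x} = \mathbf{v}\,\mathbf{t}$ where $\mathbf{t} = (\mathbf{x}^T\mathbf{x})^{1/2}$ is the positive-definite polar factor, compute the Jacobian of this polar decomposition, and verify that the induced density on $\mathbb{V}_{r,M}$ is constant; this is more computational but avoids invoking the general theory of invariant measures.
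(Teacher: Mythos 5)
The paper does not actually prove this statement --- it is imported verbatim from Chikuse (Theorem 2.2.1(iii) of the cited monograph) and used as a black box to implement the sampling step of R1SMG, so there is no in-paper argument to compare against. Your invariance proof is correct and is essentially the standard derivation of this fact: the law of $\mathbf{x}$ depends only on $\Tr[\mathbf{x}^T\mathbf{x}]$ and is therefore invariant under the left $O(M)$-action; the polar-type map $\mathbf{x}\mapsto\mathbf{x}(\mathbf{x}^T\mathbf{x})^{-1/2}$ is equivariant for that action; and since $O(M)$ acts transitively on the compact homogeneous space $\mathbb{V}_{r,M}$, the unique $O(M)$-invariant Borel probability measure there is the normalized uniform measure of (\ref{eq:stiefel_uniform}), which forces $\mathbf{v}\sim\mathcal{U}(\mathbb{V}_{r,M})$. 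The uniqueness step you flag as the main obstacle is indeed the only nontrivial ingredient, and your justification (transitivity plus uniqueness of the invariant probability measure on a compact homogeneous space, i.e., a Haar pushforward argument) is the right one; the only hypothesis you use implicitly is $M\geq r$, needed for $\mathbf{x}^T\mathbf{x}$ to be almost surely invertible, which is automatic for a Stiefel manifold and certainly holds in the paper's application where $r=1$ and $M>2$.
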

Thus, we can apply Theorem \ref{thm:sample_V} with $r=1$ to draw the desired samples  from $\mathbb{V}_{1,M}$. Then, applying (\ref{mapping2}), we can generate   an instance of   random noise for the R1SMG mechanism.

\subsection{Expected Accuracy Loss Analysis}\label{sec:discussion_r1smg}

In this section, we investigate the expected accuracy loss of the   R1SMG mechanism, i.e., $\mathbb{E}_{R1SMG}[\L]$. 
The   results are summarized in Theorem \ref{theorem:r1smg_asymptotic}.
\begin{theorem}\label{theorem:r1smg_asymptotic} For any fixed feasible $\epsilon>0, 0<\delta<1$, given a dataset $\bm{x}$ and a query result $f(\bm{x})\in\R^M$. We have 
$\mathbb{E}_{R1SMG}[\L]=||R1SMG(f(\bm{x}))  - f(\bm{x})||_2^2 = \Tr[\bm{\Pi}] = \sigma_* \geq C_R(\Delta_2f)^2$, where   $\sigma_*$ is the only nonzero  eigenvalue of $\bm{\Pi}$, $C_R = \frac{2}{\epsilon \psi}$, and $\psi = \Big(\frac{\delta\Gamma(\frac{M-1}{2})}{\sqrt{\pi}\Gamma(\frac{M}{2})}\Big)^{\frac{2}{M-2}}$.   $C_R$ has a decreasing trend as $M$ increases. When $M$ goes large, $C_R$ converges to $\frac{2}{\epsilon}$ and $\mathbb{E}_{R1SMG}[\L]$ can be as low as $\frac{2(\Delta_2 f)^2}{\epsilon}$.
\end{theorem}
\begin{proof} 
According to the noise generation process in R1SMG, i.e., (\ref{mapping2}), we have
\begin{small}
\begin{equation}
\begin{aligned}
      & \mathbb{E}_{R1SMG}[\L] =  \mathbb{E}[||\mathbf{n}||_2^2] \\
       =&\mathbb{E}\Big[ (\mathbf{v}   \sigma_*^{1/2}z)^T\mathbf{v}   \sigma_*^{1/2}z \Big] = \sigma_*\mathbb{E}[z^2] \stackrel{(a)}= \sigma_*\stackrel{(b)}\geq  2(\Delta_2f)^2/(\epsilon \psi), 
\end{aligned}
    \label{eq:freedom_r}
\end{equation}
\end{small}\ignorespacesafterend
where   (a) holds because $z^2\sim\chi^2(1)$,  
and (b) is due to  Theorem \ref{thm_multivariate}. From (\ref{eq:freedom_r}), we also have $\mathbb{E}_{R1SMG}[\L] = \sigma_* = \Tr[\bm{\Pi}]$.

To show the decreasing trend of $\mathbb{E}_{R1SMG}[\L]$ without dealing with the cumbersome notation of $\psi$, we use some intermediate results obtained in the proof of Theorem \ref{thm_multivariate}. To be more specific, we have
 $\psi = sin^2(\theta)$, where $\theta\in(\frac{\pi}{2}-\theta_0,\frac{\pi}{2}+\theta_0)$ is the angle between two instances of $M$-dimensional unit vectors, and $\theta_0 = arccos\left(\left({\delta\Gamma(\frac{M-1}{2})}\big/{\sqrt{\pi}\Gamma(\frac{M}{2})}\right)^{1/(M-2)}\right)$.  
Cai et al. \cite{cai2013distributions} has proved that when $M>2$, $\theta$ concentrates around $\frac{\pi}{2}$, and the concentration becomes stronger as the dimension $M$ grows. In particular, $\theta$ converges to $\frac{\pi}{2}$ at the rate of $\sqrt{M}$ when $M$ approaches infinity  \cite[p. 1840]{cai2013distributions}. Thus, $\frac{1}{sin^2(\theta)}$ has a decreasing trend and   converges to 1 due to the symmetry of sine function on $[0,\pi]$. Subsequently, we can have that $C_R = \frac{2}{\epsilon \psi}$ has a decreasing trend as well. In addition, when $M$ goes large, we get that $C_R$ converges to $\frac{2}{\epsilon}$ and hence $\mathbb{E}_{R1SMG}[\L]$ can be as low as $\frac{2(\Delta_2f)^2}{\epsilon}$\footnote{Note that the asymptotic property of $C_R$ is studied when $\epsilon$ is always feasible. Suppose that we are interested in the asymptotic property of $C_R$ when $M_1< M < M_2$. Then, as explained in Remark~\ref{epsilon-remark}, a feasible $\epsilon$ refers to $\epsilon < \frac{1}{M_2}$.}.
\end{proof}

\noindent\textbf{Curse Lifted.} (\ref{eq:freedom_r}) clearly shows that we have lifted the identified curse by considering noise attributed to a singular multivariate Gaussian distribution with rank-1 covariance matrix.  In particular,  the expected accuracy loss introduced by the R$1$SMG mechanism is $\Tr[\bm{\Pi}]$, which equal to the only nonzero eigenvalue $\sigma_*$ but  is \textbf{not} the summation of $M$ positive eigenvalues any more like in  the existing Gaussian mechanisms.

From Theorem \ref{thm:summary-all-gaussian} and Theorem \ref{theorem:r1smg_asymptotic}, we   arrive at   Corollary \ref{corollary:psi}.

\begin{corollary}\label{corollary:psi} 
The R1SMG mechanism leads to expected accuracy loss  on a lower order of magnitude by at least $M$ or $MN$ compared with the classic Gaussian, analytic Gaussian, and MVG mechanisms. In particular, we have $C_R= \Theta(\frac{\epsilon C_C}{M}) = \Theta(\frac{C_A}{M}) = \Theta(\frac{C_M}{MN})$. 
\end{corollary}
\begin{proof}
According to   Theorem \ref{thm_multivariate}, we have $\frac{1}{\psi}  =  \Big(\frac{\sqrt{\pi}\Gamma(\frac{M}{2})}{\delta\Gamma(\frac{M-1}{2})}\Big)^{\frac{2}{M-2}}  =   \exp\Big(  \frac{2}{M-2} \ln\Big(\frac{\sqrt{\pi}\Gamma(\frac{M}{2})}{\delta\Gamma(\frac{M-1}{2})}\Big)\Big).$ 
By connecting  $\frac{\Gamma(\frac{M}{2})}{\Gamma(\frac{M-1}{2})}$ to the normalization constant of the   Beta distribution $Beta(\alpha,\beta)$ with $\alpha = \frac{M-1}{2}$ and $\beta = \frac{1}{2}$, it gives $\Theta\Big(\frac{\Gamma(\frac{M}{2})}{\Gamma(\frac{M-1}{2})}\Big) = \Theta\left(\frac{\sqrt{M-2}}{\sqrt{2}}\right)$ \cite{cai2013distributions}. As a result, we can obtain $C_R = \Theta\left(\frac{2}{\epsilon}\exp\left(\frac{2}{M-2}\ln \frac{\sqrt{M-2}}{\sqrt{2}}\right)\right)$. Compare $C_R$ with the results in  Theorem \ref{thm:summary-all-gaussian}, we get that  $C_R= \Theta(\frac{\epsilon C_C}{M}) = \Theta(\frac{C_A}{M}) = \Theta(\frac{C_M}{MN})$.
\end{proof}

Moreover, in Appendix \ref{sec:versus_M}, we plot the empirical $\mathbb{E}[\L]$ versus $M$ achieved by the R1SMG mechanism and by other mechanisms, respectively, and clearly show that $\mathbb{E}_{R1SMG}[\L]$ has a decreasing trend as $M$ increases and eventually converges, whereas all  the other mechanisms   result in expected accuracy loss increasing with $M$ (cf. Figure \ref{fig:scale_study}).

\subsection{Discussion on Privacy Leakage of Utilizing   Vector in the Null Space of $\mathbf{v}$}\label{sec:null_space_attack}

The R$1$SMG mechanism does not span the entire space of $\R^{M}$. Therefore, it may raise  concerns about  privacy leakage if one   takes advantage of  the       vector  in the null space of $\mathbf{v}$. In other words, let $\mathbf{s}$ be the output of R$1$SMG and $\mathbf{u}\in Null(\mathbf{v}) = \{\mathbf{u}|<\mathbf{u},\mathbf{v}>=0\}$. Since  $\mathbf{n}\in Span(\mathbf{v})$, one   can have  $<\mathbf{u},\mathbf{s}> = <\mathbf{u},f(\bm{x})+\mathbf{n}> = <\mathbf{u},f(\bm{x})>$, 
which implies potential privacy leakage of     queries on $f(\bm{x})$.

However, we   highlight that $\mathbf{v}$ is randomly sampled from $\mathcal{U}(\mathbb{V}_{1,M})$, which is an intermediate noise to produce the final noise of the R$1$SMG mechanism and will not be made public. 
Note that this is not  against the core idea of DP, i.e., ``privacy without obscurity", because the entire noise generation process, i.e, (\ref{mapping2}), including  the choice of  distribution parameters are transparent.  Thus, the probability that a constructed $\mathbf{u}$ that is in the null space of a randomly sampled $\mathbf{v}$ is zero, i.e., 
$\Pr[\mathbf{u}\in Null(\mathbf{v})] = \frac{\mu(Null(\mathbf{v}))}{\mu(\mathbb{R}^{M})} = 0$, where $\mu(\cdot)$ is the Lebesgue measure of a measurable set.

For better explanation, , we give   toy examples in $\R^2$ and $\R^3$ (when $M=2$ and $M=3$). Note that in practice R1SMG requires $M>2$, here we use $\R^2$ just to visualize the idea (the probability that
a constructed $\mathbf{u}$ that is in the null space of a randomly sampled $\mathbf{v}$ is zero) which is   independent of the dimension. In $\R^2$, $\mathbb{V}_{1,2}$ is the unit circle shown in Figure \ref{fig:manifold_toy} (left). Suppose that $\mathbf{v}_1 = [-1/\sqrt{5},2/\sqrt{5}]^T$ is the randomly sampled variable in $\mathbb{V}_{1,2}$. Then $Null(\mathbf{v}_1)$ is   $\mathcal{I}$ (the dashed line orthogonal to $\mathbf{v}_1$). Clearly, the probability of sampling a point from $\R^2$ that resides on a specific line is 0. Likewise, in $\R^3$, $\mathbb{V}_{1,3}$ is the unit sphere shown in Figure \ref{fig:manifold_toy} (right). Suppose that $\mathbf{v}_2 = [1/\sqrt{3},-1/\sqrt{3},1/\sqrt{3}]^T$ is the randomly sampled variable in $\mathbb{V}_{1,3}$. Then $Null(\mathbf{v}_2)$ is    $\mathcal{F}$ (the   plane orthogonal to $\mathbf{v}_2$). The probability of sampling a point from $\R^3$ that resides on a specific plane is also 0. Although, it is publicly known that $\mathbf{v}$ is   sampled from $\mathcal{U}(\mathbb{V}_{1,M})$, the Lebesgue measure of $Null(\mathbf{v})$ (given $\mathbf{v}\subset\mathbb{V}_{1,M}$) is still 0. For example, the probability of sampling the blue points in Figure \ref{fig:manifold_toy} (left) is 0, and the probability of sampling points residing on $\mathcal{C}\subset\mathcal{F}$ (the circle orthogonal to $\mathbf{v}_2$) in Figure \ref{fig:manifold_toy} (right) is also 0. 
Hence, by generating $\mathbf{v}$ randomly, the   R$1$SMG mechanism will cause privacy leakage of using the   vectors in the null space of $\mathbf{v}$  to have   probability zero. 
\begin{figure}[htb]
  \begin{center}
     \includegraphics[width=  1\columnwidth]{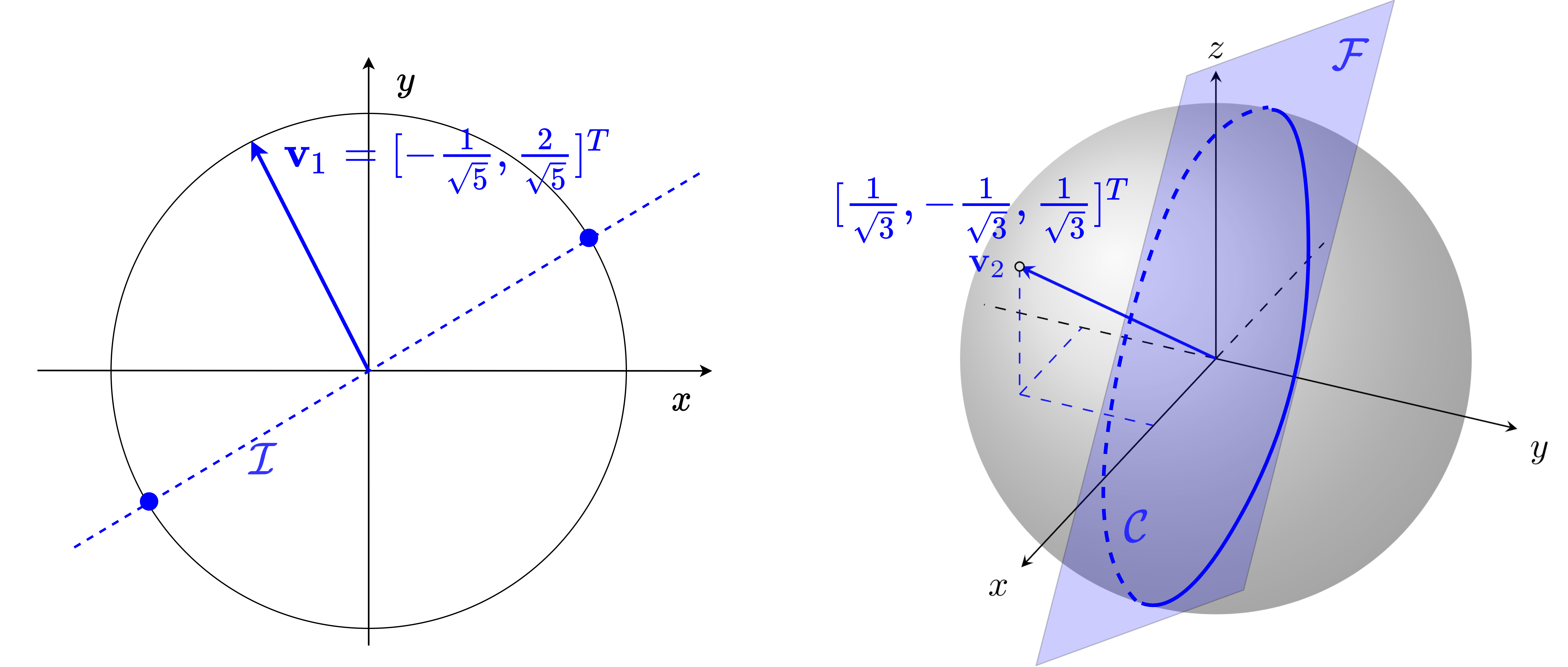}
       \end{center}
\caption{\label{fig:manifold_toy}Visualization of the null space of the noise generated by the R$1$SMG mechanism. Left: $\mathbb{V}_{1,2}$. Right: $\mathbb{V}_{1,3}$.}
\end{figure}

\section{Accuracy Stability}\label{utility}

Now, we evaluate the  accuracy stability achieved by various output perturbation DP mechanisms by studying the kurtosis and skewness of the distribution of $\L$ (the non-deterministic accuracy loss defined in Definition \ref{def:utility_cost}). In particular,

$\bullet$ Kurtosis, a descriptor of ``tail extremity'' of a probability distribution, is defined as the ratio between the $4$th moment and the square of the $2$nd moment of a random variable, i.e., $\frac{\mathbb{E}[\L^4]}{(\mathbb{E}[\L^2])^2}$. A larger kurtosis means that outliers or extreme large values are less  likely to be generated in a given probability distribution \cite{westfall2014kurtosis}.

$\bullet$ Skewness, a descriptor of the ``bulk'' of a probability distribution, is defined as the ratio between the $3$rd moment and the square root of the cube of the $2$nd moment of a random variable, i.e., $\frac{\mathbb{E}[\L^3]}{(\mathbb{E}[\L^2])^{3/2}}$. A larger skewness means that the bulk of the samples is at the left  region of the PDF and the right tail is longer.

 As a result, in order  to have high accuracy  stability on the perturbed query result, $\L$ with both \textbf{larger kurtosis and skewness is preferred}. We summarize the theoretical  results for various mechanisms in Theorem \ref{kk} and Theorem \ref{ss}.

\begin{theorem} The kurtosis of the distribution of $\L$ in the R1SMG mechanism is $\frac{35}{3}$ which is larger than that of the classic Gaussian mechanism, of the analytic Gaussian mechanism, and of   the MVG mechanism, i.e., the PDF of $\L$  in the   R1SMG mechanism is more leptokurtic than that in the   classic   Gaussian, in the analytic Gaussian, and in the MVG mechanism.
\label{kk}
\end{theorem}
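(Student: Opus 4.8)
The plan is to compute the raw-moment kurtosis $\mathbb{E}[\L^4]/(\mathbb{E}[\L^2])^2$ in closed form for each mechanism and then compare against $\tfrac{35}{3}$. I would start with R1SMG, which is the easiest. From the noise construction in (\ref{mapping2}), $\L=\|\mathbf{n}\|_2^2=\|\mathbf{v}\sqrt{\sigma_*}z\|_2^2=\sigma_* z^2$, since $\mathbf{v}^T\mathbf{v}=1$ and $z\sim\mathcal{N}(0,1)$. Because the defined kurtosis is scale-invariant (the factor $\sigma_*^4$ cancels between numerator and squared denominator), it suffices to take moments of $z^2$, i.e.\ of a $\chi^2(1)$ variable. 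Using the standard-normal moments $\mathbb{E}[z^4]=3$ and $\mathbb{E}[z^8]=105$, I obtain $\mathbb{E}[\L^2]\propto 3$ and $\mathbb{E}[\L^4]\propto 105$, so the kurtosis is $105/9=35/3$, independent of both $M$ and $\sigma_*$.

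For the classic Gaussian mechanism, $\mathbf{n}\sim\mathcal{N}(\mathbf{0},\sigma^2\mathbf{I}_{M\times M})$ gives $\L=\sigma^2 U$ with $U\sim\chi^2(M)$. The $\chi^2$ moment formula yields $\mathbb{E}[\L^2]\propto M(M+2)$ and $\mathbb{E}[\L^4]\propto M(M+2)(M+4)(M+6)$, hence the kurtosis equals $\frac{(M+4)(M+6)}{M(M+2)}$. I would then note its excess over $1$ is $\frac{8M+24}{M(M+2)}$, which is strictly decreasing in $M$; the expression equals $35/3$ only at $M=1$ and is therefore strictly below $35/3$ for every $M\ge 2$, in particular in the regime $M>2$ required by R1SMG.

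The substantive case is MVG, where $\L=\mathbf{n}^T\mathbf{n}$ is a quadratic form in $\mathbf{n}\sim\mathcal{N}(\mathbf{0},\bm{\Xi})$ with $\bm{\Xi}=\S\otimes\P$. I would invoke Theorem \ref{m} with $\A=\mathbf{I}$ and $\mathbf{\mu}=\mathbf{0}$, so that $g^{(k)}=2^k k!\,\Tr[\bm{\Xi}^{k+1}]$, and run its nested-sum recursion to get $\mathbb{E}[\L^2]=\Tr[\bm{\Xi}]^2+2\Tr[\bm{\Xi}^2]$ and $\mathbb{E}[\L^4]=\Tr[\bm{\Xi}]^4+12\Tr[\bm{\Xi}]^2\Tr[\bm{\Xi}^2]+12\Tr[\bm{\Xi}^2]^2+32\Tr[\bm{\Xi}]\Tr[\bm{\Xi}^3]+48\Tr[\bm{\Xi}^4]$. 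Writing $s_j=\Tr[\bm{\Xi}^j]=\sum_i\lambda_i^j$ in terms of the positive eigenvalues $\lambda_i$ of $\bm{\Xi}$, the claim reduces to the symmetric-polynomial inequality $35(s_1^2+2s_2)^2-3\,\mathbb{E}[\L^4]=32 s_1^4+104 s_1^2 s_2+104 s_2^2-96 s_1 s_3-144 s_4\ge 0$.

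The hard part is establishing this last inequality for an arbitrary positive-definite $\bm{\Xi}$, since it mixes $s_1,\dots,s_4$ in a way whose sign is not obvious term by term. My strategy is to re-expand it in the basis of monomial symmetric polynomials $m_{[4]},m_{[3,1]},m_{[2,2]},m_{[2,1,1]},m_{[1,1,1,1]}$ of the $\lambda_i$. The key and slightly delicate point is that the coefficient of the leading monomial $m_{[4]}=s_4$ cancels to exactly zero ($32+104+104-96-144=0$), while every remaining coefficient is strictly positive. Since all $\lambda_i\ge 0$ make each monomial symmetric polynomial nonnegative, nonnegativity of the whole expression becomes manifest, and equality forces all cross-terms (e.g.\ $m_{[2,2]}=\sum_{i<j}\lambda_i^2\lambda_j^2$) to vanish, i.e.\ $\bm{\Xi}$ to be rank-$1$. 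Because MVG uses a full-rank $\bm{\Xi}$ with at least two positive eigenvalues, the inequality is strict and the kurtosis is $<35/3$. I expect the bookkeeping of this monomial expansion---and the observation that the $s_4$ coefficient must vanish, which is precisely why the rank-$1$ degeneracy is the tight boundary case shared with R1SMG---to be the crux of the whole argument.
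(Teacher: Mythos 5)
Your proposal is correct, and for the R1SMG and classic Gaussian cases it coincides with the paper's proof (both reduce to moments of $\chi^2(1)$ and $\chi^2(M)$, giving $\frac{35}{3}$ and $\frac{(M+4)(M+6)}{M(M+2)}$ respectively). The substantive difference is in the MVG case. The paper bounds each trace term in the fourth-moment formula via $(\Tr\XX)^k>\Tr[\XX^k]$ and concludes ${\rm Kurt}_{MVG}(\L)<\frac{105(\Tr\XX)^4}{9\Tr[(\S\otimes\P)^4]}<\frac{35}{3}$; but since $(\Tr\XX)^4>\Tr[\XX^4]$ for any full-rank $\XX$, that middle quantity actually \emph{exceeds} $\frac{105}{9}=\frac{35}{3}$, so the paper's final inequality does not follow from its stated bounds (and replacing the denominator by $9(\Tr\XX)^4$ would require $\Tr[\XX^2]\geq(\Tr\XX)^2$, which is false). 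Your route closes this gap: writing $s_j=\Tr[\XX^j]$, reducing the claim to $32s_1^4+104s_1^2s_2+104s_2^2-96s_1s_3-144s_4\geq 0$, and expanding in monomial symmetric polynomials of the eigenvalues does work --- the coefficient of $\sum_i\lambda_i^4$ is $32+104+104-96-144=0$, and the remaining coefficients ($240$ for $m_{[3,1]}$, $608$ for $m_{[2,2]}$, $592$ for $m_{[2,1,1]}$, $768$ for $m_{[1,1,1,1]}$) are all strictly positive, so the expression is nonnegative and vanishes only when at most one eigenvalue is nonzero. Since $\S\otimes\P$ has $MN\geq 2$ positive eigenvalues, the inequality is strict. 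Your argument is therefore not only a different decomposition but a repair of the paper's MVG step, and it additionally identifies rank one as the exact extremal case, which explains structurally why R1SMG sits at the boundary value $\frac{35}{3}$.
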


\begin{proof} First, for the proposed R1SMG, we have 
\begin{gather*}
    \begin{aligned}
   & \resizebox{1\linewidth}{!}{${\rm{Kurt}_{R1SMG}}(\L)=\frac{\mathbb{E}[\L^4]}{(\mathbb{E}[\L^2])^2}\stackrel{(a)}=\frac{\mathbb{E}\left[\left((\mathbf{v}\sqrt{\sigma_*}z)^T\mathbf{v}\sqrt{\sigma_*}z\right)^4\right]}{\left(\mathbb{E}\left[\left((\mathbf{v}\sqrt{\sigma_*}z)^T\mathbf{v}\sqrt{\sigma_*}z\right)^2\right]\right)^2}    =\frac{\mathbb{E}\left[(z\sigma_*z)^4\right]}{\left(\mathbb{E}\left[(z\sigma_*z)^2\right]\right)^2} $}   \\
& \resizebox{0.8\linewidth}{!}{$\stackrel{(b)}=  \frac{48\sigma_*^4+32\sigma_*^3\sigma_*+12(\sigma_*^2)^2+12\sigma_*^2(\sigma_*)^2+(\sigma_*)^4}{\big( 2\sigma_*^2+(\sigma_*)^2  \big)^2} = \frac{35}{3}$},
    \end{aligned}
\end{gather*}
where $(a)$ is due to the noise generation process of R1SMG in (\ref{mapping2}) and $(b)$ is obtained by applying Lemma \ref{m} with $\A=1$ and $\XX=\sigma_*$ (i.e., the case of univariant Gaussian).

Then, we show that both classic Gaussian mechanism and the MVG mechanism will result in $\L$ with kurtosis less than $\frac{35}{3}$. In particular, the classic Gaussian mechanism (denoted as $\mathcal{G}$) adds i.i.d. noise to each entry of $f(\bm{x})$, $\mathbf{n}_{i}\sim\mathcal{N}(0,\sigma^2), i\in[1,M]$. Thus, we have 
$\mathbb{E}_{\mathcal{G}}[\L^t] 
 = \mathbb{E}_{z_{i}\sim\mathcal{N}(0,1)} \left[\left(\sum_{i} \sigma^2 z_{i}^2\right)^t \right]  =    \sigma^{2t} \mathbb{E}_{U\sim\chi^2(M)}   [U^t ]    =       \sigma^{2t} 2^t \frac{\Gamma(t+\frac{M}{2})}{\Gamma(\frac{M}{2})}$, 
where the last equality follows from the  $t$th moments of Chi-squared random variable. Set  $t$ to be 4 and 2, one can verify that   ${\rm{Kurt}_{\mathcal{G}}}(\mathcal{L})= 
\frac{\mathbb{E}_{\mathcal{G}}[\L^4] }{(\mathbb{E}_{\mathcal{G}}[\L^2])^2 } =
{      \sigma^8 2^4\frac{\Gamma(4+\frac{M}{2})}{\Gamma(\frac{M}{2})}      }   \Big /{    \Big(    \sigma^4 2^2\frac{\Gamma(2+\frac{M}{2})}{\Gamma(\frac{M}{2})}     \Big)^2      } <\frac{35}{3}, \forall M>1.$

The same analysis holds for the analytic Gaussian mechanism, as it adds i.i.d. Gaussian noise with variance $\sigma_A^2$.

MVG   introduces the matrix-valued noise attributed to a matrix-valued Gaussian distribution, i.e., $\mathcal{N}(\mathbf{0},\S,\P)$, and the vectorization of the noise matrix is also attributed to a multivariate Gaussian distribution, i.e., $\mathcal{N}(\mathbf{0},\S\otimes\P)$ \cite{gupta2018matrix}. 
Denote $\XX=\S\otimes\P$, and let $k>1$ be an   integer, we first have
\begin{small}
\begin{gather*}
\begin{aligned}
&(\Tr[\XX])^k  =(\Tr[\S\otimes\P])^k =
\left[\left(\sum_{i=1}^{{\rm{rank}}(\S)}\sigma_i(\S)\right)  \left(\sum_{j=1}^{{\rm{rank}}(\P)}\sigma_j(\P)\right)\right]^k\\
>& \sum_{i=1,j=1}^{{\rm{rank}}(\S),{\rm{rank}}(\P)}\big(\sigma_i(\S)\sigma_j(\P)\big)^k= \Tr\left[(\S\otimes\P)^k\right] = \Tr\left[\XX^k\right] .
\end{aligned}
\end{gather*}
\end{small}\ignorespacesafterend
Similarly,  for any positive integer $k_1>k_2\geq 1$ and symmetric $\XX\in\mathbb{PD}$, we have
\begin{small}
    \begin{equation*}
        \begin{aligned}
&\Tr[\XX^{k_1}](\Tr[\XX])^{k_2}>  \Tr[\XX^{k_1}]\Tr[\XX^{k_2}]  \\=&\left(\sum_{i=1}^{{\rm{rank}}(\XX)}(\sigma_i(\XX))^{k_1}\right)\left(\sum_{i=1}^{{\rm{rank}}(\XX)}(\sigma_i(\XX))^{k_2}\right)    >  \sum_{i=1}^{{\rm{rank}}(\XX)}    (\sigma_i(\XX))^{k_1+k_2}\\ = &\Tr[\XX^{k_1+k_2}].
        \end{aligned}
    \end{equation*}
\end{small}\ignorespacesafterend
Then, according to Lemma \ref{m}, we have
\begin{small}
\begin{gather*}
    \begin{aligned}
&{\rm{Kurt}_{MVG}}(\L) =   \frac{\mathbb{E}[\L^4]}{(\mathbb{E}[\L^2])^2}=\frac{\mathbb{E}[(\mathbf{n}^T\mathbf{n})^4]}{(\mathbb{E}[(\mathbf{n}^T\mathbf{n})^2])^2} = \frac{\mathbb{E}[Q(\mathbf{n})^4]}{(\mathbb{E}[Q(\mathbf{n})^2])^2}\\
= & \resizebox{0.95\linewidth}{!}{$\frac{48\Tr[\XX^4]+32\Tr[\XX^3]\Tr[\XX]+12(\Tr[\XX^2])^2+12\Tr[\XX^2](\Tr[\XX])^2+(\Tr[\XX])^4}{\big( 2\Tr[\XX^2]+(\Tr[\XX])^2  \big)^2}$}.
    \end{aligned}
\end{gather*}
\end{small}\ignorespacesafterend
Next, we can obtain 

\noindent\resizebox{1\linewidth}{!}{$48\Tr[\XX^4] < \frac{104}{3} (\Tr[\XX^2])^2+\frac{8}{3}\Tr[\XX^2](\Tr[\XX])^2+\frac{32}{3}(\Tr[\XX])^4$}
and $32\Tr[\XX^3]\Tr[\XX]<32\Tr[\XX^2](\Tr[\XX])^2$. Thus, we get
\noindent\resizebox{1\linewidth}{!}{${\rm{Kurt}_{MVG}}(\L)< \frac{\frac{35}{3}\left(4\Tr[\XX^2]+4\Tr[\XX^2](\Tr[\XX])^2+(\Tr[\XX])^4\right)}{ 4\Tr[\XX^2]+4\Tr[\XX^2](\Tr[\XX])^2+(\Tr[\XX])^4   } = \frac{35}{3}$},
which concludes the proof.
\end{proof}

\begin{theorem} 
The skewness of the distribution of $\L$ in the  R1SMG mechanism is $\frac{5\sqrt{3}}{3}$ which is larger than that of the classic  Gaussian mechanism, of  the analytic Gaussian mechanism,  and of the MVG mechanism, i.e., the PDF of $\L$   in the R1SMG mechanism is more right-skewed than that in  the classic  Gaussian, in the analytic Gaussian,   and in the MVG mechanism.
\label{ss}
\end{theorem}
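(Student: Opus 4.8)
The plan is to follow the template of the kurtosis proof (Theorem~\ref{kk}): compute the raw-moment skewness $\mathbb{E}[\L^3]/(\mathbb{E}[\L^2])^{3/2}$ for each mechanism and compare it against $5\sqrt{3}/3$. For R1SMG I would reuse the noise representation $\mathbf{n}=\mathbf{v}\sqrt{\sigma_*}z$ of~(\ref{mapping2}), so that $\L=\mathbf{n}^T\mathbf{n}=\sigma_* z^2$ with $z\sim\mathcal{N}(0,1)$ and hence $\L/\sigma_*\sim\chi^2(1)$. Using the even moments $\mathbb{E}[z^4]=3$ and $\mathbb{E}[z^6]=15$ (equivalently Theorem~\ref{m} with $\A=1$ and $\XX=\sigma_*$) gives $\mathbb{E}[\L^2]=3\sigma_*^2$ and $\mathbb{E}[\L^3]=15\sigma_*^3$, so the skewness is $15/3^{3/2}=5\sqrt{3}/3$, which establishes the stated value.

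For the classic Gaussian mechanism I would reuse the chi-squared moment formula already derived inside the proof of Theorem~\ref{kk}, namely $\mathbb{E}_{\mathcal{G}}[\L^t]=\sigma^{2t}2^t\Gamma(t+\tfrac{M}{2})/\Gamma(\tfrac{M}{2})$. Setting $t=2,3$ and cancelling the common $\sigma$ and $\Gamma(\tfrac{M}{2})$ factors collapses the skewness to the closed form $(M+4)/\sqrt{M(M+2)}$. At $M=1$ this equals $5\sqrt{3}/3$ (the univariate case coincides with R1SMG), and I would finish by checking that it is strictly decreasing for $M\ge1$: writing its square as $1+(6M+16)/(M^2+2M)$, the derivative of the second term has numerator $-6M^2-32M-32<0$. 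Thus the classic (and, identically, the analytic) Gaussian skewness lies strictly below $5\sqrt{3}/3$ for all $M>1$.

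The MVG case carries the real content. With $\XX=\S\otimes\P$ and $p_k=\Tr\XX^k$, Theorem~\ref{m} gives $\mathbb{E}[\L^2]=2p_2+p_1^2$ and $\mathbb{E}[\L^3]=8p_3+6p_1p_2+p_1^3$, so the skewness equals $(8p_3+6p_1p_2+p_1^3)/(2p_2+p_1^2)^{3/2}$. Since this ratio is scale-invariant, I would set $u=p_2/p_1^2$ and $w=p_3/p_1^3$ and rewrite it as $(8w+6u+1)/(2u+1)^{3/2}$. The crude term-by-term bounds used for the kurtosis (replacing each $\Tr\XX^k$ by $(\Tr\XX)^k$) are too weak here and in fact overshoot $5\sqrt{3}/3$, so the right tool is the monotonicity of the $\ell_p$-norms of the nonnegative eigenvalues of $\XX$, which yields the sharp inequalities $p_2\le p_1^2$ (hence $u\le1$) and $p_3\le p_2^{3/2}$ (hence $w\le u^{3/2}$), with equality only in the rank-one case. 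Substituting $w\le u^{3/2}$ bounds the skewness by $G(u)=(8u^{3/2}+6u+1)/(2u+1)^{3/2}$ on $(0,1]$; a direct computation gives $G'(u)=3(2u+1)^{-5/2}(4\sqrt{u}-2u+1)>0$, so $G$ is increasing and attains its maximum $G(1)=5\sqrt{3}/3$ at $u=1$. As $\XX$ is full rank with $MN\ge2$ positive eigenvalues, $u<1$ strictly, and the MVG skewness is strictly smaller than $5\sqrt{3}/3$.

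The step I expect to be the main obstacle is exactly this MVG bound: the bounding strategy inherited from Theorem~\ref{kk} does not close the gap, and one must instead isolate the two sharp power-mean inequalities $p_3\le p_2^{3/2}$ and $p_2\le p_1^2$ and then verify the one-dimensional monotonicity of $G$. It is worth noting that the common extreme point of all three mechanisms is the single-eigenvalue (rank-one) limit $u\to1$, where every skewness equals $5\sqrt{3}/3$; this both explains why the inequalities are tight and confirms that R1SMG sits at the most right-skewed end of the family.
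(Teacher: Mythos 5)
Your R1SMG and classic-Gaussian computations coincide with the paper's: the same reduction $\L=\sigma_* z^2$ gives $\mathbb{E}[\L^3]/(\mathbb{E}[\L^2])^{3/2}=15/3^{3/2}=5\sqrt{3}/3$, and the same chi-squared moments give the Gaussian skewness $(M+4)/\sqrt{M(M+2)}$, which your monotonicity check correctly places strictly below $5\sqrt{3}/3$ for all $M>1$ (the paper asserts this inequality without the check). The genuine divergence is the MVG case, and your diagnosis is right: the paper's own argument there does not close. It bounds the numerator above by $15(\Tr\XX)^3$ and the denominator below by $(3\Tr\XX^2)^{3/2}$, which yields $\frac{5\sqrt{3}}{3}\left((\Tr\XX)^2/\Tr\XX^2\right)^{3/2}$; since $(\Tr\XX)^2\geq \Tr\XX^2$ for a positive semidefinite matrix, this quantity is never smaller than $5\sqrt{3}/3$ and can be as large as $\frac{5\sqrt{3}}{3}(MN)^{3/2}$, so the paper's concluding inequality (which moreover compares against $35/3$ rather than $5\sqrt{3}/3$, apparently carried over from the kurtosis target) does not follow. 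Your repair --- normalizing by $p_1=\Tr\XX$, invoking the power-mean inequalities $p_2\leq p_1^2$ and $p_3\leq p_2^{3/2}$ for the positive eigenvalues of $\XX=\S\otimes\P$, and verifying that $G(u)=(8u^{3/2}+6u+1)/(2u+1)^{3/2}$ increases on $(0,1]$ to $G(1)=5\sqrt{3}/3$ --- is correct (the derivative $G'(u)=3(2u+1)^{-5/2}(4\sqrt{u}-2u+1)$ is indeed positive there), and it buys exactly what the paper's term-by-term bound cannot: tightness occurs only in the rank-one limit $u=1$, which a full-rank $\S\otimes\P$ with $MN\geq 2$ never attains, so the MVG skewness is strictly below $5\sqrt{3}/3$. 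The identical overshoot afflicts the MVG step of the kurtosis proof of Theorem \ref{kk}, and your $\ell_p$-monotonicity strategy is the natural fix there as well.
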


\begin{proof} The proof follows the same procedure as the proof of Theorem \ref{kk}, thus we only show the key steps here.

For the proposed R1SMG, we have ${\rm{skew}_{\mathrm{R1SMG}}}(\L)=\frac{\mathbb{E}[\L^3]}{(\mathbb{E}[\L^2])^{3/2}}=\frac{\mathbb{E}\left[(z\sigma_*z)^3\right]}{\left(\mathbb{E}\left[(z\sigma_*z)^2\right]\right)^{3/2}} =  \frac{   \sigma_*^3 2^3\frac{\Gamma(3+\frac{1}{2})}{\Gamma(\frac{1}{2})}         }{    \left(    \sigma_*^2 2^2\frac{\Gamma(2+\frac{1}{2})}{\Gamma(\frac{1}{2})}      \right)^{3/2}  }    =\frac{5\sqrt{3}}{3}.$

For the classic Gaussian mechanism, one can verify that \\\resizebox{1\linewidth}{!}{${\rm{skew}_{\mathcal{G}}}(\mathcal{L})= \frac{      \sigma_G^6 2^3\frac{\Gamma(3+\frac{M}{2})}{\Gamma(\frac{M}{2})}      }    {    \left(    \sigma_G^4 2^2\frac{\Gamma(2+\frac{M}{2})}{\Gamma(\frac{M}{2})}     \right)^{3/2}      } = \frac{      \frac{\Gamma(3+\frac{M}{2})}{\Gamma(\frac{M}{2})}      }    {    \left(    \frac{\Gamma(2+\frac{M}{2})}{\Gamma(\frac{M}{2})}     \right)^{3/2}      } <\frac{5\sqrt{3}}{3}, \forall M>1$.}

The same analysis holds for the analytic Gaussian mechanism, as it adds i.i.d. Gaussian noise with variance $\sigma_A^2$.

For the MVG mechanism, it is easy to check that\\  ${\rm{skew}_{MVG}}(\L)=\frac{\mathbb{E}[\L^3]}{(\mathbb{E}[\L^2])^{3/2}}   =\frac{\mathbb{E}[(\mathbf{n}^T\mathbf{n})^3]}{(\mathbb{E}[(\mathbf{n}^T\mathbf{n})^2])^{3/2}} = \frac{\big(8\Tr[\XX^3]+6\Tr[\XX^2]\Tr[\XX]+ (\Tr[\XX])^3 \big)}{\big( 2\Tr[\XX^2]+(\Tr[\XX])^2  \big)^{3/2}}$,  $\XX  =\S\otimes\P$. Besides, we have   shown that for any positive integer $k$, $k_1$, and $k_2$, we have  
$(\Tr[\XX])^k  >  \Tr[\XX^k]$ and $\Tr[\XX^{k_1}](\Tr[\XX])^{k_2}>\Tr[\XX^{k_1+k_2}]$.
Thus, one can   check that $({\rm{skew}_{MVG}}(\L))^2 =\frac{\big(8\Tr[\XX^3]+6\Tr[\XX^2]\Tr[\XX]+ (\Tr[\XX])^3 \big)^2}{\big( 2\Tr[\XX^2]+(\Tr[\XX])^2  \big)^{3}}<\frac{\frac{25}{3}\big( 2\Tr[\XX^2]+(\Tr[\XX])^2  \big)^{3}}{\big( 2\Tr[\XX^2]+(\Tr[\XX])^2  \big)^{3}}$, i.e., ${\rm{skew}_{MVG}}(\L)<\frac{5\sqrt{3}}{3}$.
\end{proof}

 \begin{remark}
We can consider the univariate Gaussian with unit covariance, i.e., $\mathcal{N}(\mu,1)$,  as a reference distribution, whose kurtosis value and skewness value are $3$ and $0$, respectively. Then, it suggests that the distribution of $\L$ in the R1SMG mechanism are much more leptokurtic and right-skewed than $\mathcal{N}(\mu,1)$. Moreover, since the noise  used in the classic Gaussian, analytic Gaussian, and MVG mechanisms are characterized by full-rank covariance matrices, their kurtosis values and skewness values asymptotically converge to $3+12/H$ and $\sqrt{8/H}$, respectively \cite{johnson1995continuous} ($H$ is the degree of freedom of the obtained $\L$, and $H=M$ for $f(\bm{x})\in\R^M$). It means that the distributions of $\L$ in the Gaussian mechanism and MVG  are similar to $\mathcal{N}(\mu,1)$ when $M$ is large. 
\label{all_remark}
\end{remark}

In Section \ref{sec:nyc}, by using  2D count query as an example, we will   empirically show that the distribution of $\L$ obtained in the R1SMG mechanism is more   leptokurtic and right-skewed than those of the
classic Gaussian, the analytic Gaussian, and the MVG mechanisms (cf. Figure \ref{fig:utility_cost_L}). In other words, the  R1SMG mechanism can provide differentially private   query results with the highest accuracy stability. 

Based on the above theoretical analysis, we can have the following corollary. 
\begin{corollary}\label{corollary:kkss} 
The R1SMG  mechanism outperforms  the classic Gaussian, the analytic Gaussian,  and the MVG mechanism, because it   boosts the utility of the query results by achieving lower expected accuracy loss and higher accuracy stability. 
\end{corollary}

\section{Experiments}
\label{sec:exp}

In this section, we conduct three case studies to validate the utility boosting achieved by the   R1SMG mechanism, i.e.,  2D count  query, principal component analysis (PCA), and deep learning, all in a differentially private manner. The query results of these case studies are either matrices or tensors, so the R1SMG, classic Gaussian, and analytic Gaussian mechanisms will   first generate noise in vector form, and then reshape the noise into the    forms required by different studies.

\subsection{Case Study   \rom{1}:  Uber Pickup Count Query}
\label{sec:nyc}

In this case study, we utilize the New York City (NYC) Uber pickups dataset \cite{nyc_pickups}, on which we are  interested  in releasing the counts of Uber pickups in different areas  of NYC from ``4/1/2014 00:11:00'' to ``4/3/2014 23:57:00'' in a differentially private manner\footnote{We   choose 3 days instead of more days for a better visualization result.}. The size of the count query $f(\bm{x})$ is determined by the partition areas (defined later) of NYC. Location and trajectory privacy breaches have been reported and investigated in  several research works \cite{liu2016dependence,xiao2015protecting,jiang2013publishing,ou2018releasing}. Thus, it is important to guarantee that the existence or absence of any pickup record is private when sharing the count query.

In particular, we partition the map of NYC   into small areas by evenly dividing  the latitude and longitude into 89 disjoint intervals\footnote{Note that one can also choose arbitrary  partition size in practice \cite{hay2016principled}. Here we just set it to  89 as an   example.}. As a result, the considered  query is  $f(\bm{x})\in\R^{89\times 89}$, whose entries record  the number of Uber pickups in different small areas. The squared sensitivity of this query is 2, because the absence or presence of a specific pickup can change at most 2 entries of $f(\bm{x})$ by 1.  We also consider that the data consumer (i.e., the query result recipient) has the prior knowledge of the valid pickup areas in the resulted $89\times 89$ small areas. Thus, he can perform post-processing on the noisy count query to eliminate the values   in   invalid areas (e.g., it is impossible to have Uber pickups over the Hudson River). We visualize the non-private Uber pickup count query    in Figure \ref{fig:nyc_pickups}(a). We observe   high volumes of pickups around Soho, Fifth avenue, and LaGuardia Airport.

\begin{figure}[htb]
  \begin{center}
   \begin{tabular}{cc}
     \includegraphics[width= .45\columnwidth]{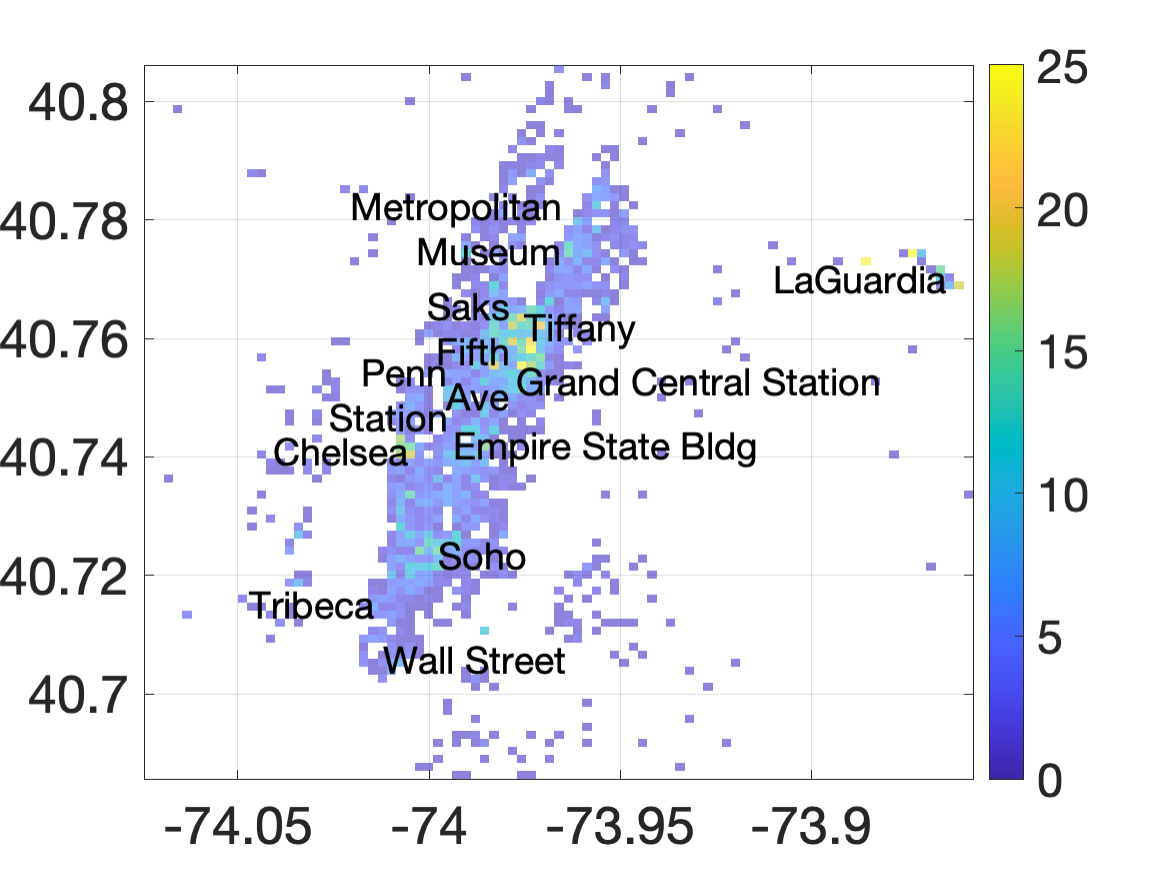}
&   \includegraphics[width= .45\columnwidth]{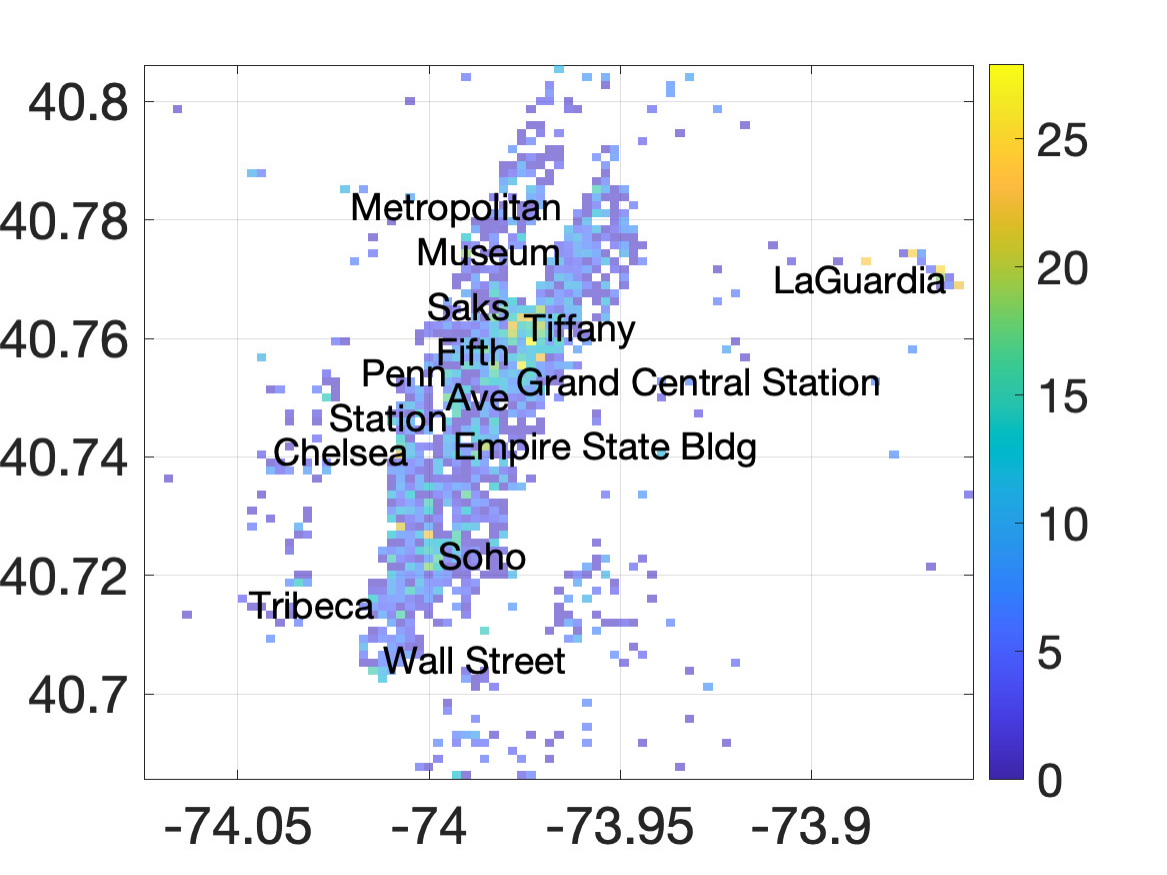}\\
	{\small  (a)   Non-private}&
    {\small  (b)   R1SMG, $\epsilon = 10^{-5}$}\\
 \includegraphics[width= .45\columnwidth]{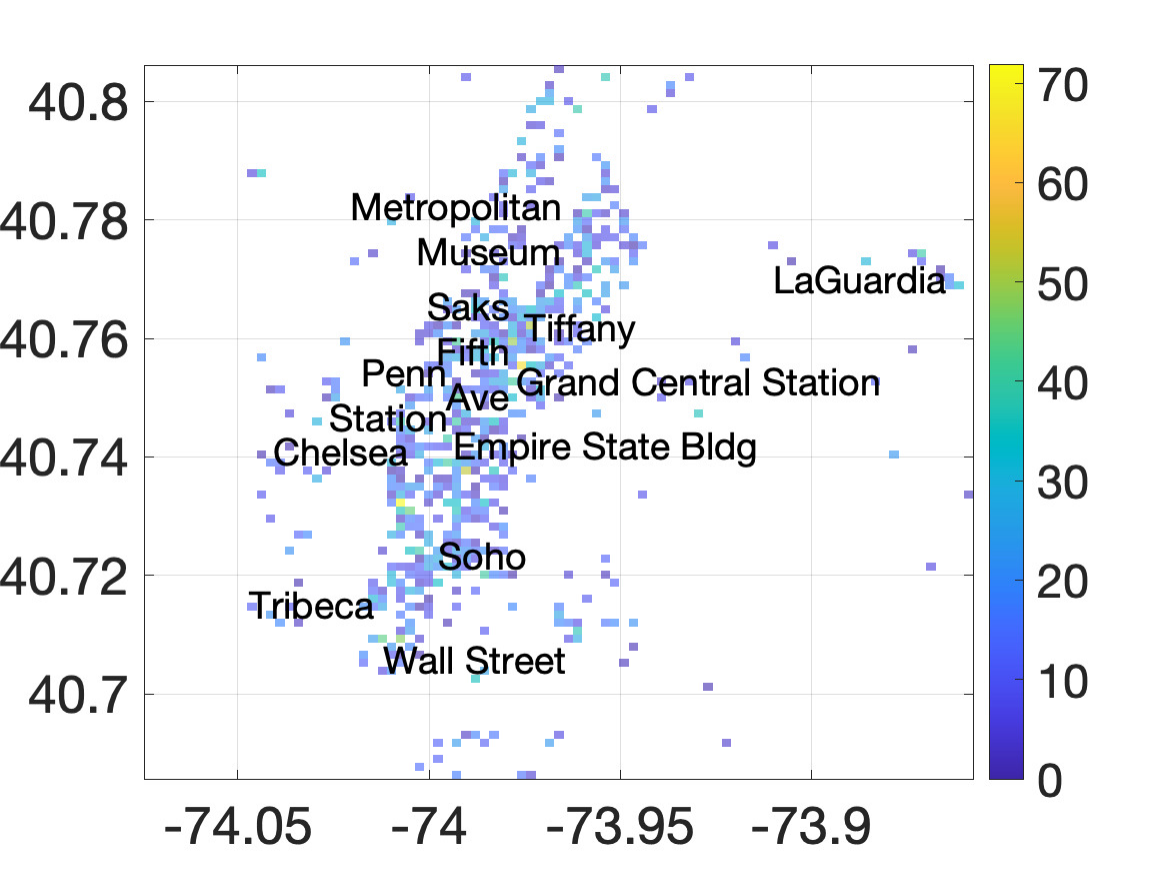}
&\includegraphics[width= .45\columnwidth]{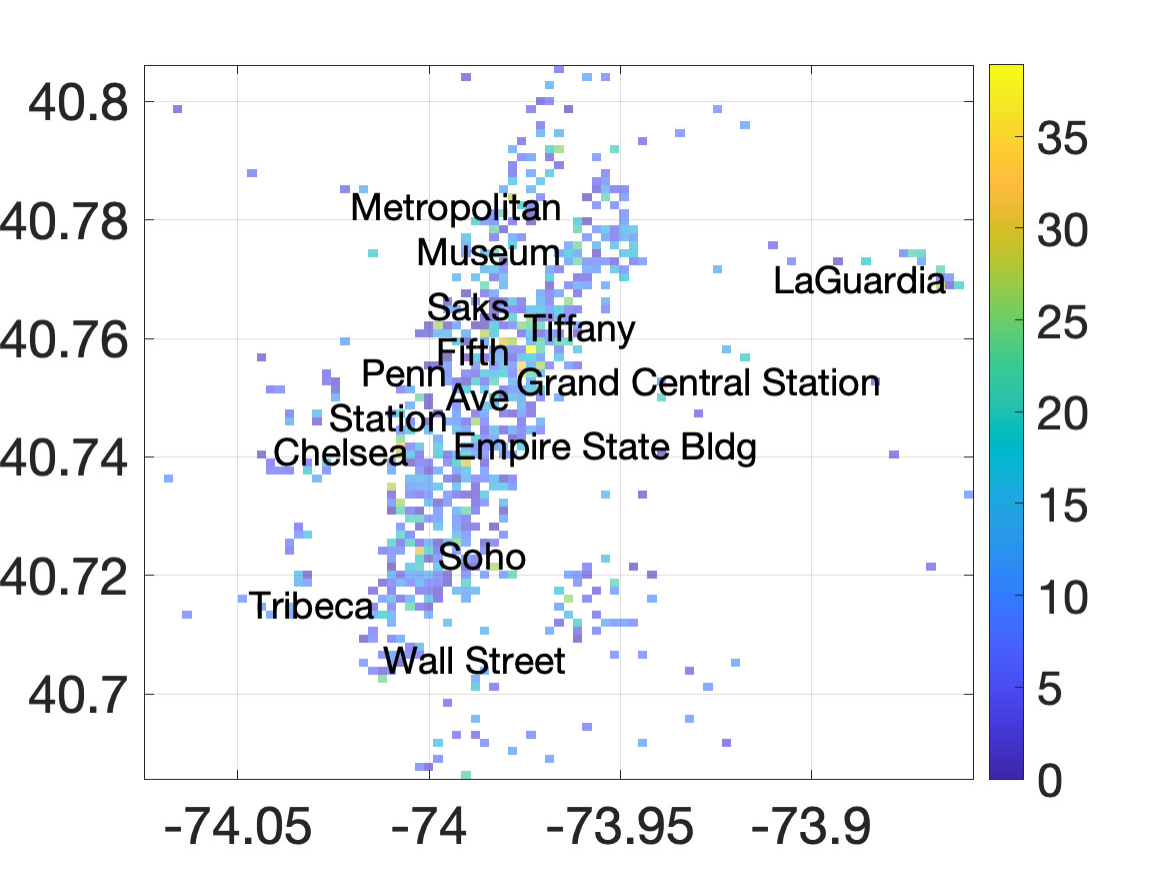}\\
     {\small  (c)    classic Gauss.}&
      {\small  (d)     Analytic   Gauss.}\\
\includegraphics[width= .45\columnwidth]{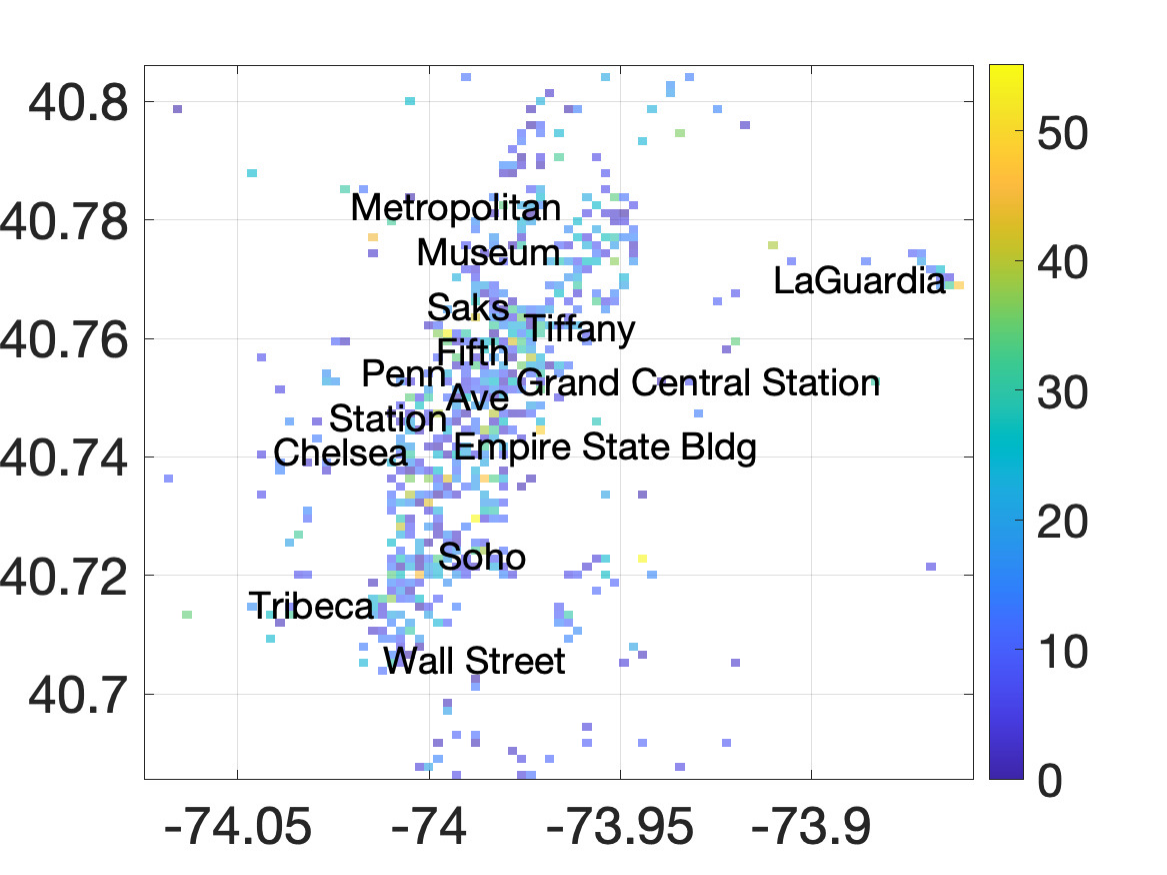}
&   \includegraphics[width=.45\columnwidth]{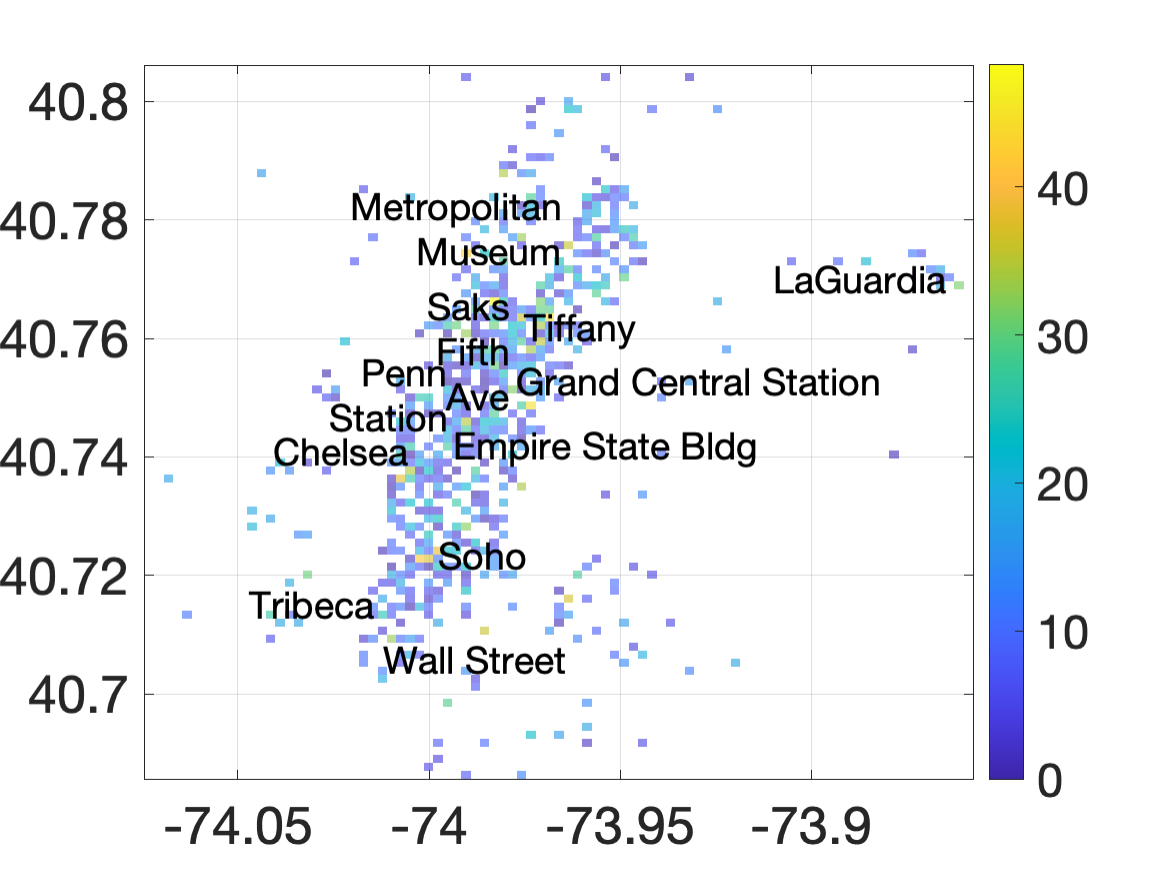}\\
          	{\small  (e)    MVG}&
    {\small  (f)   MGM}\\
\includegraphics[width= .45\columnwidth]{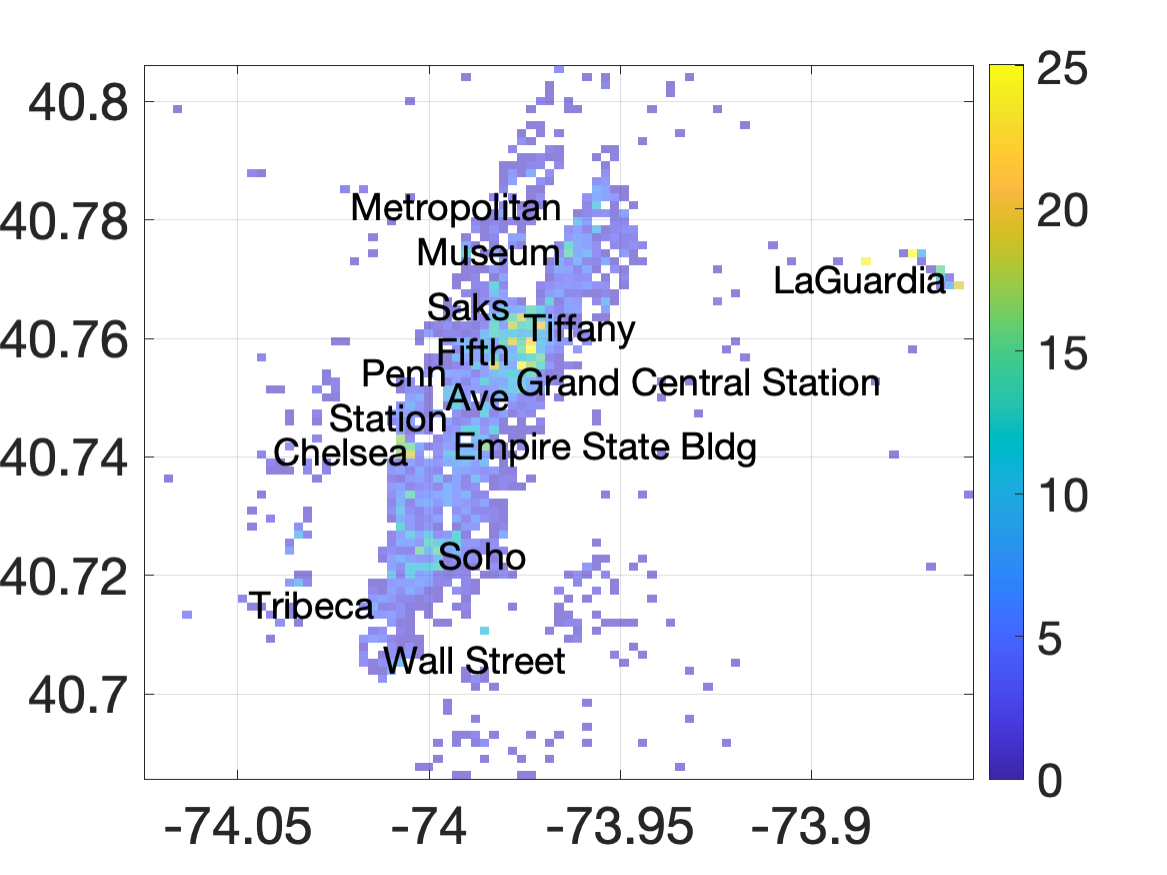}
&\includegraphics[width= .45\columnwidth]{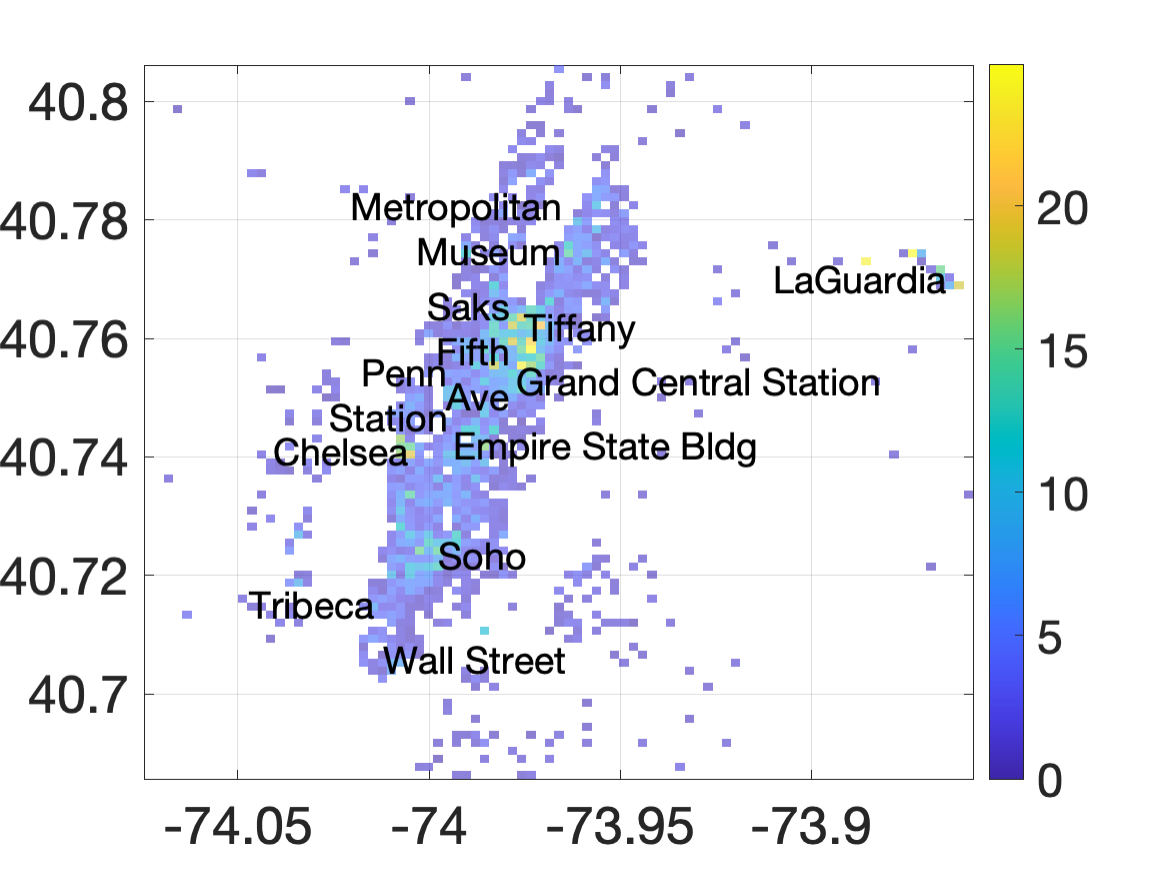}\\
       {\small  (g)     DAWA}&{\small  (h)     $H_b$}
    \end{tabular}
       \end{center}
       \vspace{-4mm}
  \caption{\label{fig:nyc_pickups}Visualization of 
  (a) non-private counts, (b)-(h) are differentially private 2D counts obtained by  the R1SMG,   classic Gaussian,   analytic Gaussian,     MVG,   MGM,    DAWA, and $H_b$ mechanisms, respectively. $\epsilon$ is $10^{-5}$ for the R1SMG mechanism and is 0.5 for the other mechanisms.}
\end{figure}

\textbf{Comparisons with  Mechanisms.} In additional to the $(\epsilon,\delta)$-DP output perturbation mechanisms, i.e., the   classic Gaussian, analytic Gaussian\footnote{\url{https://github.com/BorjaBalle/analytic-gaussian-mechanism}.}, MVG\footnote{\url{https://github.com/inspire-group/MVG-Mechansim}.}, and MGM (a variant of MVG~\cite{yang2021matrix}) mechanisms, we also compare the R1SMG mechanism with the  mechanisms that are specially developed for differentially private 2D count  queries. In particular, they are (i) DAWA \cite{li2014data}, which obtains differentially private count queries    by adding noise that is adapted to both the input data and   the given query set, and (ii) $H_b$ \cite{qardaji2013understanding}, which answers range queries using noisy hierarchies of equi-width histograms\footnote{Source code of both DAWA and $H_b$ are available at \url{https://github.com/dpcomp-org/dpcomp\_core}.}.

\begin{figure*}[htb]
  \begin{center}
   \begin{tabular}{cc}
     \includegraphics[width= .5\textwidth]{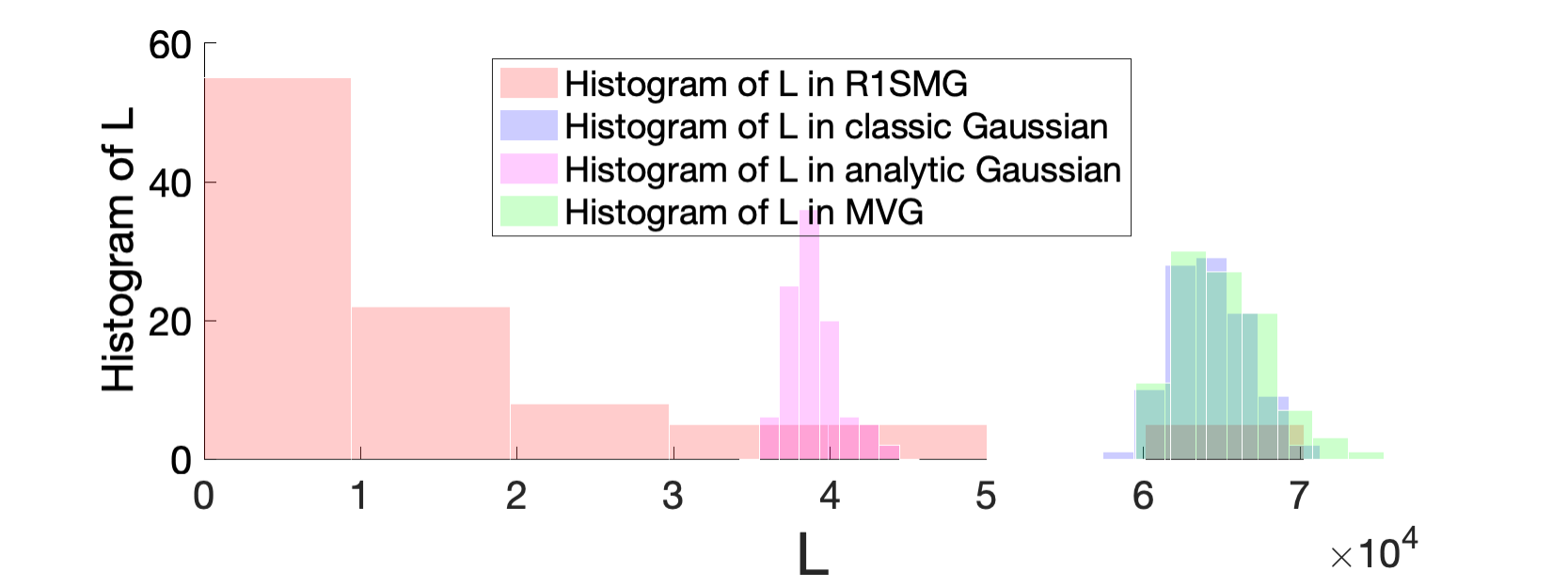}
     &
     \includegraphics[width= .5\textwidth]{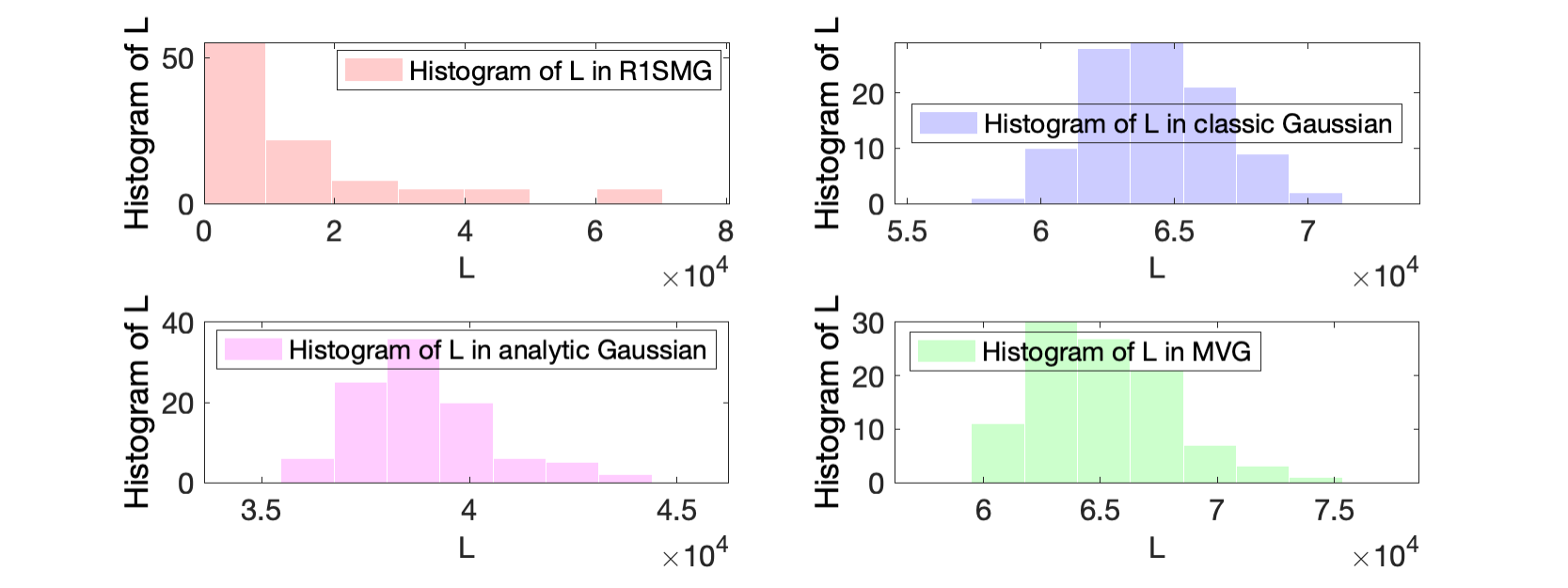} \\
     {(a)  Histogram of  $\mathcal{L}$ for various mechanisms}&
    {(b)  Histogram of  $\mathcal{L}$ (zoomed in)  }\\
 \includegraphics[width= .5\textwidth]
   {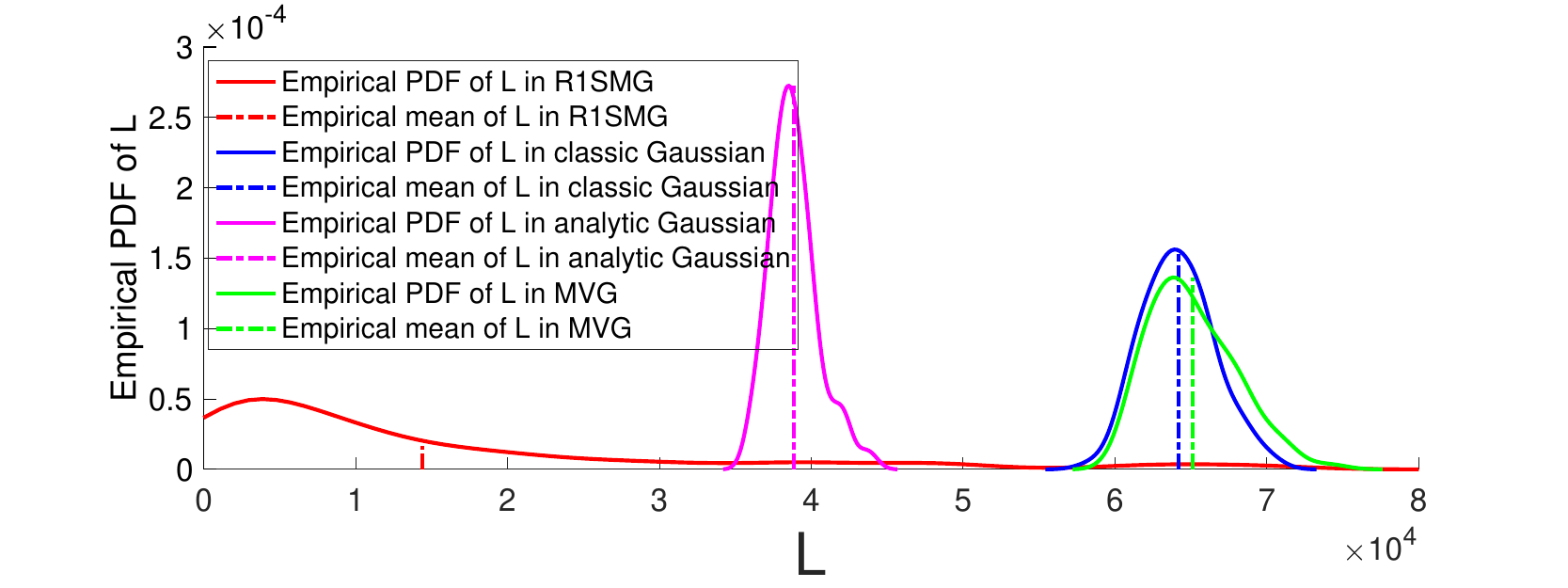}   &
\includegraphics[width= .5\textwidth]{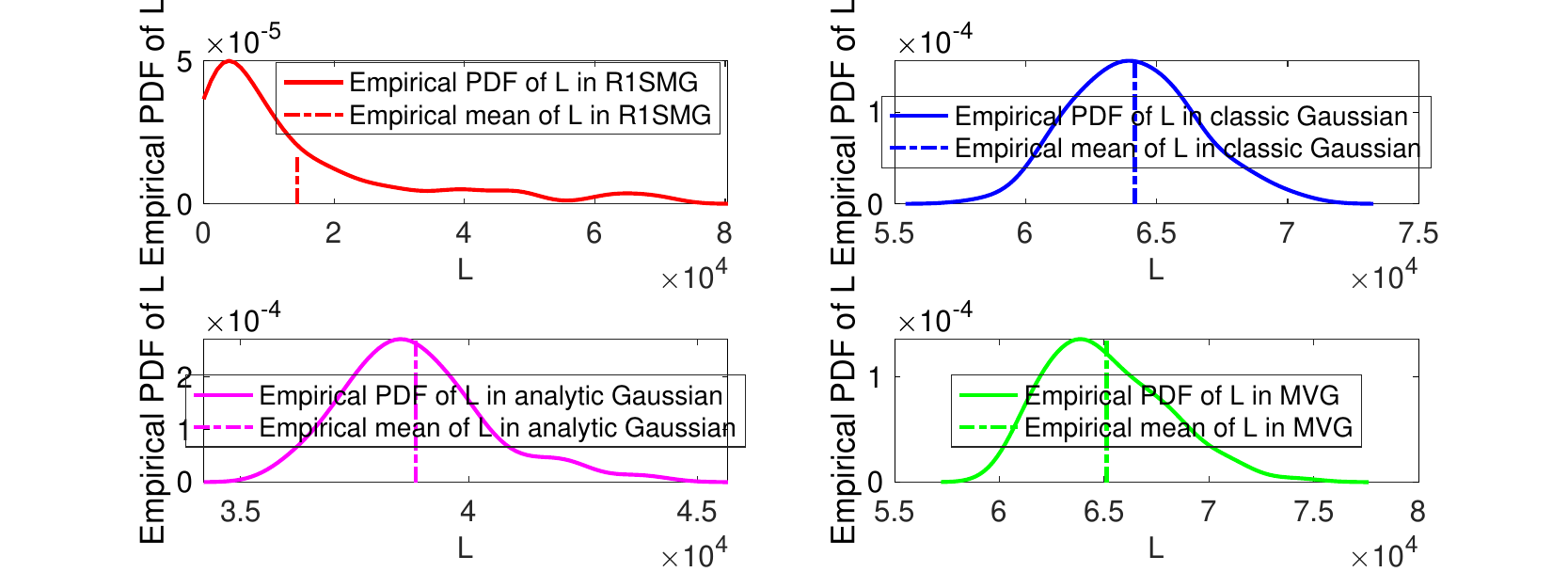}\\
{(c) Empirical PDF and mean of  $\mathcal{L}$ for various mechanisms }
    &  {(d)  Empirical PDF and mean of  $\mathcal{L}$   (zoomed in)}\\
              \end{tabular}
       \end{center} 
       \vspace{-3mm}
  \caption{\label{fig:utility_cost_L}Accuracy loss    introduced by different output perturbation mechanisms when  $\delta = 10^{-7}$, $\epsilon = 10^{-5}$ for the R1SMG mechanism and $\epsilon = 0.5$ for the other mechanisms.}
\end{figure*}

\textbf{Results.} We visualize the differential-privately released count queries obtained by the   R1SMG and the other comparing mechanisms in Figures \ref{fig:nyc_pickups}(b)-(h). The privacy budgets for all the comparing  mechanisms are 
$\epsilon = 0.5$ and  $\delta = 10^{-7}$. For the R1SMG mechanism, we set $\epsilon = 10^{-5}<\frac{0.5}{89^2}$ according to Remark~\ref{epsilon-remark} (more fair comparison results obtained by letting $\epsilon = 10^{-5}$ for the other mechanisms are shown in Figure~\ref{fig:nyc-small-eps} in Appendix~\ref{app:small-eps}, where even more serious performance degradation of the previous mechanisms can be observed).

Clearly, the R1SMG mechanism outperforms all the comparing   output perturbation mechanisms (i.e., Figure  \ref{fig:nyc_pickups}(c)-(f))  even under a very restricted privacy budget, as it preserves   the global and local patterns of the non-private count query in Figure  \ref{fig:nyc_pickups}(a) in the best way. In contrast, the classic Gaussian,   MVG, and MGM  mechanisms greatly compromise the utility of the count query results although their privacy budget is $10^4$ times as large as that of our R1SMG mechanism. For example, in Figures  \ref{fig:nyc_pickups}(c), (e), and (f)  many areas on the Manhattan island have no pickup counts, and   the maximum count is also increased significantly (i.e., higher than 60, 50, and 40, respectively,  compared to the non-private maximum count which is only around 27).  
Although the analytic Gaussian mechanism outperforms the classic Gaussian and MVG mechanisms by calibrating the noise variance exactly, it still introduces noise with higher magnitude than the R1SMG mechanism.  Finally, we observe that the R1SMG mechanism has comparable performance with DAWA and $H_b$ that are developed specially for differentially private range queries.    Note that DAWA has slightly better results  because it is data-dependent (the noise is customized to the input data \cite{li2014data}), and that  $H_b$  adopts ``constrained inference” (an accuracy boosting scheme) to control the mean square error of counting queries at various granularities.   Moreover, not only does the R1SMG mechanism introduces comparable errors with those by   the two task-specific mechanisms under  a much stricter privacy guarantee, but also it is very promising in other real-world tasks because it is not task-dependent.

\textbf{Utility boosting and stability.} 
Next, we   empirically   corroborate that the   R1SMG mechanism boosts the utility of the NYC Uber pickup count query by achieving lower accuracy loss and higher accuracy stability.   Specifically, by setting $\epsilon=10^{-5}$ for the R1SMG mechanism and $\epsilon=0.5$ for the other mechanisms and $\delta = 10^{-7}$,  we repeat the noise generation process  100 times  for the R1SMG, classic Gaussian, analytic Gaussian, and MVG  mechanisms, respectively, and  calculate the accuracy loss (i.e., $\mathcal{L}$, in Definition \ref{l}). In Figure  \ref{fig:utility_cost_L}, we plot the corresponding histograms,   empirical average values of $\mathcal{L}$,  and the empirical PDFs, obtained by different    mechanisms. In particular, Figure  \ref{fig:utility_cost_L}(a) plots the histograms of $\mathcal{L}$ for various mechanisms on the same $x$-axis, and Figure  \ref{fig:utility_cost_L}(b) is the zoomed in histograms for each mechanism. Figure  \ref{fig:utility_cost_L}(c) demonstrates the empirical PDF and  mean of $\mathcal{L}$ for each  mechanism  on the same $x$-axis, and Figure  \ref{fig:utility_cost_L}(d) shows the corresponding zoomed in plots.

Figure \ref{fig:utility_cost_L}   shows that the empirical mean accuracy loss obtained by using the R1SMG mechanism is   less than those obtained by  the other  mechanisms, in spite of the fact that the privacy budget is smaller than $1/10^4$ of the others.  This means that under  very restricted  privacy guarantee,  the R1SMG mechanism  is able to   introduce  additive noise with small magnitude (squared Frobenius norm), which lead to lower empirical accuracy loss.  In particular, $\L$ obtained by the R1SMG mechanism is around $1.5\times 10^4$  when $\epsilon = 10^{-5}$, whereas, $\L$ obtained by the other mechanisms are around  $5\times 10^4$  even with $\epsilon=0.5$.  We can also find that the histogram and the empirical PDF of $\mathcal{L}$ obtained by the R1SMG mechanism are much more leptokurtic and right-skewed than those achieved by  the others. Particularly, in Figure~\ref{fig:utility_cost_L}(d), compared with those obtained by the other mechanisms that have bell shapes, the empirical PDF of $\L$ achieved by the R1SMG mechanism has a longer and fatter tail, a higher and sharper central peak, and the density is more densely distributed on the left side.  The results corroborate the analysis in Remark \ref{all_remark}. Thus,  under very strict  $(\epsilon,\delta)$-DP requirement, it is still  less likely for the R1SMG mechanism to generate  additive   noise with large magnitude.

\subsection{Case Study   \rom{2}: PCA}

In this case study, we explore differentially private principal component analysis (PCA) using the Swarm Behaviour dataset \cite{swarm}. It   contains 24016 records of   flocking behaviour (i.e., the way that groups of birds, insects, fish or other animals, move close to each other). Each record has 2400 features characterizing   a flocking observation, e.g., radius of separation of groups of insects, moving direction, moving velocity, etc. Each data record is   assigned a binary label, and ``1" represents ``flocking behavior" and ``0" represents ``not flocking". The values of all features are normalized between $-1$ and $1$. Similar to~\cite{Chanyaswad2018}, we   consider the query $f(\bm{x})$ as the covariance matrix of this dataset, i.e., $f(\bm{x}) = {D^TD}/{24016}\in\R^{2400\times 2400}$, where  $D\in\R^{24016\times 2400}$ denotes the considered Swarm Behaviour data.\footnote{Since covariance matrices are symmetric, the data consumer will perform post processing $\big(\mathcal{M}(f(\bm{x}))+\mathcal{M}(f(\bm{x}))^T\big)/2$ to obtain symmetric  result. The privacy guarantee  naturally holds due to the post-processing immunity  \cite{dwork2014algorithmic}.} Then, the $l_2$ sensitivity of the query function is   $s_2(f)  = \sup_{D_1, D_2}\frac{||{D_1}(i)^T{D_1}(i)-{D_2}(i)^T{D_2}(i)||_F}{24016} = \frac{\sqrt{2\times 2400^2}}{24016}=\frac{2400\sqrt{2}}{24016}$ (cf. Appendix B of~\cite{Chanyaswad2018}).

We conduct singular value decomposition on the queried result (i.e., the noisy empirical covariance matrix) to extract the  components of the dataset.  The considered evaluation metric is the total deviation which is defined as $\Delta = \textstyle\sum_{i=1}^{2041}\Delta_i = \textstyle\sum_{i=1}^{2041} |\lambda_i-\widetilde{\mathbf{v}_i}^T  \mathbf{C}  \widetilde{\mathbf{v}_i}|$, 
where $\mathbf{C}$ is the   ground-truth (non-private) empirical  covariance matrix, $\lambda_i$ is the $i$th eigenvalue of $\mathbf{C}$, and $\widetilde{\mathbf{v}_i}$ is the $i$th   differentially private  component (eigenvector). Specifically, $\Delta_i$ quantifies the deviation of the variance from $\lambda_i$ in the direction of the $i$th component. The smaller   $\Delta$ is, the higher   utility the obtained components have,  and  we have $\Delta = 0$   for the non-private baseline. 

Since   PCA   usually serves as a precursor to classification task, we also    evaluate the utility of the queried data by  projecting the dataset onto the subspace spanned by the first 20 PCs obtained from various noisy covariance matrices, and then  measure  the classification
accuracy using the projected data.  The higher the classification  accuracy, the higher the utility of the perturbed  covariance matrix. In the experiments, we use $60\%$ and $40\%$ of the dataset for training and testing, respectively, and the classification algorithm is the linear SVM. Note that the accuracy of the non-private baseline (i.e., classification using the original dataset project onto the first 20 PCs of the original covariance matrix) is $96.89\%$.

\textbf{Comparisons with other Mechanisms.} 
In addition to the    output perturbation mechanisms, we also consider the principal components obtained from 3 task-specific differentially private algorithms   designed specially  for PCA, i.e., (i) PPCA \cite{chaudhuri2012near}, which generates privacy-preserving components by sampling orthonormal matrices from  the matrix Bingham distribution parametermized by  the  original non-private  empirical covariance matrix scaled by the privacy budget,  (ii) MOD-SULQ \cite{chaudhuri2012near}, which directly perturbs the covariance matrix using   calibrated Gaussian noise, and (iii) the Wishart mechanism \cite{jiang2016wishart}, which perturbs the empirical covariance matrices using noise attributed to the Wishart distribution.

\begin{figure}[htb]
  \begin{center}
   \begin{tabular}{cc}
     \includegraphics[width= .47\columnwidth]{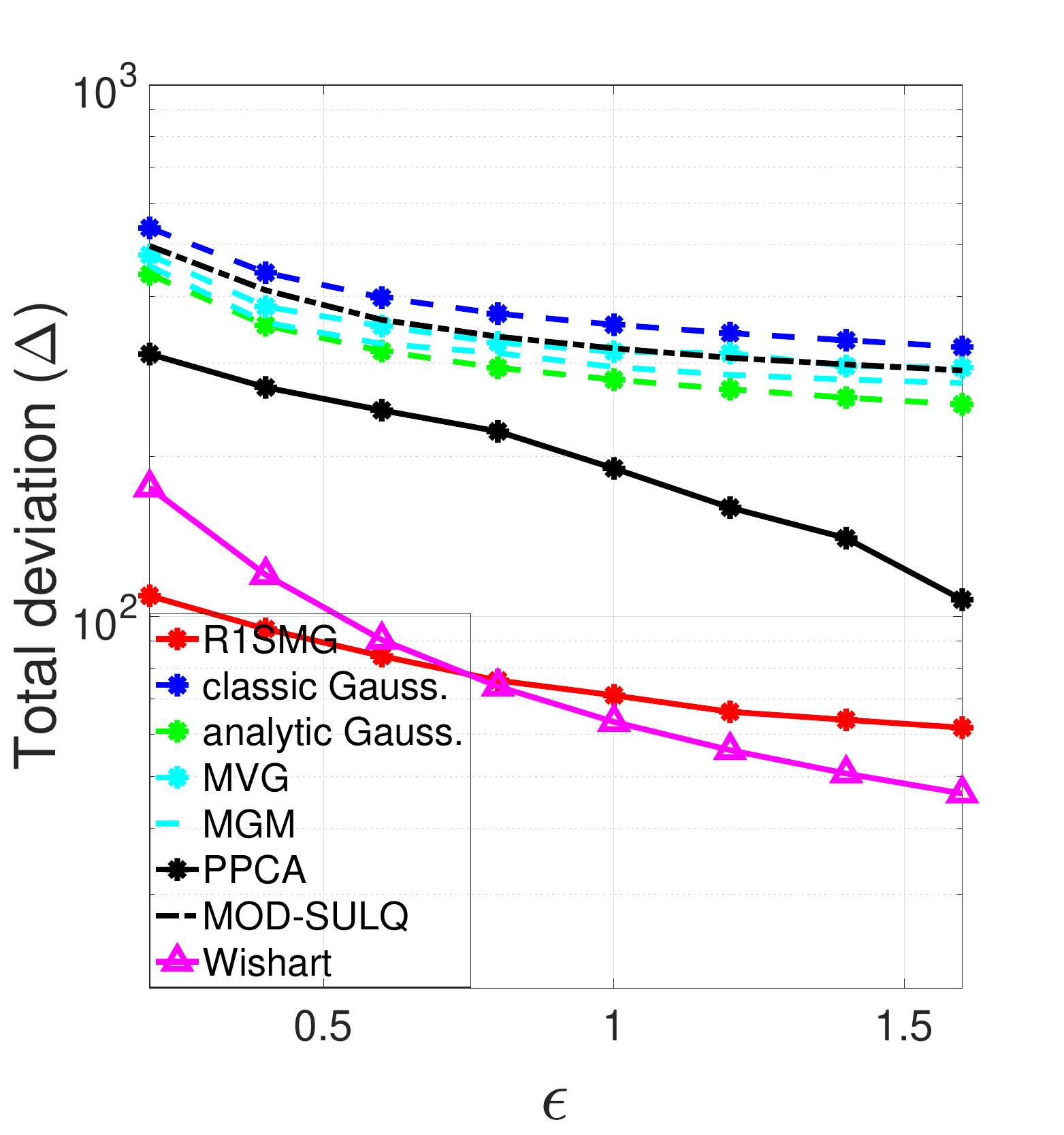}
&   \includegraphics[width= .47\columnwidth]{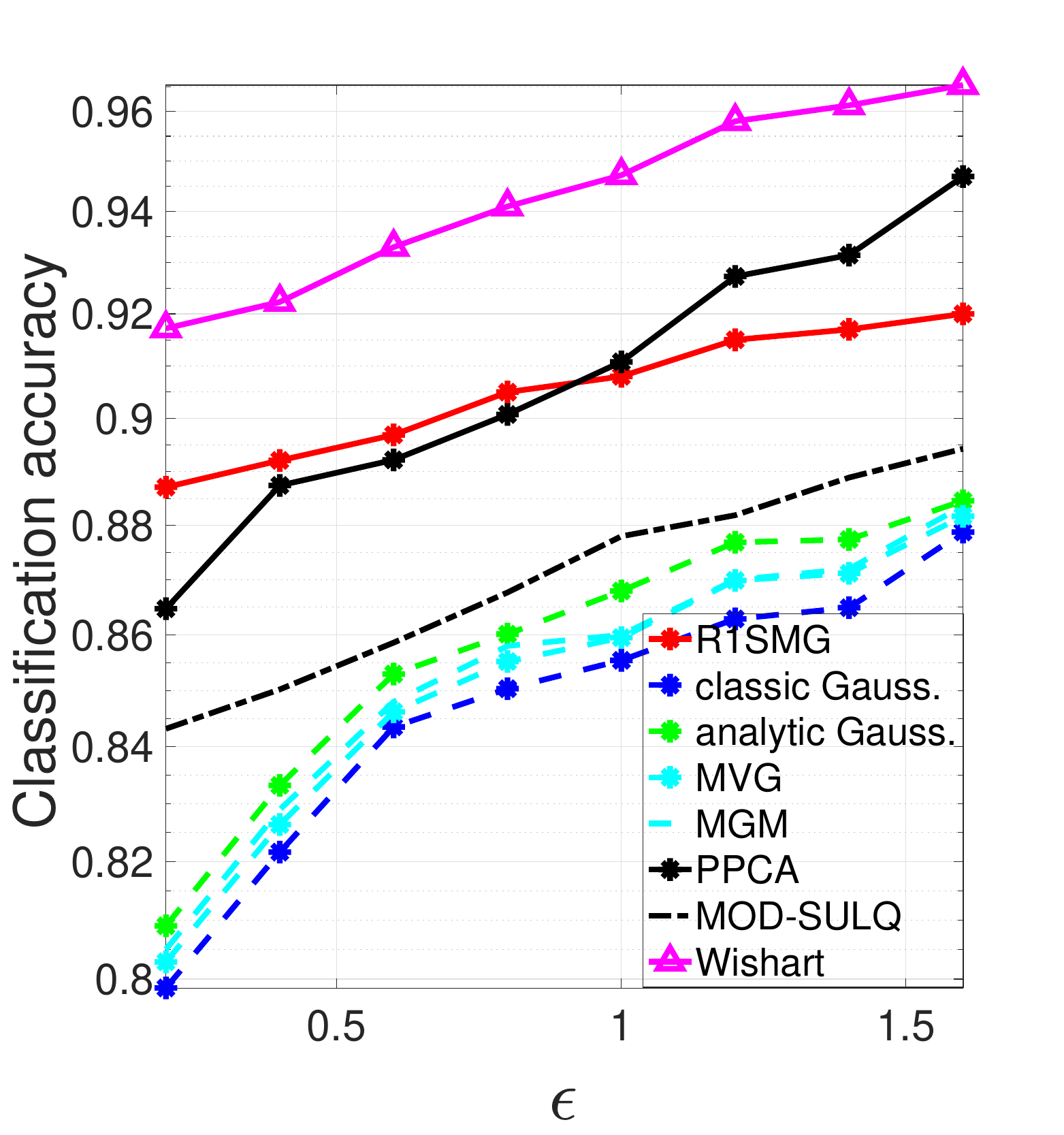}\\
	{\small  (a)   Deviation versus $\epsilon$.}&
    {\small  (b)   Accuracy versus $\epsilon$.} 
    \end{tabular}
       \end{center}
       \vspace{-4mm}
  \caption{\label{fig:pca_exp}Utility   of the noisy covariance matrices obtained by various mechanisms.    (a) The total deviation.    (b) The classification accuracy measured on the projected testing dataset.  $\epsilon$ used in the R1SMG mechanism is only $1/10^7$ of the others.
  }
\end{figure}

\textbf{Results.} In this experiment, we vary  $\epsilon$  from $0.2$ to $1.6$ and set  $\delta = 10^{-7}$ for all the comparing mechanisms and let the privacy budget of the R1SMG be only $1/10^7$ of the others according to the discussion in Remark~\ref{epsilon-remark} (more fair comparison results obtained by letting all  the other mechanisms use the same $\epsilon$ as R1SMG are discussed in  Appendix~\ref{app:small-eps}).

Figure \ref{fig:pca_exp}(a) and (b) plot the total derivation ($\Delta$) and classification accuracy on the projected testing data  obtained from all mechanisms. Obviously, the  R1SMG mechanism has high utility although its the privacy budget is extremely limited, e.g., $\epsilon<10^{-7}$, which  is also close to the non-private baselines (i.e.,   $\Delta=0$ and $96.89\%$ accuracy). 
This is because the R1SMG mechanism  introduces perturbation noise of much smaller magnitudes   in spite of very strict privacy guarantee, compared with the other mechanisms. Taking the MVG mechanism for example,   according to Theorem \ref{mvg_privacyguarantee}, even when $\epsilon = 1.6$ and $\delta=10^{-7}$,    MVG    requires $||\sigma(\S^{-1})||_2||\sigma(\P^{-1})||_2\leq  0.011$ to achieve the desired privacy guarantee. This  means that the covariance matrices will have very large singular values, which implies large expected accuracy loss (as discussed in Section \ref{sec:curse}) and causes the original data be overwhelmed by the additive   noise.  In fact, according to  \cite{Chanyaswad2018}, for  the MVG  mechanism to have a decent performance on differentially private PCA, it requires $\epsilon\geq 5$.  Moreover, in this case study, the  R1SMG mechanism  even  has similar utility performance of that of the task-specific mechanisms including  PPCA, MOD-SULQ, and Wishart. In other words, under very limited $\epsilon$, the R1SMG mechanism not only achieves very small $\Delta$, but also gives high classification accuracy. This  suggests that R1SMG is very promising in boosting the accuracy of high-dimensional  differentially private  query even if $\epsilon\ll 1$. Note that in this   study,  the R1SMG mechanism also achieves the most stable accuracy, i.e., the accuracy loss introduced by it has leptokurtic and right-skewed  histogram and   empirical PDF, but the others still have bell-shaped empirical PDFs (similar to Figure \ref{fig:utility_cost_L}). We omit the plots due to space limit.

\subsection{Case Study   \rom{3}: Deep Learning with DP}

In this case study, we  conduct preliminary validation to assess the  feasibility of replacing the classic Gaussian mechanism with the   R1SMG   in the DPSGD (differentially private stochastic gradient descent) optimization framework \cite{abadi2016deep}, and demonstrate that     R1SMG   can boost the utility (testing accuracy) of a differentially-private deep learning model and improve its  training efficiency by reducing the number of epochs.  Please note that this case study is not intended to be thorough or comprehensive, as differentially-private deep learning constitutes a distinct research direction in its own right.

As a proof of concept, we consider  the MNIST \cite{lecun1998gradient} and CIFAR-10 \cite{krizhevsky2014cifar} dataset, and compare our R1SMG-based DPSGD with the seminal work of DPSGD in deep learning (Abadi et al. CCS'16~\cite{abadi2016deep}).  For a given deep neural network, Abadi et al.  introduce calibrated i.i.d. Gaussian noise  ($\mathcal{N}(\mathbf{0},\sigma^2C^2\mathbf{I})$) to the  gradients computed on grouped batches of training data, where $C$ is a norm clipping parameter to control the sensitivity (Algorithm 1 in \cite{abadi2016deep}).  They also developed the privacy accountant scheme to tightly bound the cumulative privacy loss for the entire training process. In particular, the total privacy budget $(\epsilon,\delta)$ is determined by noise level $\sigma$, sampling ratio $q$, and the number of epochs $E$ (so number of   gradient descent steps is $T = E/q$).

In the experiment, we give our R1SMG-based DPSGD the same total privacy budget $(\epsilon,\delta)$, $C$, $q$, and $E$ as those of DPSGA in \cite{abadi2016deep}.  Then, we amortize $(\epsilon,\delta)$ to all   gradient descent steps in our R1SMG-based DPSGD, i.e., each step is assigned with $(\epsilon_0,\delta_0)$.  Although $\epsilon_0$ might go beyond the upper bound discussed in Remark~\ref{epsilon-remark}, it makes the comparison with DPSGD clear and straightforward. To be more specific, $(\epsilon_0,\delta_0)$  is obtained by solving the following equations formulated using  the strong composition theorem \cite{dwork2010boosting} considering $\frac{E}{q}$ repetitive executions of gradient descent.
\begin{equation}\label{eq:strong-composition}
\begin{cases}
&\epsilon = \sqrt{2\frac{E}{q}\ln\left(\frac{1}{\delta'}\right)} (q\epsilon_0) + \frac{E}{q} (q\epsilon_0) \left(e^{(q\epsilon_0)}-1\right)\\
&\delta = \frac{E}{q}(q\delta_0)+\delta'
\end{cases},
\end{equation}
where $q\epsilon_0$ and $q\delta_0$ are due to  privacy amplification via sampling \cite{beimel2014bounds,kasiviswanathan2011can} (see discussion on page 3, right column of \cite{abadi2016deep}). For example, based on the TensorFlow tutorial for DPSGD on MNIST \cite{tensorflowprivacy}, when $C=1$, $\sigma=1.1$, $q = \frac{256}{60000}$, and $E = 30$, it leads to $\epsilon = 1.795$. Then, setting $\delta = 10^{-5}$, $\delta' = 10^{-10}$ and solving  (\ref{eq:strong-composition}), we have $(\epsilon_0,\delta_0) = (0.71,3.33\times10^{-7})$.  When $C = 1$, $\sigma = 0.5$, $q = \frac{256}{60000}$, and $E = 30$, it leads  to $(16.983, 10^{-5})$-DP for   DPSGD, which makes  
$(\epsilon_0,\delta_0) = (5.42,3.33\times10^{-7})$ for each gradient descent step in our R1SMG-based DPSGD.

Note that using strong composition theorem to assign amortized privacy budget to each step of our R1SMG-based DPSGD   \uline{gives more favor to the classic Gaussian mechanism based DPSGD in the comparison.} This is because strong composition is much looser than the privacy accountant scheme used in     DPSGD (empirically validated by  Figure 2 of \cite{abadi2016deep}). Thus, the amortized  
$(\epsilon_0,\delta_0)$ can be limited.  Developing a tight composition framework for the R1SMG mechanism in SGD-based learning will be a separate research topic.   

Even though the original DPSGD is given more favor in the comparison,  our R1SMG-based DPSGD still outperforms it by achieving higher training and testing accuracy. We first show the comparison results on MNIST using  the above parameters setups  in Figure \ref{fig:cnn-plot}. In particular, Figure \ref{fig:cnn-plot}(a) and (b) show the training and testing accuracy on MNIST for the first 30 epochs achieved by various approaches when the entire learning process is $(1.795,10^{-5})$-DP and $(16.983,10^{-5})$-DP (parameters discussed above), respectively. The blue lines and dashed lines in Figure \ref{fig:cnn-plot} are the result of non-private SGD (the one uses original gradients calculated from the data).

\begin{figure}[htb]
  \begin{center}
   \begin{tabular}{cc}
     \includegraphics[width= 0.47\columnwidth]{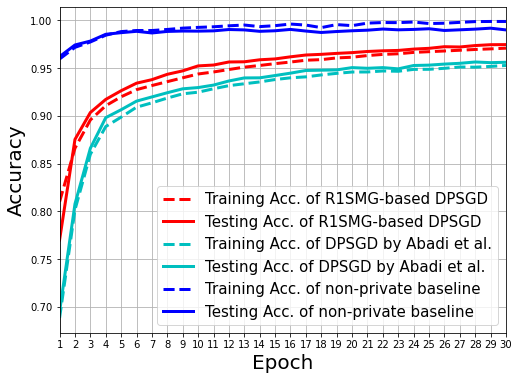} &
      \includegraphics[width= 0.47\columnwidth]{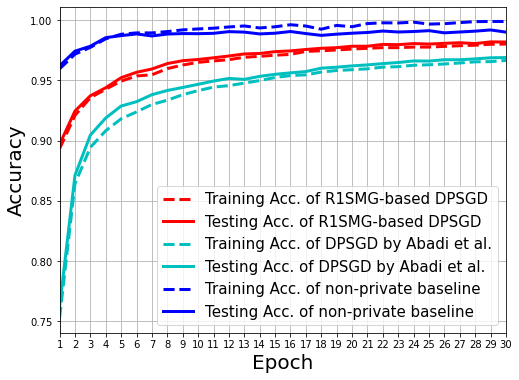} \\
     {(a)   The entire   process} 
&
     {(b)  The entire   process}\\
     {is $(1.795,10^{-5})$-DP.} & {is $(16.983,10^{-5})$-DP.}\\
              \end{tabular}
       \end{center} 
  \caption{\label{fig:cnn-plot}Training and testing acc. on MNIST.}
\end{figure}

Compared with   DPSGD, R1SMG-based DPSGD   achieves higher training and testing accuracy in just a few epochs for the MNIST dataset. For example, when the total privacy budget  is $(16.983,10^{-5})$, our method achieves 95\% accuracy on the testing dataset using 5 epochs, whereas   DPSGD requires 14 epochs. Our R1SMG-based DPSGD also leads to performance that is closer to the non-private baseline, because even though the amortized privacy budget for each R1SMG-based DPSGD step is limited, the magnitude of the perturbation noise introduced by the R1SMG mechanism is much less than that required by the classic Gaussian in DPSGD.

\begin{figure}[htb]
  \begin{center}
   \begin{tabular}{cc}
     \includegraphics[width= 0.47\columnwidth]{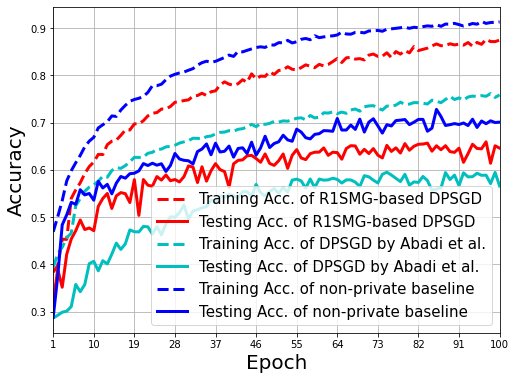} &
      \includegraphics[width= 0.47\columnwidth]{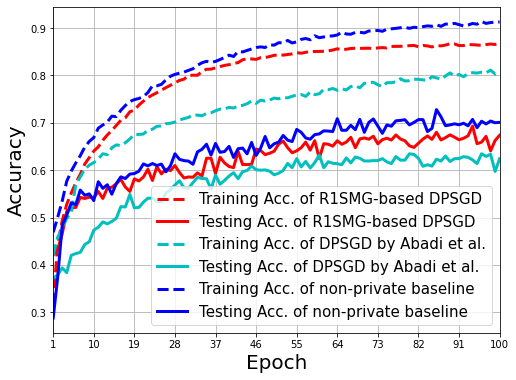} \\
     {(a)   The entire   process} 
&
     {(b)  The entire   process}\\
     {is $(2.673,10^{-5})$-DP.} & {is $(20.713,10^{-5})$-DP.}\\
              \end{tabular}
       \end{center} 
  \caption{\label{fig:cnn-plot-cifar}Training and testing acc. on CIFAR-10.}
\end{figure}

Then, we conduct the experiments on CIFAR-10 dataset with the same parameter setups while using 100 epochs, i.e., $E = 100$. We also adopt  the   network architecture from the TensorFlow CNN  tutorial \cite{tensorflowprivacy-cifar10}. The privacy budget for each step of R1SMG-based DPSGD, i.e., $(\epsilon_0,\delta_0)$ is also decided by solving (\ref{eq:strong-composition}) numerically. In particular, when $\sigma=1.1$ and $\sigma = 0.5$, the entire process of DPSGD is $(2.673,10^{-5})$-DP and $(20.713,10^{-5})$-DP, respectively, which make $(\epsilon_0,\delta_0)$ become  $(0.54,10^{-7})$ and  $(3.52,10^{-7})$, respectively, for R1SMG-based DPSGD. The comparison results are shown in Figure \ref{fig:cnn-plot-cifar}. 
Clearly, our R1SMG-based DPSGD still outperforms DPSGD, i.e., both   training and testing accuracy achieved by R1SMG-based DPSGD are closer to the non-private baselines compared with those achieved by DPSGD. In particular, R1SMG-based DPSGD improve the training efficiency by achieving a higher training accuracy (e.g., $>80\%$) in much fewer epochs compared with the original DPSGD.

\section{Related Work}
\label{rw}
Many works have attempted  to improve the classic Gaussian mechanism. 
Here, we review some representative ones. 

The analytic Gaussian mechanism proposed by Balle et al. \cite{balle2018improving} (see Theorem \ref{def_AG}) improves the classic Gaussian mechanism   by calibrating the variance of the Gaussian noise directly using the Gaussian cumulative
density function instead of the tail bound approximation, and develop an analytical solution to the variance given specific choices $\epsilon$ and $\delta$. Although this mechanism can reduce the magnitude of the noise when perturbing a scalar-valued quantity, it still suffers the curse of full-rank covariance matrices when applied to perturb  high-dimensional  query result (as  discussed in Section \ref{sec:curse}).

Zhao et al. \cite{zhao2019reviewing} investigate the classic Gaussian mechanism under high privacy budgets, i.e., $\epsilon>1$, and derive a closed-form upper bounds for the noise variance used in the analytic Gaussian mechanism. Their mechanism can achieve utility higher than the  classic Gaussian mechanisms, but still lower than the  analytic Gaussian mechanism. 

Chanyaswad et al. \cite{Chanyaswad2018} (see Definition \ref{def_MVG} and Theorem \ref{mvg_privacyguarantee}) propose the MVG mechanism and develop the technique of directional noise to restrict the impact of the  perturbation noise on the utility of a matrix-valued query function. However, to obtain  the directional noise, one either needs a domain expert to determine the principal components of the queried data or has to calculate them via principal component analysis which consumes some privacy budget.

Some other variants of the classic Gaussian mechanism includes the Generalized Gaussian mechanism \cite{liu2018generalized}, which investigates noise calibration under the   general $l_p$ sensitivity, and the discrete Gaussian mechanism \cite{NEURIPS2020_b53b3a3d}, which uses noise attributed to discrete Gaussian distribution to protect the privacy of query results in a discrete domain.

Our proposed  R1SMG mechanism differs with the above  mechanisms, as we explicitly require the covariance matrix of the additive noise to be rank-1. Thus, the R1SMG mechanism lifts the identified curse  and  boosts the utility of   query results leading to guaranteed DP at much lower cost of accuracy loss.

\section{Conclusions}
\label{sec:con}

In this paper, we  developed a novel DP mechanism, i.e., R1SMG, to boost the utility and stability of differentially-private  query results.  In particular, we   first  identify the curse of full-rank covariance matrices in all existing Gaussian-based DP mechanisms. Then, we lift the curse by   developing   R1SMG which perturbs the query results using noise that follows a 
singular multivariate Gaussian distribution with a random rank-1 covariance matrix. We rigorously analyze the privacy guarantee of  the R1SMG mechanism, and theoretically demonstrate that it can achieve much lower accuracy loss, particularly, on a lower order of magnitude by at least $M$ or $MN$,    and much higher accuracy stability as well, than   the classic Gaussian, analytic Gaussian, and the MVG mechanisms.

\section*{Acknowledgement}
This work was supported in part by the National Science Foundation under Grants EEC-2133630 and CNS-2125460.

\bibliographystyle{plain}
\bibliography{bibfile1.bib}

\begin{thebibliography}{10}

\bibitem{swarm}
Swarm behaviour dataset.
\newblock \url{https://archive.ics.uci.edu/ml/datasets/Swarm+Behaviour}.

\bibitem{nyc_pickups}
Uber pickups in new york city.
\newblock
  \url{https://www.kaggle.com/fivethirtyeight/uber-pickups-in-new-york-city}.

\bibitem{abadi2016deep}
Martin Abadi, Andy Chu, Ian Goodfellow, H~Brendan McMahan, Ilya Mironov, Kunal
  Talwar, and Li~Zhang.
\newblock Deep learning with differential privacy.
\newblock In {\em Proceedings of the 2016 ACM SIGSAC conference on computer and
  communications security}, pages 308--318, 2016.

\bibitem{ali1980characterization}
Mir~M Ali.
\newblock Characterization of the normal distribution among the continuous
  symmetric spherical class.
\newblock {\em Journal of the Royal Statistical Society: Series B
  (Methodological)}, 42(2):162--164, 1980.

\bibitem{balle2018improving}
Borja Balle and Yu-Xiang Wang.
\newblock Improving the gaussian mechanism for differential privacy: Analytical
  calibration and optimal denoising.
\newblock In {\em International Conference on Machine Learning}, pages
  394--403. PMLR, 2018.

\bibitem{beimel2014bounds}
Amos Beimel, Hai Brenner, Shiva~Prasad Kasiviswanathan, and Kobbi Nissim.
\newblock Bounds on the sample complexity for private learning and private data
  release.
\newblock {\em Machine learning}, 94:401--437, 2014.

\bibitem{cai2013distributions}
Tony Cai, Jianqing Fan, and Tiefeng Jiang.
\newblock Distributions of angles in random packing on spheres.
\newblock {\em Journal of Machine Learning Research}, 14(21):1837--1864, 2013.

\bibitem{camano2006statistics}
Gabriel Camano-Garcia.
\newblock {\em Statistics on Stiefel manifolds}.
\newblock Iowa State University, 2006.

\bibitem{NEURIPS2020_b53b3a3d}
Cl\'{e}ment~L Canonne, Gautam Kamath, and Thomas Steinke.
\newblock The discrete gaussian for differential privacy.
\newblock In H.~Larochelle, M.~Ranzato, R.~Hadsell, M.~F. Balcan, and H.~Lin,
  editors, {\em Advances in Neural Information Processing Systems}, volume~33,
  pages 15676--15688. Curran Associates, Inc., 2020.

\bibitem{casella2002statistical}
George Casella and Roger~L Berger.
\newblock {\em Statistical inference}, volume~2.
\newblock Duxbury Pacific Grove, CA, 2002.

\bibitem{Chanyaswad2018}
Thee Chanyaswad, Alex Dytso, H.~Vincent Poor, and Prateek Mittal.
\newblock Mvg mechanism: Differential privacy under matrix-valued query.
\newblock In {\em Proceedings of the 2018 ACM SIGSAC Conference on Computer and
  Communications Security}, CCS '18, pages 230--246, New York, NY, USA, 2018.
  ACM.

\bibitem{chaudhuri2012near}
Kamalika Chaudhuri, Anand Sarwate, and Kaushik Sinha.
\newblock Near-optimal differentially private principal components.
\newblock {\em Advances in Neural Information Processing Systems}, 25:989--997,
  2012.

\bibitem{chikuse2012statistics}
Yasuko Chikuse.
\newblock {\em Statistics on special manifolds}, volume 174.
\newblock Springer Science \& Business Media, 2012.

\bibitem{cramer1999mathematical}
Harald Cram{\'e}r.
\newblock {\em Mathematical methods of statistics}, volume~43.
\newblock Princeton university press, 1999.

\bibitem{dwork2006our}
Cynthia Dwork, Krishnaram Kenthapadi, Frank McSherry, Ilya Mironov, and Moni
  Naor.
\newblock Our data, ourselves: Privacy via distributed noise generation.
\newblock In Serge Vaudenay, editor, {\em Advances in Cryptology - EUROCRYPT
  2006}, pages 486--503, Berlin, Heidelberg, 2006. Springer Berlin Heidelberg.

\bibitem{dwork2006calibrating}
Cynthia Dwork, Frank McSherry, Kobbi Nissim, and Adam Smith.
\newblock Calibrating noise to sensitivity in private data analysis.
\newblock In {\em Theory of Cryptography: Third Theory of Cryptography
  Conference, TCC 2006, New York, NY, USA, March 4-7, 2006. Proceedings 3},
  pages 265--284. Springer, 2006.

\bibitem{dwork2014algorithmic}
Cynthia Dwork and Aaron Roth.
\newblock The algorithmic foundations of differential privacy.
\newblock {\em Foundations and Trends{\textregistered} in Theoretical Computer
  Science}, 9(3--4):211--407, 2014.

\bibitem{dwork2016concentrated}
Cynthia Dwork and Guy~N Rothblum.
\newblock Concentrated differential privacy.
\newblock {\em arXiv preprint arXiv:1603.01887}, 2016.

\bibitem{dwork2010boosting}
Cynthia Dwork, Guy~N Rothblum, and Salil Vadhan.
\newblock Boosting and differential privacy.
\newblock In {\em 2010 IEEE 51st Annual Symposium on Foundations of Computer
  Science}, pages 51--60. IEEE, 2010.

\bibitem{fremlin2000measure}
David~H Fremlin.
\newblock {\em Measure theory}, volume~4.
\newblock Torres Fremlin, 2000.

\bibitem{gupta2018matrix}
Arjun~K Gupta and Daya~K Nagar.
\newblock {\em Matrix variate distributions}.
\newblock Chapman and Hall/CRC, 2018.

\bibitem{hardt2010geometry}
Moritz Hardt and Kunal Talwar.
\newblock On the geometry of differential privacy.
\newblock In {\em Proceedings of the forty-second ACM symposium on Theory of
  computing}, pages 705--714, 2010.

\bibitem{hay2016principled}
Michael Hay, Ashwin Machanavajjhala, Gerome Miklau, Yan Chen, and Dan Zhang.
\newblock Principled evaluation of differentially private algorithms using
  dpbench.
\newblock In {\em Proceedings of the 2016 International Conference on
  Management of Data}, pages 139--154. ACM, 2016.

\bibitem{jiang2013publishing}
Kaifeng Jiang, Dongxu Shao, St{\'e}phane Bressan, Thomas Kister, and Kian-Lee
  Tan.
\newblock Publishing trajectories with differential privacy guarantees.
\newblock In {\em Proceedings of the 25th International Conference on
  Scientific and Statistical Database Management}, pages 1--12, 2013.

\bibitem{jiang2016wishart}
Wuxuan Jiang, Cong Xie, and Zhihua Zhang.
\newblock Wishart mechanism for differentially private principal components
  analysis.
\newblock In {\em Proceedings of the AAAI Conference on Artificial
  Intelligence}, volume~30, 2016.

\bibitem{johnson1995continuous}
Norman~L Johnson, Samuel Kotz, and Narayanaswamy Balakrishnan.
\newblock {\em Continuous univariate distributions, volume 2}, volume 289.
\newblock John wiley \& sons, 1995.

\bibitem{kasiviswanathan2011can}
Shiva~Prasad Kasiviswanathan, Homin~K Lee, Kobbi Nissim, Sofya Raskhodnikova,
  and Adam Smith.
\newblock What can we learn privately?
\newblock {\em SIAM Journal on Computing}, 40(3):793--826, 2011.

\bibitem{khatri1968some}
Chinubhai~G Khatri.
\newblock Some results for the singular normal multivariate regression models.
\newblock {\em Sankhy{\=a}: The Indian Journal of Statistics, Series A}, pages
  267--280, 1968.

\bibitem{krizhevsky2014cifar}
Alex Krizhevsky, Vinod Nair, and Geoffrey Hinton.
\newblock The cifar-10 dataset.
\newblock {\em online: http://www. cs. toronto. edu/kriz/cifar. html}, 55(5),
  2014.

\bibitem{kwong1996singular}
Koon-Shing Kwong and Boris Iglewicz.
\newblock On singular multivariate normal distribution and its applications.
\newblock {\em Computational statistics and data analysis}, 22(3):271--285,
  1996.

\bibitem{lecun1998gradient}
Yann LeCun, L{\'e}on Bottou, Yoshua Bengio, and Patrick Haffner.
\newblock Gradient-based learning applied to document recognition.
\newblock {\em Proceedings of the IEEE}, 86(11):2278--2324, 1998.

\bibitem{li2014data}
Chao Li, Michael Hay, Gerome Miklau, and Yue Wang.
\newblock A data-and workload-aware algorithm for range queries under
  differential privacy.
\newblock {\em Proceedings of the VLDB Endowment}, 7(5), 2014.

\bibitem{liu2016dependence}
Changchang Liu, Supriyo Chakraborty, and Prateek Mittal.
\newblock Dependence makes you vulnberable: Differential privacy under
  dependent tuples.
\newblock In {\em NDSS}, volume~16, pages 21--24, 2016.

\bibitem{liu2018generalized}
Fang Liu.
\newblock Generalized gaussian mechanism for differential privacy.
\newblock {\em IEEE Transactions on Knowledge and Data Engineering},
  31(4):747--756, 2018.

\bibitem{nikolov2013geometry}
Aleksandar Nikolov, Kunal Talwar, and Li~Zhang.
\newblock The geometry of differential privacy: the sparse and approximate
  cases.
\newblock In {\em Proceedings of the forty-fifth annual ACM symposium on Theory
  of computing}, pages 351--360, 2013.

\bibitem{ou2018releasing}
Lu~Ou, Zheng Qin, Shaolin Liao, Yuan Hong, and Xiaohua Jia.
\newblock Releasing correlated trajectories: Towards high utility and optimal
  differential privacy.
\newblock {\em IEEE Transactions on Dependable and Secure Computing},
  17(5):1109--1123, 2018.

\bibitem{tensorflowprivacy}
TensorFlow Privacy.
\newblock {TensorFlow Privacy: MNIST DP-SGD Tutorial}.
\newblock
  \url{https://github.com/tensorflow/privacy/blob/master/tutorials/mnist_dpsgd_tutorial.py}.

\bibitem{provost1992quadratic}
Serge~B Provost and AM~Mathai.
\newblock {\em Quadratic forms in random variables: theory and applications}.
\newblock M. Dekker, 1992.

\bibitem{qardaji2013understanding}
Wahbeh Qardaji, Weining Yang, and Ninghui Li.
\newblock Understanding hierarchical methods for differentially private
  histograms.
\newblock {\em Proceedings of the VLDB Endowment}, 6(14):1954--1965, 2013.

\bibitem{rao1973linear}
Calyampudi~Radhakrishna Rao and Mathematischer Statistiker.
\newblock {\em Linear statistical inference and its applications}, volume~2.
\newblock Wiley New York, 1973.

\bibitem{roger1994topics}
Horn Roger and R~Johnson Charles.
\newblock Topics in matrix analysis, 1994.

\bibitem{schacke2004kronecker}
Kathrin Schacke.
\newblock On the kronecker product.
\newblock {\em Master's thesis, University of Waterloo}, 2004.

\bibitem{srivastava2009introduction}
Muni~Shanker Srivastava and CG~Khatri.
\newblock An introduction to multivariate statistics.
\newblock 2009.

\bibitem{sun2019relationship}
Meng Sun and Wee~Peng Tay.
\newblock On the relationship between inference and data privacy in
  decentralized iot networks.
\newblock {\em IEEE Transactions on Information Forensics and Security},
  15:852--866, 2019.

\bibitem{tensorflowprivacy-cifar10}
TensorFlow.
\newblock {Convolutional Neural Network (CNN)}.
\newblock \url{https://www.tensorflow.org/tutorials/images/cnn}.

\bibitem{westfall2014kurtosis}
Peter~H Westfall.
\newblock Kurtosis as peakedness, 1905--2014. rip.
\newblock {\em The American Statistician}, 68(3):191--195, 2014.

\bibitem{xiao2015protecting}
Yonghui Xiao and Li~Xiong.
\newblock Protecting locations with differential privacy under temporal
  correlations.
\newblock In {\em Proceedings of the 22nd ACM SIGSAC Conference on Computer and
  Communications Security}, pages 1298--1309, 2015.

\bibitem{yang2021matrix}
Jungang Yang, Liyao Xiang, Jiahao Yu, Xinbing Wang, Bin Guo, Zhetao Li, and
  Baochun Li.
\newblock Matrix gaussian mechanisms for differentially-private learning.
\newblock {\em IEEE Transactions on Mobile Computing}, 2021.

\bibitem{zhao2019reviewing}
Jun Zhao, Teng Wang, Tao Bai, Kwok-Yan Lam, Zhiying Xu, Shuyu Shi, Xuebin Ren,
  Xinyu Yang, Yang Liu, and Han Yu.
\newblock Reviewing and improving the gaussian mechanism for differential
  privacy.
\newblock {\em arXiv preprint arXiv:1911.12060}, 2019.

\end{thebibliography}

\appendix

\section{Proof of (e) in Equation (\ref{eq:mvg_order})}\label{app:HM-GM}

Define $\sigma(\S) \defeq   [a_1, a_2,\ldots, a_M]$ and $\sigma(\P) \defeq  [b_1, b_2,\ldots, b_N]$. Since $\S, \P\in\mathbb{PD}$ and they are both symmetric, we have $a_m>0,\forall m\in[1,M]$ and $b_n>0,\forall n\in[1,N]$. Then, to prove (e) in   (\ref{eq:mvg_order}), we essentially need to show that
\begin{equation}\label{eq:needprove}
\begin{aligned}
\sqrt{\sum_{m=1}^Ma_m^2}\times \sqrt{\sum_{n=1}^Nb_n^2}\geq \frac{M}{\sqrt{\sum_{m=1}^M\frac{1}{a_m^2}}} \times \frac{N}{\sqrt{\sum_{n=1}^M\frac{1}{b_n^2}}}.
\end{aligned}
\end{equation}

Apply   harmonic mean-geometric mean inequality to $a_1^2, a_2^2, \ldots, a_M^2$, we have 

\begin{equation}\label{eq:p1}
    \frac{a_1^2+ a_2^2+ \ldots+ a_M^2}{M} \geq \left(a_1^2\cdot a_2^2\cdot \ldots\cdot a_M^2   \right)^\frac{1}{M}.
\end{equation}
Similarly, for $\frac{1}{a_1^2}, \frac{1}{a_2^2}, \ldots, \frac{1}{a_M^2}$, we have
\begin{equation}\label{eq:p2}
    \frac{\frac{1}{a_1^2}+ \frac{1}{a_2^2}+ \ldots+ \frac{1}{a_M^2}}{M}\geq \left(\frac{1}{a_1^2}\cdot \frac{1}{a_2^2}\cdot \ldots\cdot \frac{1}{a_M^2}  \right)^\frac{1}{M}.
\end{equation}
Multiplying (\ref{eq:p1}) and (\ref{eq:p2}) gives 
\begin{equation*}
    \begin{aligned}
 &\frac{a_1^2+ a_2^2+ \ldots+ a_M^2}{M}     \times \frac{\frac{1}{a_1^2}+ \frac{1}{a_2^2}+ \ldots+ \frac{1}{a_M^2}}{M}\\
\geq  &\left(a_1^2\cdot a_2^2\cdot \ldots\cdot a_M^2   \right)^\frac{1}{M} \left(\frac{1}{a_1^2}\cdot \frac{1}{a_2^2}\cdot \ldots\cdot \frac{1}{a_M^2}  \right)^\frac{1}{M} = 1.
    \end{aligned}
\end{equation*}
Taking square root of the above, we have
    \begin{equation*}
  \frac{\sqrt{\sum_{m=1}^Ma_m^2}}{\sqrt{M}}\times \frac{\sqrt{\sum_{m=1}^M\frac{1}{a_m^2}}}{\sqrt{M}}\geq 1 
 \Leftrightarrow \sqrt{\sum_{m=1}^Ma_m^2} \geq  \frac{M}{\sqrt{\sum_{m=1}^M\frac{1}{a_m^2}}}.
\end{equation*}

By applying the same procedure on $ [b_1, b_2,\ldots, b_N]$, we have $\sqrt{\sum_{n=1}^Nb_n^2} \geq  \frac{N}{\sqrt{\sum_{n=1}^N\frac{1}{b_n^2}}}$. Thus, (\ref{eq:needprove}) is proved, and the step   (e) in   (\ref{eq:mvg_order}) follows.

\section{$\mathbb{E}[\L]$ versus $M$ for Various Mechanisms}\label{sec:versus_M}

Next, we  empirically  investigate the accuracy loss of the query results $f(\bm{x})\in\R^M$ when $M$   increases. In Figure \ref{fig:scale_study}, by assuming $\Delta_2f=1$, varying  $M$ from $10^2$ to $10^{6}$, and fixing $\delta = 10^{-10}$, we plot the expected accuracy loss introduced by the     R1SMG, classic Gaussian, and analytic Gaussian mechanisms when $\epsilon\in\{10^{-7},10^{-8},10^{-9}\}$. Note that we do not include MVG  mechanism  in the comparison, because (i) it  requires the prior knowledge of $\gamma = \sup_{\bm{x}}||f(\bm{x})||_F$, which depends on specific datasets and  applications (e.g., see Theorem \ref{mvg_privacyguarantee}), and (ii) it will introduce more noise than the analytic Gaussian mechanism. 
\begin{figure}[htb]
 \begin{center}
     \includegraphics[width=  0.9\columnwidth]
     {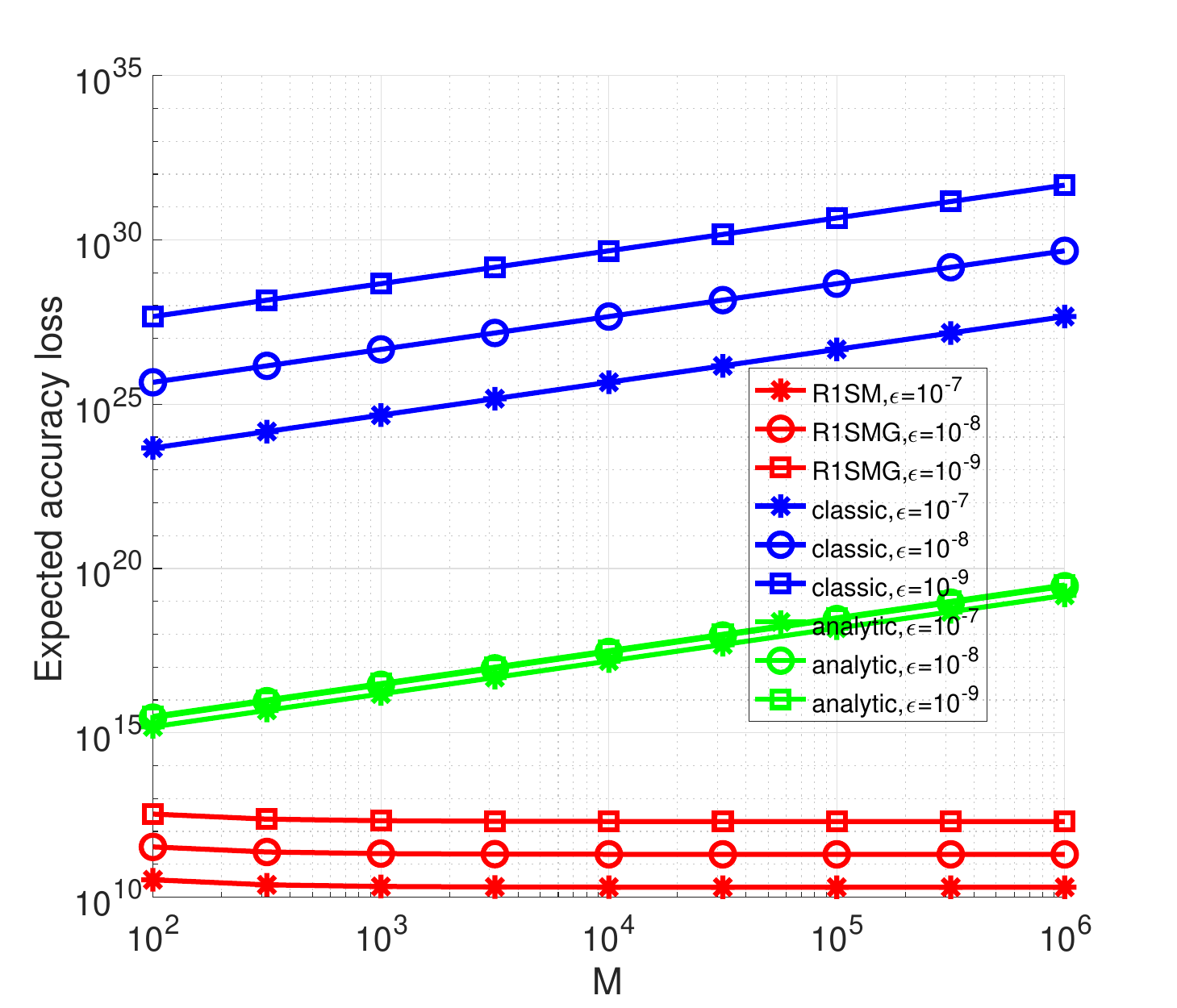}\\
       \end{center}
       \vspace{-3mm}
  \caption{\label{fig:scale_study}Expected accuracy loss of the R1SMG, classic Gaussian, and analytic Gaussian mechanisms under increasing query size  when $\epsilon\in\{10^{-7},10^{-8},10^{-9}\}$ and $\delta = 10^{-10}$.}
\end{figure}

From Figure \ref{fig:scale_study}, we observe that, for both classic and analytic Gaussian mechanisms, the expected accuracy loss (i.e., $\Tr[\sigma^2\mathbf{I}_{M\times M}]$ and $\Tr[\sigma_A^2\mathbf{I}_{M\times M}]$) scales linearly with $M$, which  validates our theoretical findings in Section \ref{sec:curse}. In contrast, as $M$ increases, the expected accuracy loss  incurred  by the R1SMG mechanism (i.e., $\Tr[\bm{\Pi}]=\sigma_*$) decreases and converges to $\frac{2(\Delta_2f)^2}{\epsilon}$. 
Thus, the   R1SMG mechanism not only breaks the curse of full-rank covariance matrices, but also significantly improves the accuracy (utility) of the perturbed $f(\bm{x})$, as it is able to achieve $(\epsilon,\delta)$-DP by using much weaker noise even when   the dimension of $f(\bm{x})$ increases.

\section{Potential Improvement of the MVG}\label{app:mvg-improve} 
When studying the MVG mechanism, we observe one potential solution to improve its privacy guarantee. 
In particular, one intermediate step in the proof of the MVG mechanism requires deriving an upper bound for 
\begin{small}
\begin{equation}\label{eq:mvg-tech}
\begin{aligned}
        \clubsuit = &\Tr\Big[\P^{-1}\mathbf{Y}^T\S^{-1}\Delta+\P^{-1}\Delta^T\S^{-1}\mathbf{Y}\\
        &+\P^{-1}f(D_2)^T\S^{-1}f(D_2)-\P^{-1}f(D_1)^T\S^{-1}f(D_1)\Big],
\end{aligned}
\end{equation}
\end{small}\ignorespacesafterend
where $\mathbf{Y}$ stands for the output of the MVG mechanism and $\Delta = f(D_1)-f(D_2)$ (see \cite{Chanyaswad2018} page 244, right column). Due to the negative sign in  the fourth term, the authors simply bound its absolute value, i.e., $\big|\Tr\big[\P^{-1}f(D_1)^T\S^{-1}f(D_1)\big]\big|$, which results in a very loose upper bound related to  the quadratic form in the   Frobenius norm of the matrix query result (i.e, $\gamma^2$ in Theorem \ref{mvg_privacyguarantee}), and further compromises the utility.

In fact, the original proof of the MVG mechanism can be improved if we apply the transformation 
\begin{small}
    \begin{equation*}
    \begin{aligned}
        &\Tr\big[\P^{-1}f(D_2)^T\S^{-1}f(D_2)-\P^{-1}f(D_1)^T\S^{-1}f(D_1)\big]=\\
        & \Tr\big[\P^{-1} \big(f(D_2)^T\S^{-1}\big(f(D_2)-f(D_1)\big) +\big(f(D_2)-f(D_1)\big)^T\S^{-1}f(D_1)\big)\big],
    \end{aligned}
\end{equation*}
\end{small}\ignorespacesafterend
and  rewrite $\clubsuit$ as
\begin{small}
\begin{equation*}
    \clubsuit =\Tr\Big[\P^{-1}\big(\mathbf{Y}-f(D_2)\big)^T\S^{-1}\Delta+\P^{-1}\Delta^T\S^{-1}\big(\mathbf{Y}-f(D_1)\big)\Big].
\end{equation*}
\end{small}\ignorespacesafterend
Then, we can bound    $\clubsuit$ using singular values, and hence the  quadratic form in the Frobenius norm of the   query result  can be completely removed. We do not follow the above   steps to further develop an improved version of the MVG mechanism, because the improved version is still     a ``victim'' of the identified  curse of full-rank covariance matrices.

\section{Performance of the Other DP Mechanisms under Extremely Strict Privacy Guarantee  }\label{app:small-eps}

\noindent\textbf{Case Study I.} In Figure~\ref{fig:nyc-small-eps}, we visualize the differentially private pickup counts achieved by the other mechanisms when they are under the same strict privacy budget as that of the R1SMG mechanism, i.e.,  $\epsilon = 10^{-5}$. Clearly, under extreme restricted privacy guarantee, these mechanisms achieve very poor utility as the number of pickups are on the order of $10^5$.

\noindent\textbf{Case Study II.} In Figure~\ref{fig:pca-small-eps-others}, we show the total deviation caused by the other mechanisms when $\epsilon$ varies from $0.2\times 10^{-7}$ to $1.6\times 10^{-7}$ (i.e., the same privacy budget as that of the R1SMG mechanism). Under   extremely strict privacy requirement, we can clearly observe that all the other DP mechanisms also achieve very poor utility, because $\Delta$ is on   the order of $10^3$ while $\Delta$ is around 50  for the R1SMG mechanism as shown in Figure~\ref{fig:pca_exp}(a). Besides, when $\epsilon$ varies from $0.2\times 10^{-7}$ to $1.6\times 10^{-7}$, the classification accuracy  achieved by all the other DP  mechanisms
are around 0.5 (close to random guessing). Thus, we omit the plot here.

\begin{figure}[htb]
  \begin{center}
   \begin{tabular}{cc}
 \includegraphics[width= .45\columnwidth]{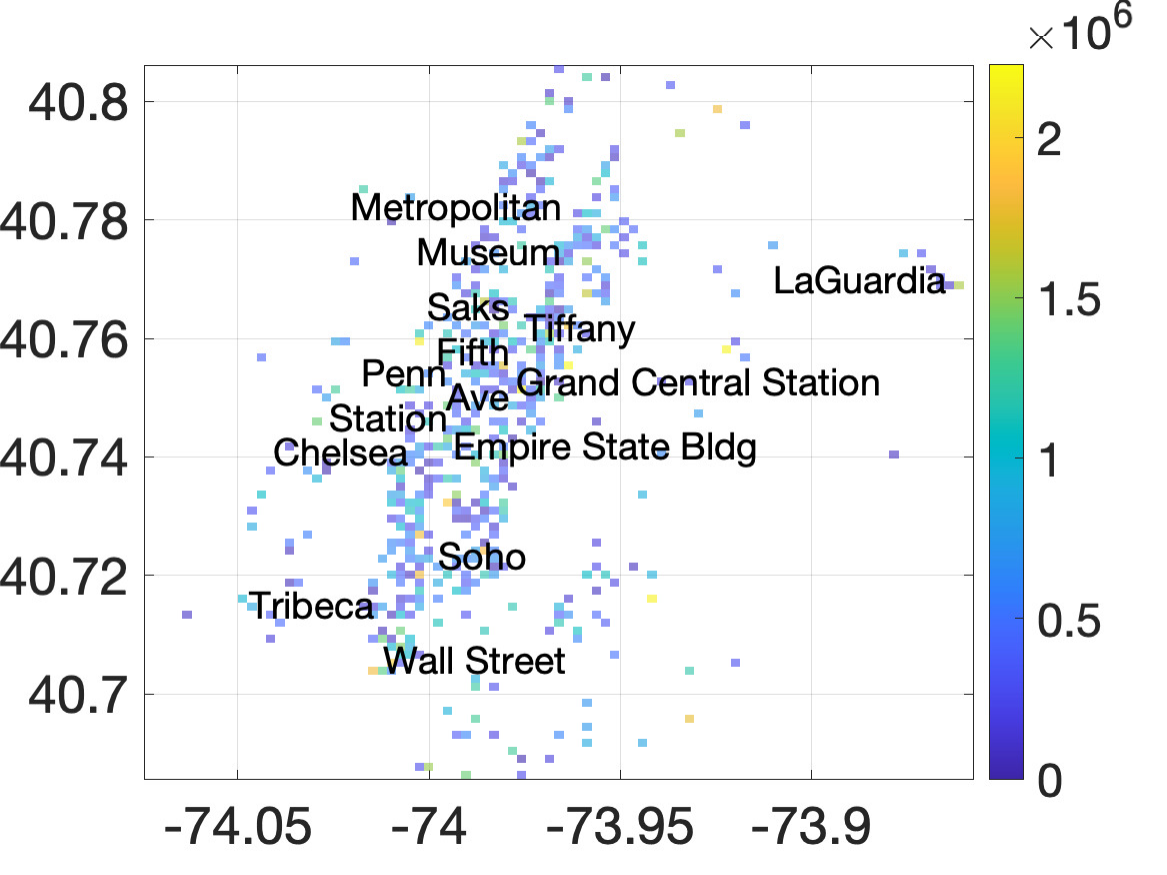}
&\includegraphics[width= .45\columnwidth]{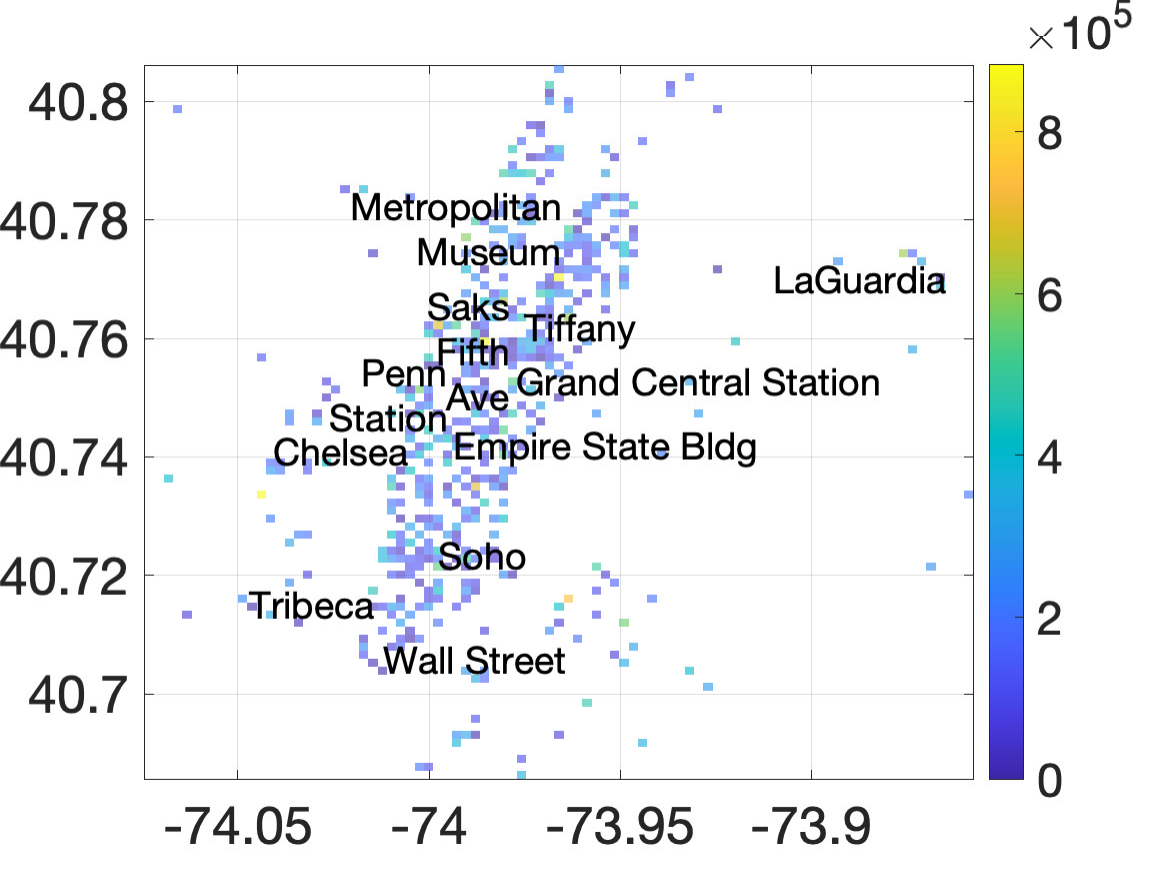}\\
     {\small  (a)    classic Gauss.}&
      {\small  b-eps-converted-to.pdfd)     Analytic   Gauss.}\\
\includegraphics[width= .45\columnwidth]{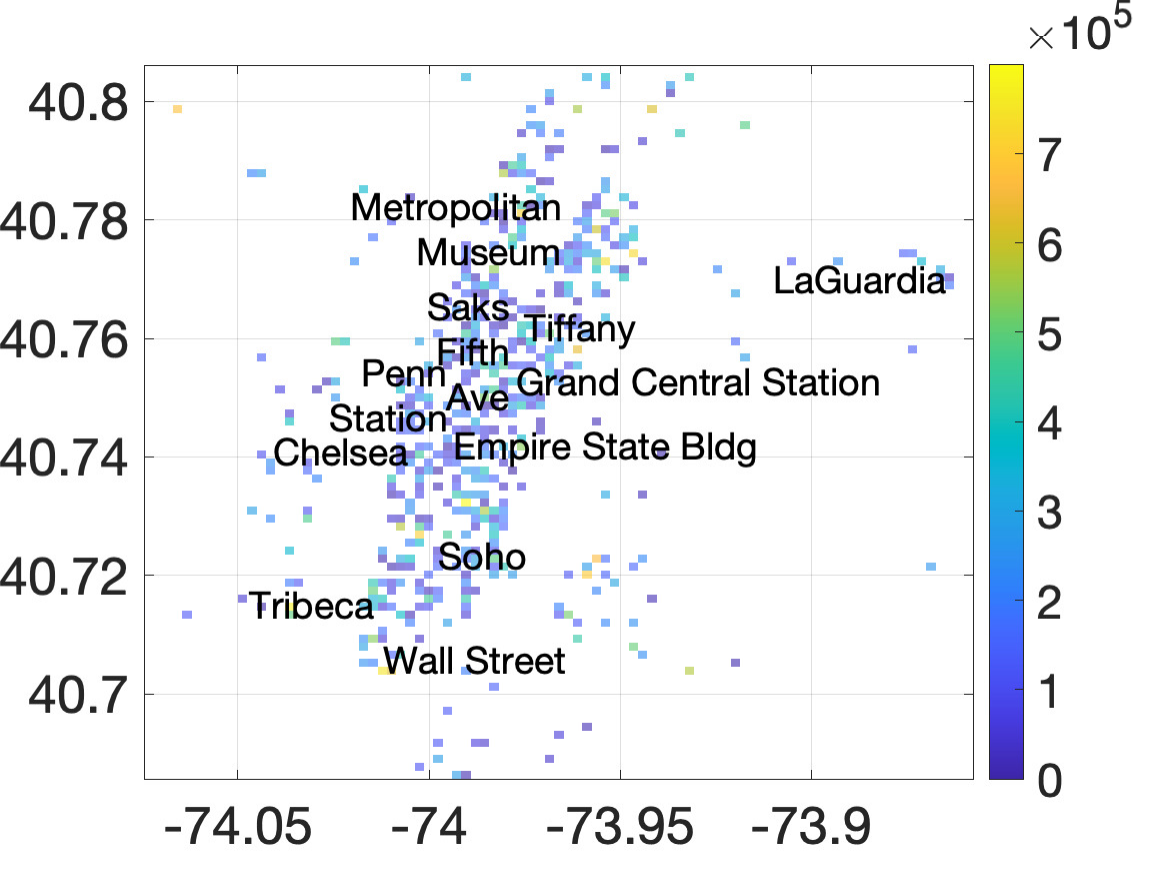}
&   \includegraphics[width=.45\columnwidth]{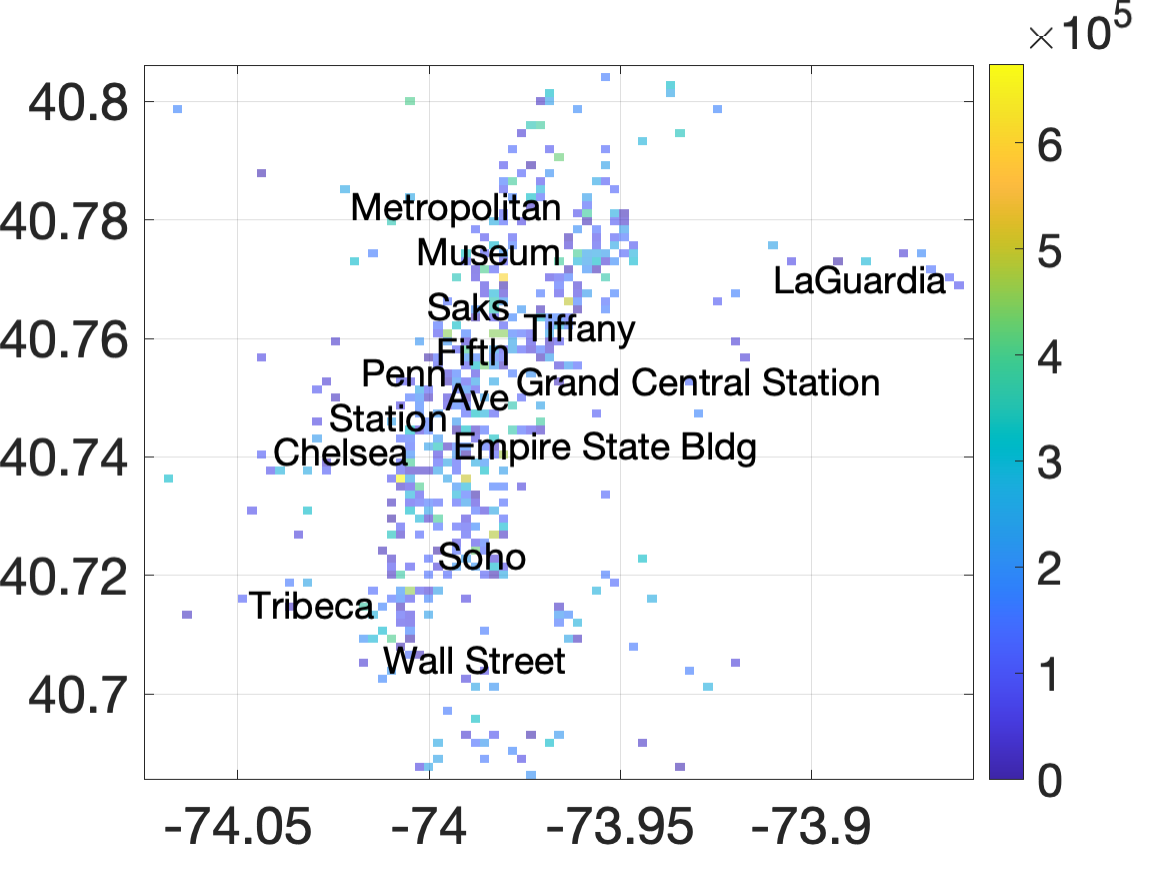}\\
          	{\small  (c)    MVG}&
    {\small  (d)   MGM}\\
\includegraphics[width= .45\columnwidth]{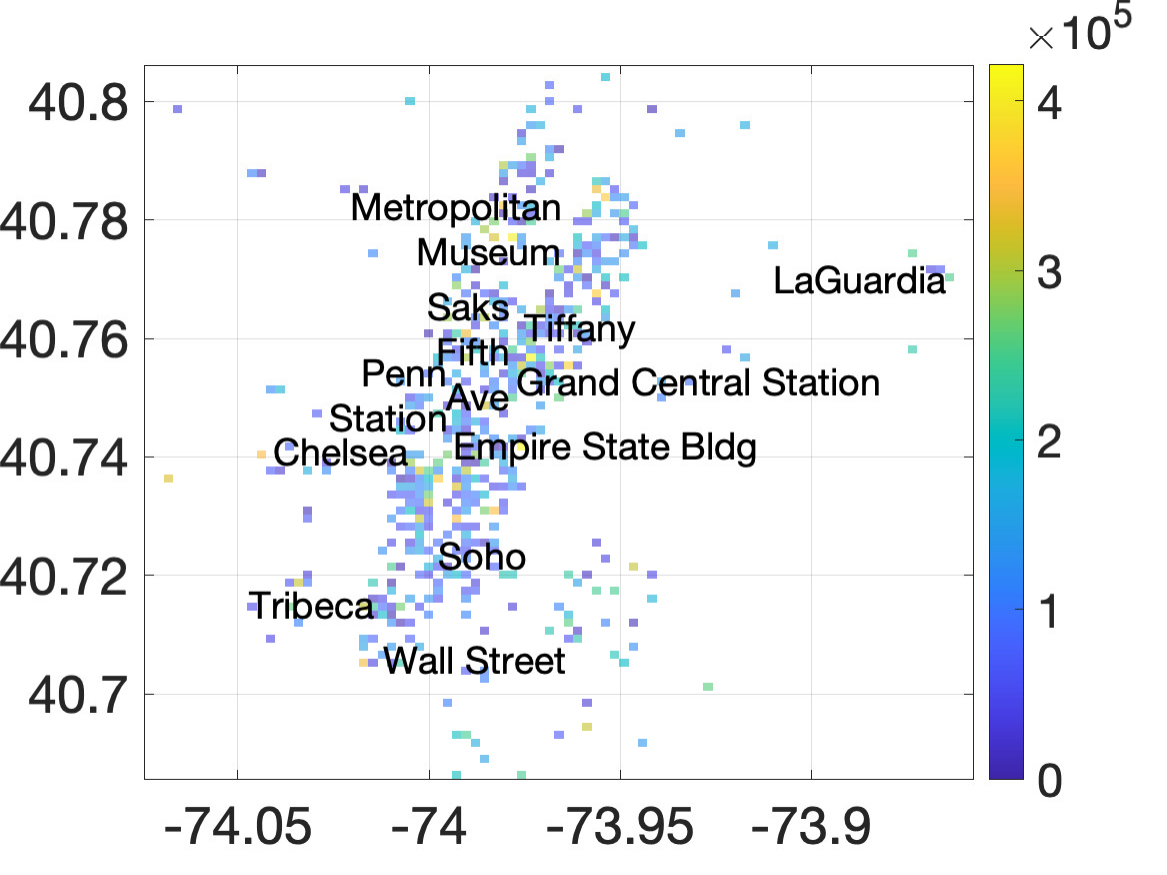}
&\includegraphics[width= .45\columnwidth]{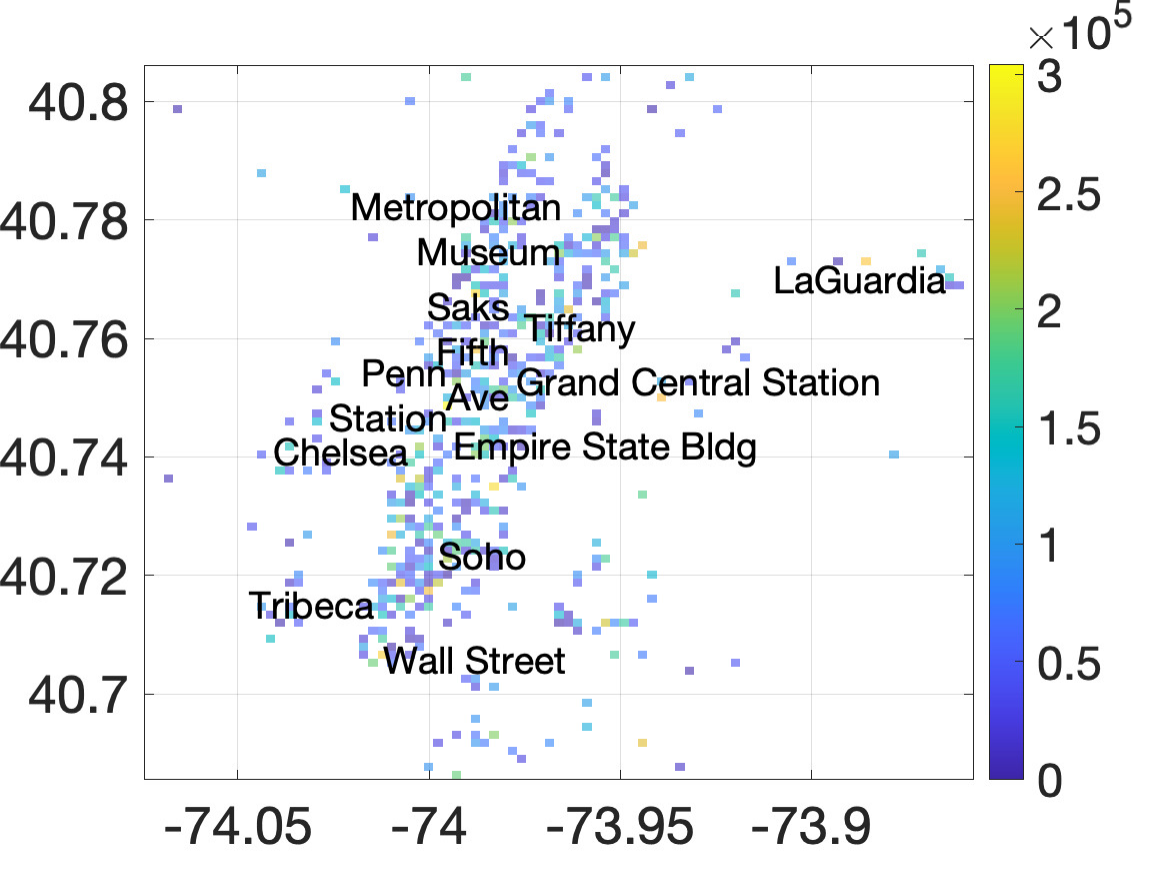}\\
       {\small  (e)     DAWA}&{\small  (f)     $H_b$}
    \end{tabular}
       \end{center}
       \vspace{-4mm}
  \caption{\label{fig:nyc-small-eps}Visualization of   query count of Uber pickups.    
  (a)-(f) are differentially private 2D counts obtained by  the    classic Gaussian,   analytic Gaussian,     MVG,   MGM,    DAWA, and $H_b$ mechanisms, respectively when $\epsilon=10^{-5}$ and  $\delta = 10^{-7}$.}
\end{figure}

\begin{figure}[htb]
 \begin{center}
     \includegraphics[width=  0.7 \columnwidth]
     {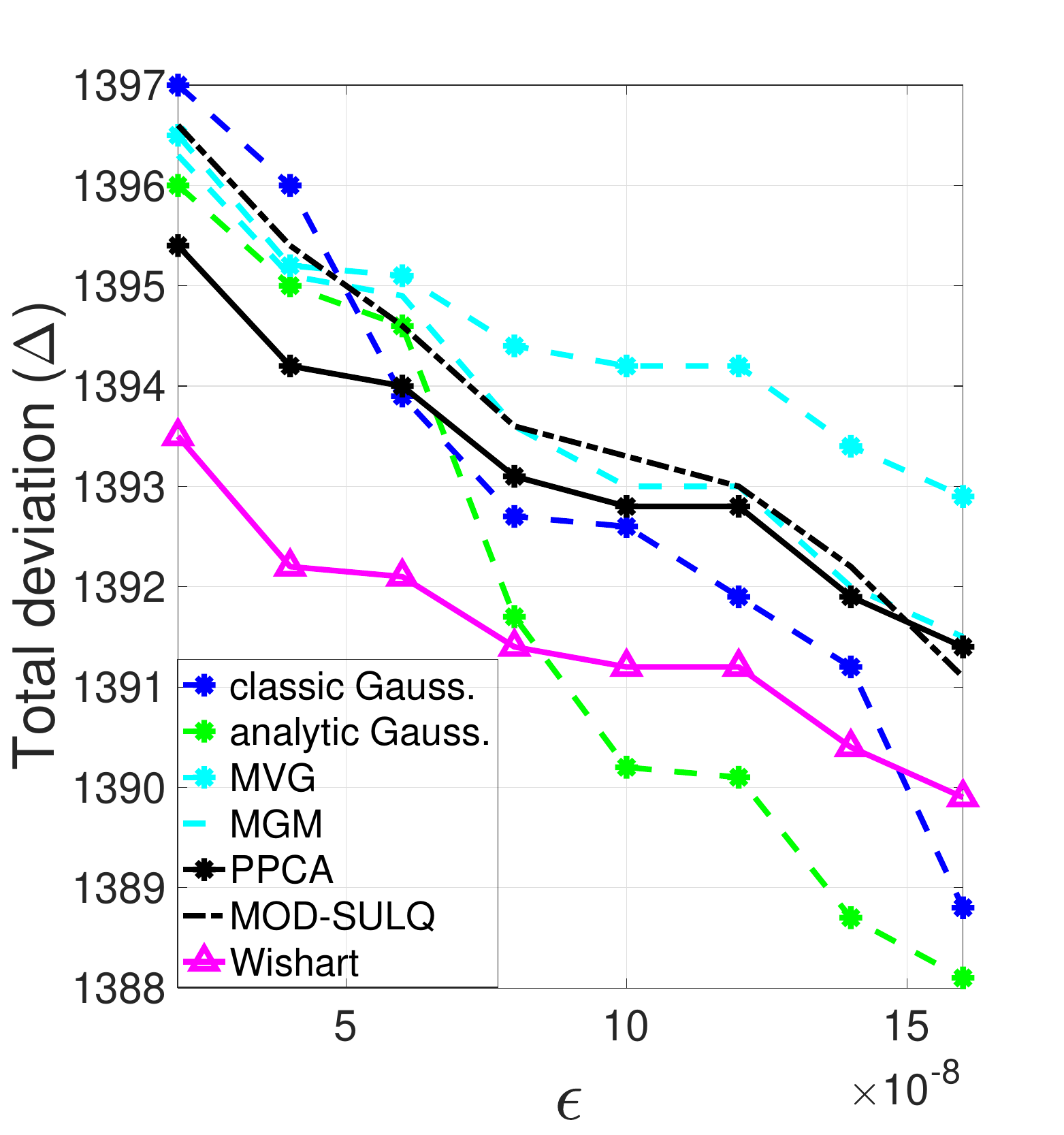}\\
       \end{center}
  \caption{\label{fig:pca-small-eps-others}The classification accuracy measured on the projected testing dataset when $\epsilon$ varies from $0.2\times 10^{-7}$ to $1.6\times 10^{-7}$.}
\end{figure}

\end{document}